\newcommand\xqed[1]{%
  \leavevmode\unskip\penalty9999 \hbox{}\nobreak\hfill
  \quad\hbox{#1}}
\newcommand\closeremark{\xqed{$\triangle$}}
\DeclarePairedDelimiter\floor{\lfloor}{\rfloor}
\newcommand{\myapprox}{{\raise.17ex\hbox{$\scriptstyle\sim$}}}
\newcommand\set[1]{\left\{#1\right\}}
\newcommand\card[1]{\left\lvert#1\right\rvert}
\newcommand{\icol}[1]{
  \left[\begin{matrix}#1\end{matrix}\right]%
}
\newcommand\mydots{\hbox to 1em{.\hss.\hss.}}
\newcommand*\diff{\mathop{}\!\mathrm{d}}
\newcommand{\E}{\mathbb{E}}
\g@addto@macro\normalsize{%
  \setlength\abovedisplayskip{5pt}
  \setlength\belowdisplayskip{5pt}
  \setlength\abovedisplayshortskip{5pt}
  \setlength\belowdisplayshortskip{5pt}
}
\def\thm@space@setup{%
  \thm@preskip=0.15cm plus 0.05cm minus 0.05cm
  \thm@postskip=0.05cm plus 0.05cm minus 0.05cm
}
\newcommand{\Prob}[1]{\Pr\left\{#1\right\}}
\newcommand{\Exp}[1]{\mathbb{E}\left[#1\right]}
\newcommand{\I}[1]{\mathbbm{1}\left(#1\right)}
\theoremstyle{plain}
\newtheorem{theorem}{Theorem}
\newtheorem{lemma}{Lemma}
\newtheorem{corollary}{Corollary}
\theoremstyle{definition}
\newtheorem{definition}{Definition}
\theoremstyle{remark}
\newtheorem{remark}{Remark}
\newcounter{longaligned}
\begin{document}
\title{Controlling Data Access Load in Distributed Systems}
\author{%
        Mehmet~F.~Akta\c{s},
        Emina~Soljanin,~\IEEEmembership{Fellow,~IEEE}
        
\thanks{M. F. Akta\c{s} and E. Soljanin are with the Department of Electrical and Computer Engineering, Rutgers University, New Brunswick,
NJ, 08901 USA. (email: \{mehmet.aktas, emina.soljanin\}@rutgers.edu)}
}

\maketitle

\begin{abstract}
  Distributed systems store data objects redundantly to balance the data access load over multiple nodes. Load balancing performance depends mainly on 1) the level of storage redundancy and 2) the assignment of data objects to storage nodes.
  We analyze the performance implications of these design choices by considering four practical storage schemes that we refer to as \emph{clustering}, \emph{cyclic}, \emph{block} and \emph{random} design. We formulate the problem of load balancing as maintaining the load on any node below a given threshold.
  Regarding the level of redundancy, we find that the desired load balance can be achieved in a system of $n$ nodes only if the replication factor $d = \Omega(\log(n)^{1/3})$, which is a necessary condition for any storage design. For clustering and cyclic designs, $d = \Omega(\log(n))$ is necessary and sufficient. For block and random designs, $d = \Omega(\log(n))$ is sufficient but unnecessary. 
  Whether $d = \Omega(\log(n)^{1/3})$ is sufficient remains open. 
  The assignment of objects to nodes essentially determines which objects share the access capacity on each node. We refer to the number of nodes jointly shared by a set of objects as the \emph{overlap} between those objects.
  We find that many consistently slight overlaps between the objects (block, random) are better than few but occasionally significant overlaps (clustering, cyclic). However, when the demand is ``skewed beyond a level'' the impact of overlaps becomes the opposite.
  We derive our main results by connecting the load-balancing problem to mathematical constructs that have been used in the literature to study other problems.
  For a class of storage designs containing the clustering and cyclic design, we express load balance in terms of the maximum of moving sums of i.i.d. random variables, which is also known as the \emph{scan statistic}. For random design, we express load balance using the \emph{occupancy metric for random allocation with complexes}.
    
\end{abstract}

\begin{IEEEkeywords}
Distributed systems, Load balancing, Data placement, Redundancy.
\end{IEEEkeywords}

\section{Introduction}
\label{sec:intro}
\noindent
\textbf{Motivation:}\space
Data access times are the main bottleneck to the performance of computing systems \cite{ChallengesInBuildingLargeScaleInformationRetrievalSystems:Dean09}.
In modern, large-scale cloud systems, data access times greatly suffer when storage nodes exhibit poor or variable performance \cite{Dremel:MelnikGL10}. Many factors cause poor performance, but primarily, it comes from resource sharing across multiple workloads. The resulting contention at the system resources creates overloaded storage nodes \cite{TailAtScale:DeanB13,StragglerRootCauseAnalysisInDatacenters:OuyangGY16}.
Therefore, distributed systems must be able to limit and control data access load at the storage nodes.

Offered load must be balanced across the storage nodes as evenly as possible.
Modern storage systems, such as HDFS \cite{HDFS:ShvachkoKR10}, Cassandra \cite{Cassandra:LakshmanM10}, and Redis \cite{Redis}, replicate data objects across multiple nodes to enable multiple \emph{service choices}. This storage redundancy allows splitting the offered load, which we refer to as the \emph{object demands}, across multiple nodes (service choices).
Storing each object at every node achieves the best support for load assignment, but it is costly and thus applied only when the system is sufficiently small.
Replicating objects with adequate redundancy requires knowledge of fixed object demands, but in practice, object demands are unknown and fluctuate over time. The design of storage redundancy is the first and arguably the most critical step in achieving robust load balancing in the presence of skews and changes in object popularities \cite{ChallengesInBuildingLargeScaleInformationRetrievalSystems:Dean09, Scarlett:AnanthanarayananAK11}.

Large-scale systems strive to balance the offered load over the nodes by using as minimal storage redundancy as possible.
Load balancing performance is determined by (1) the level of redundancy for each data object and (2) the assignment of objects to nodes. A higher level of storage redundancy implies better load balancing, but it also incurs higher costs. The impact of object-to-node assignment is more subtle. Depending on the object demands, the impact of object-to-node assignment on load balancing can be significant or negligible.
In this paper, we characterize both factors' impact on load-balancing performance.

\vspace{1ex}
\noindent
\textbf{Prior and Related work}:\space
Balancing the offered load has been studied in two critical settings. 
In the first, which we call the \emph{dynamic setting}, load balancing has been studied in scheduling tasks to compute nodes. Each node is assumed to serve tasks through a first-in-first-out queue. Tasks are balanced by querying the queue lengths at a subset of the nodes and assigning the task to the least loaded one.
Ideal load balance is achieved by querying all the nodes for each task arrival, which is impractical in large-scale systems.
For this reason, much research has gone into developing techniques that query a limited number of nodes for task assignment. These techniques are based on the well-known \emph{power of $d$ choices} paradigm.
The research work following this direction has produced a plethora of asymptotic results on the system performance, often via analysis using the balls-into-bins models \cite{BallsIntoBins:RaabS98, BalancedAllocations:AzarBK99, BalancedAllocations_HeavilyLoadedCase:Berenbrink20}.

The above literature only applies to systems that can direct any demand to any node, e.g., scheduling tasks to compute nodes. 
However, in storage systems, we do not have the flexibility of querying any subset of nodes for scheduling because an arriving data access request can only be served at one of the nodes storing the requested object.
A more appropriate model for storage systems would assume that requests can be offered to a limited collection of subsets of nodes.
A model along these lines has been considered in \cite{BalancedAllocationsOnGraphs:Kenthapadi06} where subsets of nodes are represented as edges in a graph. The paper studied the 
power of two choices paradigm on this restricted model for assigning $n$ balls sequentially to $n$ bins.
This model with a graph was extended to one with a hypergraph in \cite{BalancedAllocationsOnHypergraphs:Godfrey08}, which allows for studying the general power of the $d$ choices paradigm.
Storage allocations in this paper are special cases of the balanced allocations on hypergraphs considered in \cite{BalancedAllocationsOnHypergraphs:Godfrey08}. The results presented in \cite{BalancedAllocationsOnHypergraphs:Godfrey08} provide limited insight into practical storage schemes, for example, for the impact of storage allocations on the load balance. These results are also shown only in the lightly loaded case when the cumulative load offered on the system scales as the order of the number of nodes.
This paper examines essential storage properties not addressed in \cite{BalancedAllocationsOnHypergraphs:Godfrey08} without restrictions to the lightly loaded case. These properties include the number of different objects stored per node and object overlaps between the storage nodes.

In the dynamic setting, data access requests arrive sequentially and get assigned to nodes upon arrival. Hence, the load offered for the objects is not known a priori.
In the second setting, which we name the \emph{static setting}, the goal is to answer a different question: can the system serve the offered load if the object demands are known from the start? The load balance achieved in the dynamic setting is always achievable in the static setting thanks to knowing object demands a priori.
Data access performance in the static setting, therefore, represents the best-case performance of the system.

There are two distinct approaches in the static setting.
In the first approach, the goal is to serve the load as long as any $m$ objects chosen with replacement out of all stored objects are requested simultaneously, which leads to the design of \emph{batch codes} \cite{BatchCodesAndTheirApps:IshaiKO04}.
The storage schemes we consider here fall into the class of combinatorial batch codes \cite{CombinatorialBatchCodes:StinsonWP09}.
In the second approach, the offered load model is extended. The goal is to find the system's \emph{service capacity region}, which is defined as a set of all demand vectors that a system can serve with a given storage scheme \cite{AllertonServiceCapacity:AktasJS17, ServiceRateRegion:AktasJK21}.
Our treatment of load balancing falls into this second approach.

Capacity region for various storage schemes has been derived in \cite{AllertonServiceCapacity:AktasJS17} and \cite{ITWServiceCapacity:AndersonJJ18}. This line of work, however, considers only the case where each node stores a single object.
Most importantly, even though capacity region gives a sense of the system's overall capacity to deal with changes in the offered load, it does not capture the probability of stable service when the expected load is random.
This paper addresses this gap by analyzing the system's load balancing performance with a stochastic offered load model and a probabilistic performance metric.
We focus our analysis on four storage designs with object replication commonly considered or deployed in practical systems to achieve high data availability.

Similar to this work, a stochastic formulation has been proposed recently in \cite{LoadBalancing:AktasFS21} to analyze load-balancing performance in systems with multiple objects per node. Using this formulation, the authors have drawn various conclusions on the design of storage schemes. However, the results presented in \cite{LoadBalancing:AktasFS21} are all asymptotic as the scale of the system goes to infinity. These results provide insight into the performance improvement achieved by increasing the replication factor; they do not provide insight into the performance impact of different storage allocations. Also, the offered load model assumed in \cite{LoadBalancing:AktasFS21} captures only a particular set of load characteristics and is not extensible. 
We propose an offered load model that permits modifications for capturing different load characteristics.
Furthermore, our performance analysis is on the finite case and sheds light on the replication factor and the performance impact of different storage designs.

\vspace{1ex}
\noindent
\textbf{Contributions and Organization}:\space
We propose a stochastic offered load model that captures the fluctuations and skews in object demands.
Our load model can be modified to capture different demand characteristics by changing the \emph{object demand distribution}.
We use a probabilistic metric to measure the system's ability to limit the load on the maximally loaded storage node below a threshold, which we refer to as the \emph{maximal access load} requirement. Specifically, we define the performance metric as the probability $\mathcal{P}$ of meeting the maximal access load requirement for object demands randomly sampled from the offered load model.
We then present a mathematical analysis of $\mathcal{P}$ for systems that store each object with the same replication factor, hence the name \emph{regular} redundancy. There are many ways to allocate object replicas to storage nodes. We consider four different storage allocation strategies that are used in practical systems. We refer to them as \emph{clustering}, \emph{cyclic}, \emph{block}, and \emph{random} design.

We refer to the nodes storing an object as its \emph{service choices}. We refer to the intersection of service choices for a set of objects as their \emph{service choice overlaps}.
We show that the cumulative service choice overlaps remain fixed within the class of storage allocations that assign the same number of object copies per node (\emph{balanced} redundancy). Thus, within the class of regular and balanced storage allocations, the specific allocation design determines the distribution of service choice overlaps across the subsets of objects.
The service choice overlaps are highly skewed by clustering design and evenly distributed by block design. Cyclic and random design lie between the two.

We show that analyzing $\mathcal{P}$ exactly requires taking the demand assignment for all objects into account at once, and explain why this is intractable for an arbitrary storage allocation. We show that it is still possible to derive an upper bound on $\mathcal{P}$ for any given storage allocation, which is tight for the allocation with the best performance.
This, however, is not accurate enough to yield insight on the performance impact of different allocation designs.
We analyze $\mathcal{P}$ for different designs using ideas and results developed for different problems in applied probability.
The structure of clustering design allows deriving exact expressions for $\mathcal{P}$. The cyclic or random design does not permit this. 
We show that $\mathcal{P}$ for cyclic design is connected to the distribution of \emph{scan statistic} \cite{ScanStats:GlazNW01}. We also show that $\mathcal{P}$ for random design is connected to the occupancy distribution for \emph{random allocation with simplexes} \cite{RandomAllocations:KolchinSC78}. We use these connections to find upper and lower bounds on $\mathcal{P}$ for cyclic, block and random design.

The bounds we find on $\mathcal{P}$ show how the system performance depends on the important system parameters such as the number of objects $k$ stored in the system, the number of storage nodes $n$, and the replication factor $d$.
These bounds also allow us to derive the behavior of $\mathcal{P}$ as the system scale $n$ gets larger.
First, we show that $\mathcal{P} \to 0$ as $n \to \infty$ unless replication factor $d = \Omega(\log(n)^{1/3})$, regardless of which storage design is used.
We then show conditions for $\mathcal{P} \to 1$ as $n \to \infty$ as follows.
For clustering and cyclic design, $d = \Omega(\log(n))$ is necessary and sufficient.
For block and random design, $d = \Omega(\log(n))$ is sufficient but not necessary. That is, $\mathcal{P}$ can possibly converge to $1$ even when $d = \Omega(\log(n)^{1/3})$. However, we do not have a proof for this and leave it as a future work.
Note that, Table~\ref{table:summary} below summarizes the main findings presented in this paper.

This paper is organized as follows:
Sec.~\ref{sec:intro} gives an overview of the literature on analyzing data access performance for storage systems. We also discuss the connections between our approach and the prior work.
Sec.~\ref{sec:preliminaries} presents our storage and offered load model. 
In Sec.~\ref{sec:perf_metric}, we define the metrics we use to evaluate system performance.
In Sec.~\ref{sec:1_choice}, we consider storage allocations with no redundancy and evaluate the performance impact of the number of objects stored per node.
In Sec.~\ref{sec:d_choice}, we consider four different practical storage allocation strategies with object replication and analyze their performance.

\begin{table*}
\caption{Summary of our results on the load balancing performance for four different object-to-node assignment strategies.}
\vspace{1ex}
\centering
\begin{tabulary}{1.0\textwidth}{LL|CCC|p{.1\textwidth}}
\hline
\multicolumn{2}{|l|}{Object-to-node assignment strategy
}                                                                                                            & \multicolumn{1}{c|}{$\qquad\;\;$Clustering$\qquad\;\;$}                                                                & \multicolumn{1}{c|}{$\qquad\;\;$Cyclic$\qquad\;\;$} & \multicolumn{1}{c|}{$\qquad\;\;$Random$\qquad\;\;$} & \multicolumn{1}{c|}{Block}                                                                                  \\ \hline
\multicolumn{2}{|l|}{Distribution of service choice overlaps}                                                                                           & \multicolumn{4}{l|}{$\quad$ Heavily skewed $\quad\qquad\qquad\qquad\qquad\qquad \longrightarrow \quad\qquad\qquad\qquad\qquad\qquad$ Evenly distributed} \\ \hline
\multicolumn{2}{|l|}{\makecell[l]{Bounds on replication factor $d$ \\ to meet the maximal load requirement} }                            & \multicolumn{2}{c|}{\begin{tabular}[c]{@{}c@{}}Necessary and sufficient:\\ $d = \Omega(\log(n))$\end{tabular}}               & \multicolumn{2}{c|}{\begin{tabular}[c]{@{}c@{}}Necessary: $d = \Omega(\log(n)^{1/3})$\\ Sufficient: $d = \Omega(\log(n))$\end{tabular}}   \\ \hline
\multicolumn{1}{|c|}{\multirow{2}{*}{\makecell[l]{Load balancing \\ performance comparison}}} & \makecell[l]{when object demands \\ are heavily skewed} & \multicolumn{4}{c|}{\begin{tabular}[c]{@{}c@{}}$\mathcal{P}_{\mathrm{clustering}} > \mathcal{P}_{\mathrm{cyclic}} > \mathcal{P}_{\mathrm{random}} > \mathcal{P}_{\mathrm{block}}$\\ (skewed overlaps is better)\end{tabular}}                                            \\ \cline{2-6} 
\multicolumn{1}{|c|}{}                             & \multicolumn{1}{c|}{otherwise}                                                     & \multicolumn{4}{c|}{\begin{tabular}[c]{@{}c@{}}$\mathcal{P}_{\mathrm{clustering}} < \mathcal{P}_{\mathrm{cyclic}} < \mathcal{P}_{\mathrm{random}} < \mathcal{P}_{\mathrm{block}}$\\ (evenly distributed overlaps is better)\end{tabular}}                                              \\ \hline
\multicolumn{6}{c}{
\makecell[l]{
The system consists of $n$ nodes and stores $d$ copies for each data object -- see Sec.~\ref{subsec:sys_model} for the details on system model. Load balancing performance is \\ 
measured with metric $\mathcal{P}$ with object-to-node assignment indicated with subscript -- see Sec.~\ref{subsec:data_access_performance} for the details on performance metric.}}
\end{tabulary}
\label{table:summary}
\end{table*}

\section{System Model}
\label{sec:preliminaries}
We study data access in the static setting with continuous service and offered load model.
Our model reveals a connection between the data access problem, convex polytopes, \emph{scan statistics} \cite{ScanStats:GlazNW01} and \emph{random allocations with complexes} \cite{RandomAllocations:KolchinSC78}. 

\subsection{Storage and Access Model}
\label{subsec:sys_model}
We consider a system of $n$ storage nodes hosting $k$ data objects $o_1, \dots, o_k$, and their replicas.
Nodes have identical access \emph{capacity}, defined as the maximum number of bytes a node can stream per second.
An \emph{object} denotes the smallest data unit as a fixed-length string of bits.

We refer to the \emph{offered load} for object $o_i$ as its \emph{demand} $\rho_i$.
Demand for an object represents the average number of bytes streamed from the system per second to access the object, divided by a single node's access capacity.
We refer to a node that hosts an object as a \emph{service choice} for the object. Replicating an object $o_i$ over multiple nodes creates a set of multiple service choices $C_i$.


Demand for an object can be split arbitrarily across its service choices. 
The \emph{load assigned on a node} equals the sum of the offered load portions exerted on it by the objects stored on it.
A node is said to be \emph{stable} if its assigned load is less than $1$. A system is said to be stable if every node is stable.
We define the \emph{maximal load} as the maximum load across all the nodes.
We assume that, if feasible, the object demands $\rho_i$ are split across their service choices so that the maximal load is below a given threshold $m \in (0, 1]$. It may be unfeasible to keep the maximal load below a given $m$ depending on the object demands and the storage allocation.


As we discuss in more detail in Sec.~\ref{subsec:cap_region}, whether the system can achieve the desired maximal load or not can be determined by (1) solving a linear program or (2) checking a set of conditions on the union of objects' service choices. We refer to the union of service choices for a set of objects as the \emph{span} of these objects. Notice that the span of a set of objects equals the total capacity available to serve those objects jointly.
\begin{definition}
  The \underline{service choice span} for the set of objects $O = \{o_i ~\mid~ i \in I \subset \{1, \dots, k\}\}$ is given as
  \[ 
      \mathrm{span}(O) = \card{\bigcup_{i \in I} C_i}.
  \]
 Note that the span of a single object $o_i$ is given by $C_i$.
\label{def:service_choice_span}
\end{definition}

A \emph{storage allocation} determines how objects are assigned to the storage nodes.
This paper focuses on \emph{regular balanced $d$-choice} storage allocations.
\begin{definition}
  A \underline{regular balanced $d$-choice allocation} stores each object with $d$ service choices and distributes object copies across the nodes so that each node stores the same number of different objects.
\label{def:reg_balanced_dchoice_alloc}
\end{definition}

In the rest of the paper, unless otherwise noted, the allocation itself will refer to a regular balanced allocation.
There are many ways to design a $d$-choice allocation. We detail the allocations we consider in Sec.~\ref{sec:d_choice}.

\subsection{Offered Load Model}
\label{subsec:offered_load}

In practical systems, object demands change depending on many factors, such as time of the day, cumulative load offered on the system, and object popularity.
We model the object demands $\rho_i$ as i.i.d. non-negative random variables. We capture different offered load characteristics using different demand distributions.
Some of the demand distributions we consider are 
(1) \emph{Exponential} with rate $\mu$, i.e., $\rho_i \sim \mathrm{Exp}(\mu)$.
(2) \emph{Pareto} with minimum value $\lambda$ and tail index $\alpha$, i.e., $\rho_i \sim \mathrm{Pareto}(\lambda, \alpha)$.
(3) \emph{Scaled Bernoulli} with constant scale $\lambda$ and probability of success $p$, i.e., $\rho_i \sim \lambda \times \mathrm{Bernoulli}(p)$.

\subsection{Note on the Proofs and the Notation}
We denote the distribution of the sum of $u$ (i.i.d.) object demands $\Prob{\rho_1 + \dots + \rho_u \leq x}$ with $F_u(x)$.
$F(x)$ denotes the distribution of a single object demand.
We use $\phi_X(t)$ to denote the moment generating function $\Exp{\exp(tX)}$ for a random variable $X$.

We place the proofs in the Appendix.
Throughout the paper, $\log$ refers to the natural logarithm.
We use the ``$\to$'' notation to denote the convergence of a sequence.
Let $\{f_n(x); \;n \geq 1\}$ be a sequence of functions $f_n : D \to \mathbb{R}$. Let $f(x)$ be another function $f : D \to \mathbb{R}$.
If $\lim_{n \to \infty} f_n(x) = f(x)$ at every $x \in D$, we will denote this as $f_n(x) \to f(x)$.

\section{Storage Service Performance}
\label{sec:perf_metric}

Our performance metric is the probability of the system being able to serve the offered data access load while meeting the maximum load requirement. We first introduce the ``service capacity region'' for a given storage system and then use it to define our performance metric.

\subsection{Storage Service Capacity}
\label{subsec:cap_region}
\emph{Capacity region} for a given storage system consists of all the object demand vectors the system can serve. The systems of interest, in this case, store data objects with redundancy.
The concept of service capacity region for storage systems has been introduced in \cite{AllertonServiceCapacity:AktasJS17}, and further studied in \cite{ITWServiceCapacity:AndersonJJ18}.
\begin{definition}
  \underline{Capacity region} for a system with a given storage allocation is the set of all object demand vectors $(\rho_1, \dots, \rho_k)$ that the system can serve while operating under stability.
 \label{def:cap_region}
\end{definition}

We limit the capacity region's presentation to the minimum needed to understand the definitions given in this paper.
We refer the reader to \cite[Sec.~2-C]{LoadBalancing:AktasFS21} for a concise presentation of the formulation and to \cite{ServiceRateRegion:AktasJK21} for full exposure to the subject. The capacity region of a system is the following \emph{convex polytope}:
\begin{equation}
  \mathcal{C} = \left\{\bm{\rho} ~\mid~ \exists \bm{x}; ~\bm{M} \cdot \bm{x} \prec \bm{1}, ~\bm{T} \cdot \bm{x} = \bm{\rho}, ~\bm{x} \succeq \bm{0} \right\}.
\label{eq:cap_region_w_matrix_inequalities}
\end{equation}
where $\bm{M}$ is a binary matrix expressing the storage allocation, $\bm{1}$ is the all ones vector, $\bm{x}$ is a vector of real numbers and $\bm{T}$ is a binary matrix that transforms a given $\bm{x}$ to the corresponding demand vector $\bm{\rho}$.

We now introduce the \emph{modified capacity region} $\mathcal{C}_m$ for a given maximal load $m$. This notion will help us define the metric we use to measure data access performance.
We define $\mathcal{C}_m$ as the set of demand vectors the system can serve while keeping the maximal load below $m$. It is expressed as
\begin{equation}
  \mathcal{C}_m = \left\{\bm{\rho} ~\mid~ \exists \bm{x}; ~\bm{M} \cdot \bm{x} \prec m\bm{1}, ~\bm{T} \cdot \bm{x} = \bm{\rho}, ~\bm{x} \succeq \bm{0} \right\}.
\label{eq:modified_cap_region_w_matrix_inequalities}
\end{equation}

The following subsection presents a more insightful expression of the capacity region. We will extend this expression for the modified capacity region $\mathcal{C}_m$, giving us a more analytically helpful way to calculate our performance metric.

\subsection{Capacity region in terms of service choices}
\label{subsec:cap_region_via_service_choice_spans}

Suppose that for each object $o_i$, the system stores $d_i$ copies distributed across different nodes. In other words, object $i$ has $d_i$ service choices, which we denote with set $C_i$.
Notice that this storage allocation is more general than the regular balanced allocation described in Sec.~\ref{subsec:sys_model}.
For such systems with replicated storage redundancy, we can express the capacity region by designating a lower bound for the service choice spans (see Def.~\ref{def:service_choice_span}) for every subset of objects.
We present the expression in the following Lemma.
\begin{lemma}
  The capacity region of a system that stores $d_i$ copies for each object $o_i$ is given by
  \begin{equation}
    \mathcal{C} = \Bigl\{\bm{\rho} \;\Bigl|\; \sum_{i \in I}{\rho_i} \leq \mathrm{span}(o_i; \;i \in I), ~\forall I \subset \{1, \ldots, k\}\Bigr\}.
  \label{eq:cap_region_w_service_choice_spans}
  \end{equation}
\label{lm:cap_region_for_system_w_replicated_storage}
\end{lemma}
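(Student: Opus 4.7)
The plan is to prove the two set inclusions separately by setting up a bipartite flow network and invoking max-flow min-cut. The LP in \eqref{eq:cap_region_w_matrix_inequalities} defining $\mathcal{C}$ is naturally the feasibility LP of a transportation problem, so the stated span inequalities should fall out as exactly the cut-capacity conditions for that network.

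For the forward (necessity) direction, I would fix $\bm{\rho} \in \mathcal{C}$ and an arbitrary $I \subseteq \{1, \ldots, k\}$. Every object $o_i$ with $i \in I$ is, by construction of $\bm{M}$ and $\bm{T}$ in \eqref{eq:cap_region_w_matrix_inequalities}, served entirely by nodes in $C_i$, so the aggregate load $\sum_{i \in I} \rho_i$ is served entirely by the $\mathrm{span}(o_i; i \in I)$ nodes of $\bigcup_{i \in I} C_i$. Summing the per-node stability constraints across those nodes then gives $\sum_{i \in I} \rho_i \le \mathrm{span}(o_i; i \in I)$.

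For the reverse (sufficiency) direction, I would build a network $G$ with source $s$, sink $t$, one vertex per object $o_i$, and one vertex per storage node $v_j$, using arcs $s \to o_i$ of capacity $\rho_i$, arcs $o_i \to v_j$ of capacity $+\infty$ whenever $v_j \in C_i$, and arcs $v_j \to t$ of capacity $1$. A feasible load-splitting vector $\bm{x}$ exists iff $G$ admits an $s$--$t$ flow of value $\sum_i \rho_i$, which by max-flow min-cut is equivalent to every $s$--$t$ cut having capacity at least $\sum_i \rho_i$. Because the object-to-node arcs are uncuttable, finite cuts are parametrized by the set $I \subseteq \{1, \ldots, k\}$ of objects kept on the source side; the minimizing such cut places exactly the nodes in $\bigcup_{i \in I} C_i$ on the source side and has capacity $\sum_{i \notin I} \rho_i + \mathrm{span}(o_i; i \in I)$. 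Demanding this quantity to be at least $\sum_i \rho_i$ for every $I$ collapses precisely to the span inequalities in \eqref{eq:cap_region_w_service_choice_spans}.

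The main subtlety to address is reconciling the strict stability condition $\bm{M} \bm{x} \prec \bm{1}$ in \eqref{eq:cap_region_w_matrix_inequalities} with the non-strict span inequality in the claim: the max-flow argument naturally produces $\le$, so the lemma must be read as characterizing the closure of $\mathcal{C}$ (equivalently, treating $\prec$ and $\preceq$ as interchangeable on the boundary, which is harmless in later sections since the demand distributions are continuous). Beyond this closure check, the remaining work is routine bookkeeping: verifying that $\bm{M}$ and $\bm{T}$ encode exactly the arc-vertex incidences of $G$, and that sweeping $I$ over all subsets of $\{1, \ldots, k\}$ enumerates all finite minimum cuts of $G$.
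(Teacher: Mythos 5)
Your proof is correct, but it takes a different route from the paper's. The necessity direction is the same in both: restrict attention to the objects in $I$, observe their demand can only be served by nodes in $\bigcup_{i \in I} C_i$, and sum the per-node capacity constraints. For sufficiency, however, you reduce feasibility of the load-splitting LP to a single application of max-flow min-cut on the bipartite network (source--objects--nodes--sink), identify the finite cuts with subsets $I$ of objects, and read off the span inequalities as exactly the cut conditions -- essentially a Gale--Hoffman / Hall-type supply-demand argument. The paper instead proceeds by incrementing the number of objects with nonzero demand: for each new object it exhibits the extreme points of the region where the joint demand is servable (e.g.\ $\rho_1 = \lvert C_1 \rvert$, $\rho_2 = \lvert C_2 \setminus C_1 \rvert$ and its symmetric counterpart), and invokes convexity of the capacity polytope to cover all convex combinations, continuing inductively up to $k$ objects. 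Your argument is arguably tighter: the paper's inductive step is stated informally (``we can continue incrementing\dots''), whereas max-flow min-cut delivers the full statement in one stroke, and your flow-decomposition of a maximum flow explicitly produces the splitting vector $\bm{x}$. You also flag the mismatch between the strict stability constraint $\bm{M}\bm{x} \prec \bm{1}$ in \eqref{eq:cap_region_w_matrix_inequalities} and the non-strict inequalities in \eqref{eq:cap_region_w_service_choice_spans}; the paper silently works with the closure (its extreme points sit on the boundary), so your closure remark is a point of care rather than a defect, and it is indeed harmless for the probabilistic statements later since the demand distributions assign no mass to the boundary.
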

\begin{proof}
  See Appendix~\ref{subsec:proof_lm_cap_region_for_replicated_storage}.
\end{proof}

We can extend \eqref{eq:cap_region_w_service_choice_spans} to express the modified capacity region $\mathcal{C}_m$ (defined in \eqref{eq:modified_cap_region_w_matrix_inequalities} as) that contains the set of demand vectors under which the system can meet the maximal load requirement
\begin{equation}
  \mathcal{C}_m = \Bigl\{\bm{\rho} \;\Bigl|\; \sum_{i \in I}{\rho_i} \leq m \cdot \mathrm{span}(o_i; \;i \in I), ~\forall I \subset \{1, ., k\}\Bigr\}.
\label{eq:modified_cap_region_w_service_choice_spans}
\end{equation}

We rely on \eqref{eq:modified_cap_region_w_service_choice_spans} Sec.~\ref{sec:d_choice} where we analyze $\mathcal{P}$ for systems with different storage allocations.

\subsection{Data Access Performance}
\label{subsec:data_access_performance}
We measure data access performance with the system's \emph{robustness} against the changes in object demands. A robust storage system should be able to maintain the maximal access load below a desired level $m$ when the object demands change.
We quantify system robustness as the probability that the system can serve the offered demand vector while keeping the maximal load below $m$.
In our offered load model (discussed in Sec.~\ref{subsec:offered_load}), the set of demand vectors that can be offered on the system and their likelihood is determined by the demand distribution $F(x)$.

\begin{definition}
  For a system with a given storage allocation, let $\mathcal{C}_m$ denote the modified capacity region with maximal load $m$ as defined in \eqref{eq:modified_cap_region_w_matrix_inequalities}.
  \underline{Robustness $\mathcal{P}$} for the system denotes the probability that the system can serve the offered demand vector while keeping the maximum load below $m$
  \begin{equation}
    \mathcal{P} = \Prob{\bm{\rho} \in \mathcal{C}_m}.
  \label{eq:P}
  \end{equation}
\label{def:P}
\end{definition}

The expression in \eqref{eq:P_integration} is a useful geometric interpretation for $\mathcal{P}$.
It implies that once the modified capacity region $\mathcal{C}_m$ of a system is determined, $\mathcal{P}$ can be evaluated by integrating the  joint density of the object demands over $\mathcal{C}_m$ as
\begin{equation}
  \mathcal{P} = \int_{\bm{c} \in \mathcal{C}_m} \Prob{\bm{\rho} = \bm{c}} \diff \bm{c}.
\label{eq:P_integration}
\end{equation}
Recall from Sec.~\ref{subsec:offered_load} that the object demands $\rho_i$ are sampled independently from the same distribution with 
density function $f(x)$. Thus we can write the integrand as
\begin{equation}
  \Prob{\bm{\rho} = \bm{c}} = \prod_{i = 1}^{k} f(c_i).
\label{eq:P_integrand}
\end{equation}

As the service capacity region is a convex polytope, \eqref{eq:P_integration} implies that $\mathcal{P}$ is non-increasing in demand distribution $F(x)$.
\begin{corollary}
  For a system with any storage allocation, let the system robustness be $\mathcal{P}$ and $\mathcal{P}^\prime$ for demand distributions $F(x)$ and $F^\prime(x)$ respectively. If $F^\prime(x) < F(x)$, then $\mathcal{P}^\prime < \mathcal{P}$.
\label{cor:P_larger_demand_leq_P}
\end{corollary}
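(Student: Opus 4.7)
The plan is to combine two ingredients. First, the modified capacity region $\mathcal{C}_m$ is a down-set in the non-negative orthant: decreasing any coordinate of $\bm{\rho}$ can only help satisfy its defining inequalities. Second, when $F'(x) < F(x)$ pointwise, the demand under $F'$ stochastically dominates that under $F$, and the two can be coupled so that $\rho_i' \geq \rho_i$ pointwise. Putting these together immediately gives $\mathcal{P}' \leq \mathcal{P}$, and a positive-probability separation event recovers strictness.

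For the first ingredient I would invoke the characterization \eqref{eq:modified_cap_region_w_service_choice_spans}: $\mathcal{C}_m$ is the intersection of halfspaces $\sum_{i \in I} \rho_i \leq m \cdot \mathrm{span}(o_i;\,i \in I)$ indexed by $I \subseteq \{1, \ldots, k\}$. Each such inequality is preserved under a componentwise decrease of $\bm{\rho}$, so $\bm{\rho}' \in \mathcal{C}_m$ together with $\bm{\rho} \leq \bm{\rho}'$ forces $\bm{\rho} \in \mathcal{C}_m$; equivalently, $\I{\bm{\rho} \in \mathcal{C}_m}$ is non-increasing in each coordinate. For the second ingredient I would use the standard quantile coupling: let $U_1,\ldots,U_k$ be i.i.d.\ uniform on $(0,1)$ and set $\rho_i = F^{-1}(U_i)$, $\rho_i' = (F')^{-1}(U_i)$. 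Because $F'(x) < F(x)$ for every $x$, the generalized inverses satisfy $(F')^{-1}(u) \geq F^{-1}(u)$ for every $u$, so $\rho_i' \geq \rho_i$ almost surely while the marginals retain the required laws. Combining the two ingredients yields $\I{\bm{\rho}' \in \mathcal{C}_m} \leq \I{\bm{\rho} \in \mathcal{C}_m}$ pointwise, and taking expectations gives $\mathcal{P}' \leq \mathcal{P}$.

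The main obstacle is upgrading this to strict inequality. It suffices to exhibit a positive-probability event on which $\bm{\rho} \in \mathcal{C}_m$ but $\bm{\rho}' \notin \mathcal{C}_m$. Strict pointwise inequality $F' < F$ supplies some $\delta > 0$ and at least one coordinate $j$ with $\Prob{\rho_j' - \rho_j \geq \delta} > 0$ under the coupling. Because $\mathcal{C}_m$ is a bounded convex polytope, at least one of its defining facets, for instance $\sum_{i} \rho_i = m \cdot \mathrm{span}(o_i;\,i=1,\ldots,k)$, has a normal with positive $j$-th component and meets the support of $\bm{\rho}$. A thin inside-slab of width $\delta/2$ around such a facet has positive product probability under the assumptions on $F$ in Sec.~\ref{subsec:offered_load}; on the intersection of that slab with $\{\rho_j' - \rho_j \geq \delta\}$ the coupling pushes $\bm{\rho}'$ across the facet while $\bm{\rho}$ remains inside, giving the required separation. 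I expect the bookkeeping here to be the fiddliest part, since one must check a mild regularity property of $F$ (a density, or at least positive mass, near active facets from inside) rather than invoke the coupling alone.
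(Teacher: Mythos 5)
Your argument is correct in its main thrust, but it takes a genuinely different route from the paper. The paper proves the claim analytically: it writes $\mathcal{P}$ as the integral of the joint demand density over $\mathcal{C}_m$, decomposes $\mathcal{C}_m$ into line segments $\{\alpha\bm{b} : 0\le\alpha\le1\}$ joining the origin to boundary points $\bm{b}\in\mathcal{B}_m$, and argues that when $F$ is replaced by the stochastically larger $F'$ the probability mass carried by each such segment decreases, hence the outer integral decreases. You instead work probabilistically: you observe from \eqref{eq:modified_cap_region_w_service_choice_spans} that $\mathcal{C}_m$ is a down-set (coordinatewise decreases preserve membership), and you use the quantile coupling $\rho_i = F^{-1}(U_i)$, $\rho_i' = (F')^{-1}(U_i)$ to get $\rho_i' \ge \rho_i$ almost surely with the correct marginals, so that $\I{\bm{\rho}'\in\mathcal{C}_m}\le\I{\bm{\rho}\in\mathcal{C}_m}$ pointwise and $\mathcal{P}'\le\mathcal{P}$ follows by taking expectations. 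Your route is arguably cleaner and more rigorous for the weak inequality: it avoids the paper's informal radial reparametrization (where the change-of-variables factor and the meaning of ``the inner integral becomes smaller'' are left implicit), and it only needs monotonicity of the region, not convexity per se. (One small point: the down-set property holds for any allocation directly from the LP form \eqref{eq:modified_cap_region_w_matrix_inequalities}, so you need not route through the span characterization, which is stated for replicated storage.)

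On strictness, your honesty is warranted: turning $\le$ into $<$ does require that the demand law put positive mass on a thin inner slab of some facet of $\mathcal{C}_m$ that the coupling can push across, and this is a genuine regularity requirement rather than bookkeeping — for instance, if the support of $F$ lies entirely outside $\mathcal{C}_m$ (so $\mathcal{P}=\mathcal{P}'=0$) or the demands are lattice-valued, the strict inequality can fail as stated. The paper's own proof silently makes the same assumption when it asserts that the mass along each segment strictly decreases, so your proposal is not weaker than the published argument on this point; it simply makes the needed hypothesis (a density, or positive mass, near active facets from inside) explicit.
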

\begin{proof}
  See Appendix~\ref{subsec:proof_cor_P_larger_demand_leq_P}.
\end{proof}

Given that $\mathcal{C}_m$ is a convex polytope, we can check whether a demand vector is in $\mathcal{C}_m$ by solving a linear feasibility problem. This solution, together with \eqref{eq:P_integration} gives a recipe to exactly compute $\mathcal{P}$ for systems with any storage allocation. However, solving a linear program may not help understanding what differentiates a storage allocation with good performance from those with low performance.

\section{Storage with No Redundancy}
\label{sec:1_choice}

We consider \emph{single-choice} allocations in which each of the $k$ objects is stored on only a single node, and each of the $n$ nodes stores $b=k/n$ different objects, where we assume $n|k$.
Demand for each object, in this case, must be served entirely by the only node hosting the object, and each node has to serve the total demand for all objects stored on it.
This straightforward assignment of object demands to nodes makes it possible to derive an exact expression for $\mathcal{P}$.
\begin{lemma}
  In a single-choice storage allocation
  \begin{equation}
    \mathcal{P} = F_b(m)^{n}.
  \label{eq:P_single_choice}
  \end{equation}
\label{lm:P_single_choice}
\end{lemma}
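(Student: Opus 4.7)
The plan is to reduce the joint condition $\bm{\rho} \in \mathcal{C}_m$ to a product of $n$ independent events, one per node, each of which has probability $F_b(m)$.

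First I would identify the structure of the allocation. Since each of the $k$ objects has a single service choice and each node stores exactly $b = k/n$ different objects, the allocation partitions the object set $\{o_1,\dots,o_k\}$ into $n$ disjoint blocks $S_1,\dots,S_n$ of size $b$, where $S_j$ is the set of objects stored at node $j$. For any subset $I \subset \{1,\dots,k\}$, the span $\mathrm{span}(o_i;\,i\in I)$ is precisely the number of blocks $S_j$ that $I$ touches.

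Next I would use Lemma~\ref{lm:cap_region_for_system_w_replicated_storage} (in its modified form \eqref{eq:modified_cap_region_w_service_choice_spans}) to show that $\bm{\rho}\in\mathcal{C}_m$ is equivalent to the $n$ per-node constraints $\sum_{i\in S_j}\rho_i \leq m$ for all $j$. The ``only if'' direction is immediate by taking $I=S_j$. For the ``if'' direction, given any $I$ touching $s$ blocks, we have $\sum_{i\in I}\rho_i \leq \sum_{j\,:\,S_j\cap I\neq\emptyset}\sum_{i\in S_j}\rho_i \leq sm = m\cdot\mathrm{span}(o_i;\,i\in I)$, so all polytope constraints follow from the per-block ones.

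Finally I would invoke independence: because the blocks $S_j$ are disjoint and the demands $\rho_i$ are i.i.d.\ with distribution $F$, the per-node loads $L_j := \sum_{i\in S_j}\rho_i$ are independent, and each $L_j$ is a sum of $b$ i.i.d.\ copies of $F$, so $\Prob{L_j \leq m} = F_b(m)$ by the definition of $F_b$. Therefore
\begin{equation*}
\mathcal{P} = \Prob{\bm{\rho}\in\mathcal{C}_m} = \Prob{\bigcap_{j=1}^{n}\{L_j\leq m\}} = \prod_{j=1}^{n} F_b(m) = F_b(m)^{n}.
\end{equation*}

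There is no real obstacle here; the only step that requires slight care is the reduction of the exponentially many polytope constraints in \eqref{eq:modified_cap_region_w_service_choice_spans} to just the $n$ per-block constraints, and that reduction is purely combinatorial once one observes that the span in a single-choice allocation counts the blocks a set of objects intersects.
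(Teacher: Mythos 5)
Your proposal is correct and matches the paper's (essentially unproved, stated-in-text) argument: the single-choice structure forces each node's load to be the sum of the $b$ i.i.d.\ demands of the objects it hosts, and independence across the disjoint per-node object sets yields the product $F_b(m)^n$. Your extra step of reducing the exponentially many span constraints of \eqref{eq:modified_cap_region_w_service_choice_spans} to the $n$ per-node constraints is a sound formalization of what the paper asserts directly (demands cannot be split, so the per-node conditions are exactly the membership conditions for $\mathcal{C}_m$), so the two arguments coincide in substance.
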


When distribution $F_b(x)$ of the sum $\rho_1 + \ldots + \rho_b$ has a closed form expression, \eqref{eq:P_single_choice} would give a closed form expression for $\mathcal{P}$. For instance, if $\rho_i \sim \mathrm{Exp}(\mu)$, then \eqref{eq:P_single_choice} would be given as
\[
    \mathcal{P} = \left( \frac{\gamma(b, m \mu)}{(b - 1)!} \right)^{n},
\]
where $\gamma$ is the lower incomplete gamma function.

The expression \eqref{eq:P_single_choice} enables us to evaluate the impact of $b$ on $\mathcal{P}$. Let us now assume that the object demands are distributed as $\rho_i / b$. This ensures a fair comparison between different values of $b$ by keeping the average cumulative demand on the system fixed.
With this assumption, \eqref{eq:P_single_choice} becomes
\begin{equation*}
    \mathcal{P} = \Prob{(\rho_1 + \ldots + \rho_b) / b \leq m}^{n}.
\end{equation*}
When $\rho_i$ have a finite expected value $\mu$, the mean $(\rho_1 + \ldots + \rho_b) / b$ becomes more concentrated around $\mu$ with increasing $b$. For most distributions of $\rho_i$, the concentration of mean would imply that $\mathcal{P}$ increases with $b$.
This is shown in Fig.~\ref{fig:P_vs_b} for $\rho \sim \mathrm{Exp}$.
We cannot, however, generalize this for all demand distributions. For instance, if $\rho_i$ are distributed as Pareto with an infinite expected value, then the mean $(\rho_1 + \ldots + \rho_b) / b$ would be larger than $\rho_i$ in the sense of first-order stochastic dominance \cite{ParetoStochasticDominance:ChenEW22}.

As $\mathcal{P}$ increases with $b$ for most demand distributions of interest (while keeping the average cumulative demand fixed), it is safe to conclude that storing one object per node is the worst-case in terms of $\mathcal{P}$. From now on, we consider the worst-case $b = 1$, which implies $k = n$. Under this assumption, the system is configured with only two parameters: the replication factor $d$ and the \emph{system scale} $n$.
This assumption makes it more tractable to formulate and study the data access problem and easier to explain and interpret the presented results.
The results that we present can be extended for the general case with a fixed value of $k/n > 1$ by using arguments that are very similar to those we discuss.

\begin{figure}[t]
  \centering
    \includegraphics[width=.4\textwidth]{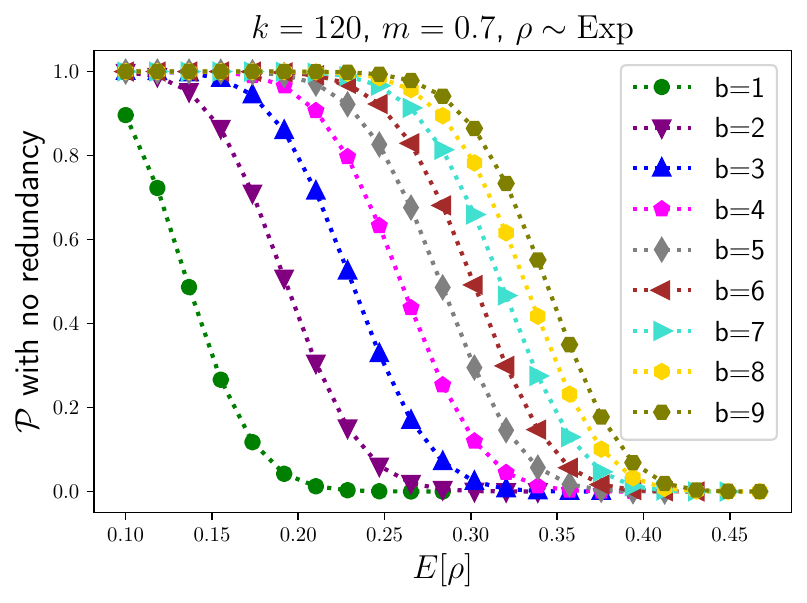}
  \caption{$\mathcal{P}$ vs average object demand for no-redundancy system with varying number $b$ of objects stored per node.}
\label{fig:P_vs_b}
\end{figure}

\section{Storage with Replication}
\label{sec:d_choice}
In this section, we consider \emph{$d$-choice} allocations in which each of the $n$ objects (recall our assumption $k = n$ from Sec.~\ref{sec:1_choice}) is stored on $d$ different nodes. In other words, we consider storage systems where each object has $d$ service choices. Our goal in this section is to understand the impact of the replication factor $d$ and the object-to-node assignment on data access performance in terms of $\mathcal{P}$.

A $d$-choice storage allocation defines a $d$-regular bipartite mapping from the set of objects to the set of nodes, which we refer to as the \emph{allocation graph}.
A $d$-choice allocation is constructed as follows: for each object (1) select a set of $d$ nodes out of the total $n$ nodes according to some object-to-node assignment strategy, (2) store the object on the selected nodes. We denote the set of nodes that host object $o_i$ with $C_i$. In other words, $C_i$ consists of the service choices for $o_i$.
The object-to-node assignment strategy determines the set of service choices $C_i$ for the objects.

\subsection{Service Choice Spans and Overlaps}
\label{subsec:service_choice_spans}

In a $d$-choice allocation, the total capacity to serve each object individually is $d$. The capacity to jointly serve multiple objects is given by their span (see Def.~\ref{def:service_choice_span}). Recall from Lemma~\ref{lm:cap_region_for_system_w_replicated_storage} that a demand vector lies within the system's capacity region if the service choice spans are large enough to meet the cumulative demand for all subsets of objects.
It is then desirable to maximize the service choice spans to expand the capacity region and cover more demand vectors. For instance, Fig.~\ref{fig:cap_region_for_two_objs} shows how the capacity region for two objects shrinks as their span is reduced. Span (union) is inversely proportional to the size of the overlap (intersection) between the service choices. Therefore, one should reduce the overlaps to expand the capacity region. However, simultaneously reducing the service choice overlaps for all subsets of objects is impossible. For instance, suppose we move a copy of object-$i$ from node-$u$ to node-$v$ to reduce the overlap between object-$i$ and the other objects stored on node-$u$. This move will increase the overlap between object-$i$ and the objects previously stored on node-$v$.

\begin{figure*}[t]
  \centering
    \includegraphics[width=.9\textwidth]{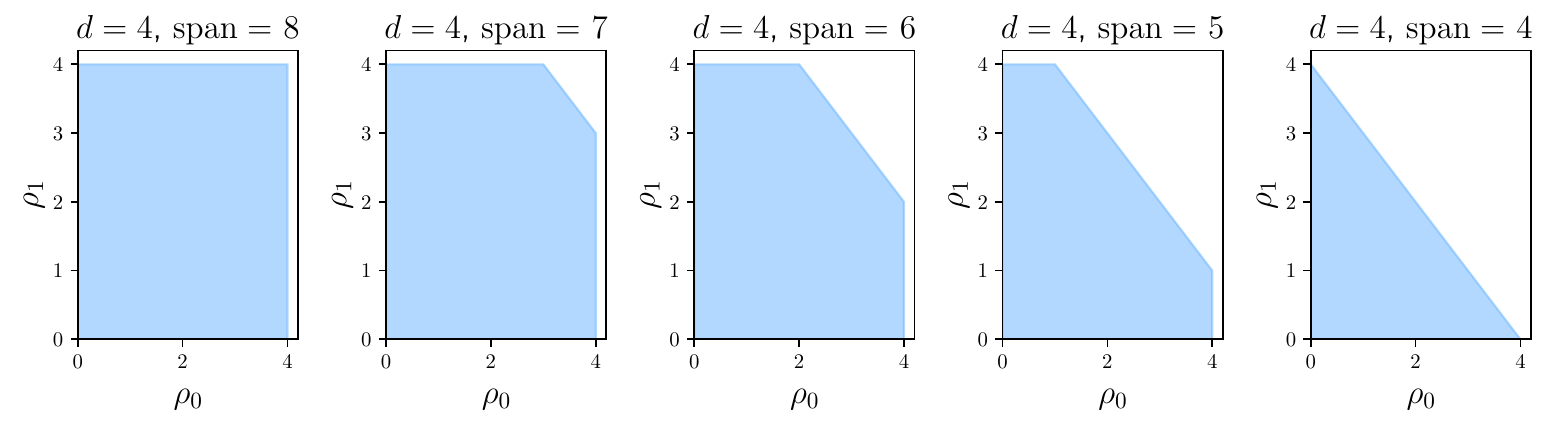}
  \caption{Capacity region for objects $o_0$ and $o_1$ as their span goes from maximum to minimum.}
\label{fig:cap_region_for_two_objs}
\end{figure*}

Although it is impossible to simultaneously reduce the overlaps between all subsets of objects, as we show below, we can reduce the cumulative overlap between all $t$-subsets of objects where $t > 1$. We next define the cumulative overlap.
\begin{definition}
    For a storage allocation with $n$ objects, the cumulative overlap between the service choices $C_i$ of $t$-subsets of objects is defined as
    \begin{equation}
    \mathrm{CumOverlap}_t = \sum_{\{i_1, \ldots, i_t\} \subset \{1, \dots, n\}} \left| C_{i_1} \cap \ldots \cap C_{i_t} \right|.
  \label{eq:cum_overlap_t_subsets}
  \end{equation}
\label{def:cum_overlap}
\end{definition}

Note that cumulative overlaps directly give us the cumulative service choice spans. For instance, the cumulative span for object pairs is given by
\begin{equation}
    \sum_{\{i_1, i_2\} \subset \{1, \dots, n\}} \left| C_{i_1} \cup C_{i_2} \right|.
\label{eq:cum_span}
\end{equation}
We can express the service choice span for two objects in terms of the cardinality of their overlap as
\[
    \left| C_{i_1} \cup C_{i_2} \right| = \left| C_{i_1} \right| + \left| C_{i_2} \right| - \left| C_{i_1} \cap C_{i_2} \right|,
\]
where the span of a single object is $d$, i.e., $|C_{i_1}| = |C_{i_2}| = d$.
Substituting \eqref{eq:cum_overlap_t_subsets} into the above expression, we express cumulative span in terms of cumulative overlap as
\[
    {n \choose 2}2d - \mathrm{CumOverlap}_2.
\]
Similar to above, cumulative span for $t > 2$ can be written in terms of the cumulative overlaps using the following equality
\begin{equation}
\begin{split}
    \Bigl| C_{i_1} &\cup \ldots \cup C_{i_t} \Bigr| = \\
    &\sum_{u=1}^t (-1)^{u+1} \sum_{\{i_1, \ldots, i_u\} \subset \{1, \dots, n\}} \Bigl| C_{i_1} \cap \ldots \cap C_{i_u} \Bigr|.
\end{split}
\label{eq:cum_span}
\end{equation}

We next show that taking one step towards balancing the objects over the nodes reduces cumulative service choice overlaps.
The following Lemma shows that moving an object from a node with $u$ objects to another with fewer objects reduces the cumulative overlaps $\mathrm{CumOverlap}_t$ for $t < u$.
\begin{lemma}
    In a given storage allocation, moving an object from a node with $u$ objects to another node with $v$ objects where $u > v + 1$ reduces $\mathrm{CumOverlap}_t$ by ${u - 1 \choose t - 1} - {v \choose t - 1}$.
\label{lm:balancing_allocation_reduces_cum_overlap}
\end{lemma}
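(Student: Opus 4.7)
The plan is to rewrite $\mathrm{CumOverlap}_t$ via an exchange of the order of summation. For each node $w$, let $n_w$ denote the number of objects stored at $w$. In the defining double sum
\[
    \mathrm{CumOverlap}_t = \sum_{\{i_1, \ldots, i_t\}} \bigl| C_{i_1} \cap \cdots \cap C_{i_t} \bigr|,
\]
a node $w$ contributes $1$ to the term indexed by $\{i_1, \ldots, i_t\}$ precisely when all of $o_{i_1}, \ldots, o_{i_t}$ are stored at $w$. Grouping by node rather than by object-subset yields the identity
\[
    \mathrm{CumOverlap}_t = \sum_{w=1}^{n} \binom{n_w}{t}.
\]

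Once this identity is in hand, the rest is a short calculation. The described move changes $n_A$ from $u$ to $u-1$ and $n_B$ from $v$ to $v+1$, while leaving every other $n_w$ unaffected. Only two terms of the sum above change, so the net change in $\mathrm{CumOverlap}_t$ is
\[
    \Delta = \left[\binom{u-1}{t} - \binom{u}{t}\right] + \left[\binom{v+1}{t} - \binom{v}{t}\right].
\]
Pascal's identity supplies $\binom{u}{t} - \binom{u-1}{t} = \binom{u-1}{t-1}$ and $\binom{v+1}{t} - \binom{v}{t} = \binom{v}{t-1}$, hence
\[
    \Delta = -\binom{u-1}{t-1} + \binom{v}{t-1},
\]
meaning $\mathrm{CumOverlap}_t$ decreases by exactly $\binom{u-1}{t-1} - \binom{v}{t-1}$. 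Under the standard convention $\binom{n}{k} = 0$ for $k > n$, the hypothesis $u > v + 1$ (equivalently $u - 1 > v$) ensures that this reduction is non-negative.

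I do not anticipate any real obstacle; the entire argument rests on recognizing the swap-of-summation identity, which converts a combinatorial statement about arbitrary storage allocations into a two-term comparison of binomial coefficients plus an application of Pascal's rule. As a side benefit, the node-indexed expression $\mathrm{CumOverlap}_t = \sum_w \binom{n_w}{t}$ together with strict convexity of $x \mapsto \binom{x}{t}$ for $t \geq 2$ immediately implies that the fully balanced allocation (every node storing the same number of objects) minimizes $\mathrm{CumOverlap}_t$ for each $t$, which is presumably the direction in which this lemma is intended to be used.
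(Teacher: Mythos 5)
Your proof is correct, and the final sign matches the lemma as stated. The paper argues by tracking the moved (``tagged'') object directly: before the move it lies in $\binom{u-1}{t-1}$ of the $t$-subsets overlapping at the source node, and after the move it lies in $\binom{v}{t-1}$ of those at the destination node, so the cumulative overlap drops by $\binom{u-1}{t-1} - \binom{v}{t-1}$. Your route — first establishing the node-indexed identity $\mathrm{CumOverlap}_t = \sum_w \binom{n_w}{t}$ by swapping the order of summation, then applying Pascal's rule to the two affected terms — is arithmetically the same count (Pascal's identity is precisely the tagged-object count), but the repackaging buys something: the identity immediately gives the closed form $n\binom{d}{t}$ of Lemma~\ref{lm:balanced_d_choice_allocation_cum_overlap} and, via convexity of $x \mapsto \binom{x}{t}$, the minimality of balanced allocations, which the paper instead obtains by a separate repeated-exchange argument. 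Your bookkeeping is also cleaner than the paper's appendix, which states the final comparison with the inequality reversed (claiming $\binom{v}{t-1} > \binom{u-1}{t-1}$ under $u > v+1$), a slip your $\Delta$ computation avoids. The only implicit assumption worth making explicit (in both proofs) is that the destination node does not already store a copy of the moved object, so that its object count indeed becomes $v+1$ distinct objects.
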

\begin{proof}
  See Appendix~\ref{subsec:proof_lm_balancing_allocation_reduces_cum_overlap}.
\end{proof}

Lemma~\ref{lm:balancing_allocation_reduces_cum_overlap} shows that balancing the allocation reduces the cumulative overlaps. This, however, does not always lead to a reduction in $\mathcal{P}$.
To see that, consider the following example.
Suppose an object is moved from node-$i$ to node-$j$, hosting fewer objects. Suppose the objects already stored on node $j$ overlap at all their service choices. Then, moving another object to node-$j$ will increase the competition for the capacity available on node-$j$. This will reduce $\mathcal{P}$, or at best keep it the same. The latter would happen if the objects stored on node-$i$ used to overlap at all their service choices, and the situation comprehensively improves with one of the objects moving to node-$j$.

Examples such as the one in the previous paragraph do not constitute most cases. Balancing the allocation is more likely to increase $\mathcal{P}$ than to reduce it. Moreover, reducing cumulative overlaps is still a reasonable goal. As discussed above, reducing the overlaps for a set of objects leads to a larger capacity for serving those objects jointly, which is more likely to increase $\mathcal{P}$ than to reduce it.
In the following sub-section, we discuss \emph{balanced} allocations and show that they minimize the cumulative service choice overlaps.

\subsection{Balanced Allocations}
\label{subsec:balanced_d_choice}

Recall from Def.~\ref{def:reg_balanced_dchoice_alloc} that we designate a storage allocation as \emph{balanced} when each node stores the same number of object copies.
Construction of a balanced $d$-choice allocation can be described as follows:
i) Map primary copies for all objects to nodes with a bijection $f_0$,
ii) For $i$ going from $1$ to $d$, map the $i$th redundant object copies to nodes with a bijection $f_i$ such that $f_i(o) \neq f_j(o)$ for every $j < i$ and $o$.
Thus, every node stores a single primary and $d-1$ redundant object copies, and each copy stored on the same node is for a different object.

The following Lemma shows that balancing $d$-choice allocations minimizes the cumulative service choice overlaps. This result, together with \eqref{eq:cum_span}, implies that the \emph{cumulative object span is maximized} in balanced allocations.
\begin{lemma}
  In a balanced $d$-choice allocation for storing $n$ objects, the cumulative overlap between $t$-subsets of objects (see Def.\ref{def:cum_overlap}) is given as
  \begin{equation}
    \mathrm{CumOverlap}_t = n {d \choose t},
  \label{eq:cum_overlap_in_balanced_d_choice}
  \end{equation}
  which is the minimum value across all $d$-choice allocations. 
\label{lm:balanced_d_choice_allocation_cum_overlap}
\end{lemma}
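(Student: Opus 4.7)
The plan is to establish \eqref{eq:cum_overlap_in_balanced_d_choice} by a double-counting identity and then deduce minimality directly from Lemma~\ref{lm:balancing_allocation_reduces_cum_overlap}.

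For the identity, I would swap the order of summation in \eqref{eq:cum_overlap_t_subsets}. Let $u_v$ denote the number of distinct objects whose service choice set contains node $v$. A node $v$ lies in $C_{i_1} \cap \cdots \cap C_{i_t}$ precisely when all of $o_{i_1}, \ldots, o_{i_t}$ are stored on $v$; equivalently, $v$ contributes to exactly $\binom{u_v}{t}$ of the intersections indexed by $t$-subsets. Hence
\begin{equation*}
  \mathrm{CumOverlap}_t \;=\; \sum_{v=1}^{n}\binom{u_v}{t}.
\end{equation*}
In a balanced $d$-choice allocation, each node stores the same number of distinct object copies; since the total number of copies equals $nd$, this forces $u_v = d$ for every $v$, giving $\mathrm{CumOverlap}_t = n\binom{d}{t}$.

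For minimality, I would argue by iterated smoothing. Starting from an arbitrary $d$-choice allocation with node loads $(u_1,\ldots,u_n)$, the constraint $\sum_v u_v = nd$ is invariant under moving single copies. Whenever some pair satisfies $u_u > u_v + 1$, Lemma~\ref{lm:balancing_allocation_reduces_cum_overlap} guarantees that transferring one object copy from $u$ to $v$ weakly decreases $\mathrm{CumOverlap}_t$ (strictly for $t \geq 2$, since $\binom{u-1}{t-1} > \binom{v}{t-1}$ whenever $u-1 > v$; the $t=1$ case is trivial because $\mathrm{CumOverlap}_1 = nd$ is constant). Because the load vector becomes strictly more balanced in the majorization sense at each step, the process must terminate, and it can only terminate when $\max_v u_v - \min_v u_v \leq 1$, which forces $u_v = d$ for all $v$. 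Thus every balanced allocation attains the minimum value $n\binom{d}{t}$.

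The main obstacle I anticipate is a feasibility subtlety in the smoothing step: when transferring a copy of some object $o$ from node $u$ to node $v$, we must ensure $o$ is not already stored on $v$, or the resulting configuration would cease to be a valid $d$-choice allocation. Fortunately, whenever $u_u \geq u_v + 2$, at most $u_v$ of the $u_u$ objects on $u$ can also lie on $v$, so at least $u_u - u_v \geq 2$ of them are absent from $v$; any such object is a legal candidate to move. Once this feasibility observation is in place, the proof reduces to the clean identity above together with the monotonicity already supplied by Lemma~\ref{lm:balancing_allocation_reduces_cum_overlap}.
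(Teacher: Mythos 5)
Your proposal is correct and follows essentially the same route as the paper: the value $n\binom{d}{t}$ comes from the same per-node count (each node holding $d$ objects contributes $\binom{d}{t}$ overlapping $t$-subsets, which your double-counting identity $\mathrm{CumOverlap}_t=\sum_v\binom{u_v}{t}$ formalizes), and minimality is deduced from Lemma~\ref{lm:balancing_allocation_reduces_cum_overlap} via balancing moves, just as the paper does. Your explicit treatment of move feasibility and termination merely fills in details the paper's argument leaves implicit (and your parenthetical strictness claim is harmless, since weak decrease suffices).
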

\begin{proof}
  See Appendix~\ref{subsec:proof_lm_balanced_d_choice_allocation_cum_overlap}.
\end{proof}

It is worth recalling that balanced $d$-choice allocations implement batch codes \cite{LoadBalancing:AktasFS21}. Batch codes constructed with replication are known as \emph{combinatorial} batch codes \cite{CombinatorialDesigns:Stinson07, CombinatorialBatchCodes:StinsonWP09}.
As shown in Lemma~\ref{lm:balanced_d_choice_allocation_cum_overlap}, cumulative overlaps are fixed in balanced $d$-choice allocations. However, multiple ways exist to distribute the overlaps across different subsets of objects via different allocation designs. Different designs favor different service choice overlap characteristics while yielding the same cumulative overlap. We will next discuss four designs for constructing $d$-choice storage allocations.

\vspace{1ex}
\noindent
\textbf{Clustering design}:\space
This design is possible only if $d|n$.
Let us partition the nodes into $n/d$ sets, each of which we call a \emph{cluster}.
Let us then assign each object to a cluster such that each cluster consists of exactly $d$ objects.
Every object is stored across all the nodes within its assigned cluster.
The resulting storage has an allocation graph composed of $n/d$ disjoint $d$-regular complete bipartite graphs.

For instance, the 3-choice allocation for nine objects $a, \dots, i$ with clustering design would look like
\begin{equation*}
  \icol{a\\b\\c} ~~\icol{a\\b\\c} ~~\icol{a\\b\\c} ~~\icol{d\\e\\f} ~~\icol{d\\e\\f} ~~\icol{d\\e\\f} ~~\icol{g\\h\\i} ~~\icol{g\\h\\i} ~~\icol{g\\h\\i}.
\end{equation*}

\vspace{1ex}
\noindent
\textbf{Cyclic design}:\space
In this design, we follow a \emph{cyclic} construction.
We start by assigning the original object copies to the nodes using an arbitrary bijection $f_0$.
We assign the $i$th replicas of the objects for $i = 1, \dots, d-1$ by using bijection $f_i$, where $f_i$ is obtained by applying circular shift on $f_0$ repeatedly $i$ times. Note that shifting is applied in the same direction in all steps.
In other words, we pick $f_i$ such that $f_{i+1}(o) = f_i(o) + 1 \mod n$ for $i = 0, \dots, d-1$ and every $o$.
For instance, 3-choice allocation for 7 objects $a, \dots, g$ with cyclic design would look like
\begin{equation*}
  \icol{a\\g\\f} ~~\icol{b\\a\\g} ~~\icol{c\\b\\a} ~~\icol{d\\c\\b} ~~\icol{e\\d\\c} ~~\icol{f\\e\\d} ~~\icol{g\\f\\e}.
\end{equation*}

Notably, a form of cyclic design is used in Cassandra \cite{Cassandra:LakshmanM10} and other similar storage systems \cite{Scylladb:Suneja19}.

t is not easy to get a general handle on the service choice overlaps and control them. 
The clustering design represents one of the extreme ways of distributing the service choice overlaps. With clustering design, objects within the same cluster fully overlap at all their service choices, while objects in different clusters do not. Cyclic design moves towards distributing the overlaps more evenly over different subsets of objects.
In order to state the distribution of overlaps more clearly, let us define the distance between two objects $o_i$ and $o_j$ as $|j - i|$. Then, with clustering or cyclic design, if two objects are apart with a distance greater than or equal to $d$, they do not overlap in their service choices.

Clustering and cyclic designs have been generalized in \cite{LoadBalancing:AktasFS21} to a class of allocations, namely \emph{$r$-gap design}, in which a single parameter loosely controls the service choice overlaps.
Constructions with $r$-gap design decouple service choices for objects that are $r$-apart.
\begin{definition}[\cite{LoadBalancing:AktasFS21}]
  A storage allocation is an \underline{$r$-gap design} if $\card{C_i \cap C_j} = 0$ for $j > i$ and $\min\{j-i, n-(j-i)\} > r$.
\label{def:rgap_design}
\end{definition}

Clustering and cyclic designs, or their generalization $r$-gap design, decouple service choices $C_i$ apart at the cost of enlarging the overlaps between those close to each other. We next discuss a different design that distributes the service choice overlaps evenly.

\vspace{1ex}
\noindent
\textbf{Block design}:\space
A $(d, v)$ block design is a class of equal-size subsets of $\mathcal{X}$ (the set of stored objects), called blocks (storage nodes), such that every point in $\mathcal{X}$ appears in exactly $d$ blocks (service choices), and \emph{every} pair of distinct points is contained in exactly $v$ blocks \cite{CombinatorialDesigns:Stinson07}.

We here consider the symmetric block designs with $v = 1$; that is, the number of objects and nodes are equal ($k = n$) and $|C_i \cap C_j| = 1$ for every $j \neq i$.
Note that $v = 1$ represents the setting with minimal overlap (maximum span) between the service choices $C_i$.
Using the fact that every pair of service choices overlaps at a single node, we obtain the cumulative overlap between service choice pairs (see Def.~\ref{def:cum_overlap}) as
\[ 
    \mathrm{CumOverlap}_2 = {n \choose 2}.
\]
This value must be equal to the value given by the expression in \eqref{eq:cum_overlap_in_balanced_d_choice}.
Thus, the block designs we consider here are possible only if $n = d^2 - d + 1$.
For instance, a 3-choice allocation with a block design is given as
\begin{equation}
  \icol{a\\b\\c} ~~\icol{a\\f\\g} ~~\icol{a\\d\\e} ~~\icol{b\\d\\f} ~~\icol{b\\e\\g} ~~\icol{c\\d\\g} ~~\icol{c\\e\\f}
\label{eq:eq_3choice_block_design}
\end{equation}

When the system parameters do not allow for constructing block design exactly, it is possible to obtain an \emph{approximate} one via a randomized construction. We discuss such a randomized construction in Appendix~\ref{subsec:approx_block_design_w_random_construction}.

\vspace{1ex}
\noindent
\textbf{Random design}:\space
In this design we follow a \emph{random} construction.
We start by assigning the original object copies to the nodes according to an arbitrary bijection $f_0$.
We assign the $i$th replicas of the objects for $i = 1, \dots, d-1$ by using bijection $f_i$, where $f_i$ implements a random permutation such that no object gets assigned to the same node more than once.
Random design may not yield a balanced allocation due to random assignment of objects to nodes. 
For instance, 3-choice allocation for 7 objects $a, \dots, g$ with random design may look like
\begin{equation*}
  \icol{a\\b\\f\\g} ~~\icol{b\\c} ~~\icol{c\\a\\g} ~~\icol{d\\b} ~~\icol{e\\a} ~~\icol{f\\c\\d\\e} ~~\icol{g\\d\\e\\f}.
\end{equation*}


\subsection{Robustness of A Given Storage Allocation}
\label{subsec:calculating_P_for_given_storage_allocation}

We discussed above the importance of service choice overlaps on the system's robustness $\mathcal{P}$ and presented four storage designs that implement different overlap distribution characteristics.
As discussed in Sec.~\ref{subsec:cap_region_via_service_choice_spans}, $\mathcal{P}$ depends on the service choice spans (or overlaps equivalently) for all subsets of objects, making it impossible to derive a single ``formula'' to calculate $\mathcal{P}$ for any given storage allocation. Analysis of $\mathcal{P}$ largely depends on the storage design. As shown in the following sections, we cannot reuse the techniques for one storage design while analyzing $\mathcal{P}$ for a different design.
This subsection presents an approach to derive bounds on $\mathcal{P}$ for any given $d$-choice storage allocation.

The main difficulty in analyzing $\mathcal{P}$ is that we cannot decouple object demands from each other while determining whether the system can serve them. For a given object-$i$, how we split and assign its demand $\rho_i$ depends on the demand assignment for the other objects overlapping with object-$i$ in their service choices.
This dependence is captured in \eqref{eq:modified_cap_region_w_service_choice_spans} by the set of conditions we should check to see if sufficient cumulative capacity is available to meet the demand for all subsets of objects. 
For brevity, we refer to these as \emph{demand-vs-capacity conditions}.

Iterating over the object demands $(\rho_1, \dots, \rho_n)$ and checking the relevant demand-vs-capacity conditions helps us understand the dependence that renders the analysis of $\mathcal{P}$ difficult.
In the beginning, we only consider $\rho_1$, and there is only one condition we need to check: $\rho_1 \leq |C_1|$.
At step $2$, include $\rho_2$, after which we need to check the following conditions: $\rho_2 \leq |C_2|$ and $\rho_1 + \rho_2 \leq |C_1 \cup C_2|$.
At step $3$, include $\rho_3$, after which we need to check the following conditions: $\rho_3 \leq |C_3|$, $\rho_1 + \rho_3 \leq |C_1 \cup C_3|$, $\rho_2 + \rho_3 \leq |C_2 \cup C_3|$ and $\rho_1 + \rho_2 + \rho_3 \leq |C_1 \cup C_2 \cup C_3|$.
We must continue these steps until we include all $\rho_i$ and check all the demand-vs-capacity conditions.
The dependence that renders the analysis difficult can be seen clearly at step $3$. Conditions in which $\rho_3$ is included depend on the values of the previous demands $\rho_1$ and $\rho_2$, as well as the previous service choices $C_1$ and $C_2$.
This dependence makes it impossible to decouple the conditions, calculate the probability of meeting them, and then combine them to find the value for $\mathcal{P}$.

Although an exact analysis of $\mathcal{P}$ is formidable for an arbitrary storage allocation, it is possible to find bounds on $\mathcal{P}$ by focusing on a subset of the demand-capacity conditions.
It is evident that focusing on only a subset of the demand-capacity conditions given in \eqref{eq:modified_cap_region_w_service_choice_spans} would give us a larger (modified) capacity region $\mathcal{C}_m$ than the actual one.
Then, calculating $\mathcal{P}$ concerning a larger $\mathcal{C}_m$ using \eqref{eq:P_integration} would yield an upper bound on its actual value.
We present such an upper bound in Theorem~\ref{thm:P_upper_bound_for_any_storage_allocation}.
To state the bound, we need the following definition.

\begin{definition}
  We denote the \underline{span} of an arbitrary set of $t$ objects in a storage allocation with $\mathrm{span}_t$.
  Its distribution is given as
  \begin{equation*}
    \Prob{\mathrm{span}_t = s} = \frac{\card{\Bigl\{(i_1, ..., i_t) \;\Bigl|\; \mathrm{span}(o_{i_1}, ..., o_{i_t}) = s\Bigr\}}}{{n \choose t}}.
  \end{equation*}
\label{def:Pr_span}
\end{definition}

\begin{theorem}
  For a given storage allocation,
  \begin{equation}
    \mathcal{P} < \min_{1 \leq t \leq n} \mathcal{P}_t,
  \label{eq:P_upper_bound_for_any_storage_allocation}
  \end{equation}
  in which the right-hand side of the inequality is given by
  \begin{equation}
    \mathcal{P}_t = \Exp{F_t(m \cdot \mathrm{span}_t)},
  \label{eq:P_t}
  \end{equation}
  where the expectation is with respect to random variable $\mathrm{span}_t$ with a distribution given in Def.~\ref{def:Pr_span}.
\label{thm:P_upper_bound_for_any_storage_allocation}
\end{theorem}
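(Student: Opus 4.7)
The plan is to exploit the span-based characterization of the modified capacity region in \eqref{eq:modified_cap_region_w_service_choice_spans}, which expresses the event $\{\bm{\rho} \in \mathcal{C}_m\}$ as the intersection of the demand-vs-capacity conditions $\sum_{i \in I} \rho_i \leq m \cdot \mathrm{span}(o_i;\, i \in I)$ over every subset $I \subset \{1, \ldots, n\}$. Dropping constraints can only enlarge the feasible region, so for any particular $I$ of size $t$ we obtain an upper bound on $\mathcal{P}$; averaging these per-$I$ bounds over all size-$t$ subsets reproduces $\mathcal{P}_t$, and minimizing over $t$ yields the claim.

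Fix $t \in \{1, \ldots, n\}$ and a subset $I \subset \{1,\ldots,n\}$ with $|I|=t$. Since $\bm{\rho}\in\mathcal{C}_m$ forces the single condition associated with $I$ to hold, monotonicity of probability gives
\[
    \mathcal{P} \;\leq\; \Prob{\sum_{i\in I}\rho_i \,\leq\, m\cdot\mathrm{span}(o_i;\,i\in I)} \;=\; F_t\bigl(m\cdot\mathrm{span}(o_i;\,i\in I)\bigr),
\]
where the equality uses that $\rho_1,\ldots,\rho_n$ are i.i.d., so $\sum_{i\in I}\rho_i$ has distribution $F_t$, while $\mathrm{span}(o_i;\,i\in I)$ is a deterministic function of the storage allocation. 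Because the left-hand side does not depend on $I$, averaging both sides over the $\binom{n}{t}$ choices of $I$ preserves the inequality and yields
\[
    \mathcal{P} \;\leq\; \frac{1}{\binom{n}{t}}\sum_{|I|=t} F_t\bigl(m\cdot\mathrm{span}(o_i;\,i\in I)\bigr) \;=\; \Exp{F_t(m\cdot\mathrm{span}_t)} \;=\; \mathcal{P}_t,
\]
by Def.~\ref{def:Pr_span}. Taking a minimum over $t$ then delivers the non-strict bound $\mathcal{P} \leq \min_{1 \leq t \leq n} \mathcal{P}_t$.

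To upgrade this to the strict inequality, I would observe that the relaxation used to derive $\mathcal{P}_t$ retains only the $t$-subset constraints, whereas \eqref{eq:modified_cap_region_w_service_choice_spans} contains constraints at every subset size; for example when $t<n$ the full-sum constraint $\sum_{i=1}^n \rho_i \leq m\cdot\mathrm{span}(o_1,\ldots,o_n)$ is not implied by any size-$t$ constraint, and analogous omissions exist when $t=n$. Provided the demand distribution places positive mass on the resulting slack, each per-$t$ inequality is strict and so is the minimum. I expect the main obstacle to be exactly this strictness step: making the slack argument rigorous requires exhibiting a positive-measure set of demand vectors inside the relaxed region but outside $\mathcal{C}_m$. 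For the continuous distributions highlighted in Sec.~\ref{subsec:offered_load} (Exponential, Pareto) this is straightforward via a small open-ball argument around a demand vector that saturates only the omitted constraint; for discrete distributions such as Scaled Bernoulli one may need a case-specific check or to settle for the non-strict form.
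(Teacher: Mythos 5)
Your proposal is correct and follows essentially the same route as the paper: relax the modified capacity region \eqref{eq:modified_cap_region_w_service_choice_spans} to a single demand-vs-capacity condition on a (uniformly chosen) $t$-subset, observe that the probability of that relaxed event is $\Exp{F_t(m\cdot\mathrm{span}_t)}=\mathcal{P}_t$, and minimize over $t$. Your closing remark on strictness is in fact more careful than the paper's own proof, which simply asserts $\mathcal{P}<\mathcal{P}_t$ without the positive-slack argument you outline.
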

\begin{proof}
  See Appendix~\ref{subsec:proof_thm_P_upper_bound_for_any_storage_allocation}.
\end{proof}

Distribution $F_t(x)$ given in \eqref{eq:P_t} for the sum $\rho_1 + \dots + \rho_t$ can be written in closed form for certain demand distributions. For instance, the sum would have a Gamma distribution if $\rho_i \sim \mathrm{Exp}$, or a Binomial distribution if $\rho_i \sim \mathrm{Bernoulli}$.
It is also worth noting that the upper bound in \eqref{eq:P_upper_bound_for_any_storage_allocation} can be improved by including the conditions $\rho_i \leq d$ and updating \eqref{eq:P_t} as
\[
    \mathcal{P}_t = \Exp{\Prob{\rho_1 \leq d, \dots, \rho_n \leq d, \rho_1 + \dots + \rho_t \leq m \cdot \mathrm{span}_t}}.
\]
This, however, makes it harder to compute the upper bound.

A comparison between the upper bound in \eqref{eq:P_upper_bound_for_any_storage_allocation} and the simulated values is given in Fig.~\ref{fig:P_sim_vs_upper_bound_for_given_allocation}. As shown in these plots and others that we omitted here, the upper bound gives almost the same values for clustering, cyclic, and random designs, as expressed by the curves placed on each other. This indicates that the upper bound given in \eqref{eq:P_upper_bound_for_any_storage_allocation} does not capture the impact of different ways of distributing the service choice overlaps. For the demand distributions $\mathrm{Exp}$ and $\mathrm{Pareto}$ used while generating the plots, cyclic design performs better than clustering design, and random design achieves the highest $\mathcal{P}$. This is not always the case for all demand distributions, as explained in the following.
The upper bound is reasonably tight on $\mathcal{P}$ for random design (the best-performing storage design), and tight for large values of $d$ (see the plots for $d = 10$).

\begin{figure*}[t]
  \centering
  \begin{subfigure}
    \centering
    \includegraphics[width=.45\textwidth]{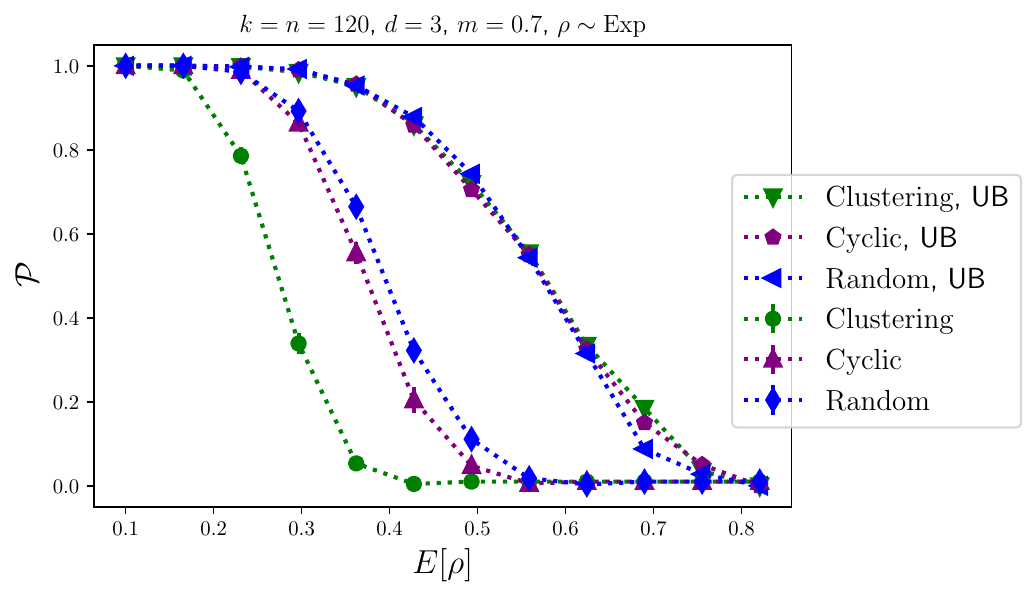}
  \end{subfigure}
  \begin{subfigure}
    \centering
    \includegraphics[width=.45\textwidth]{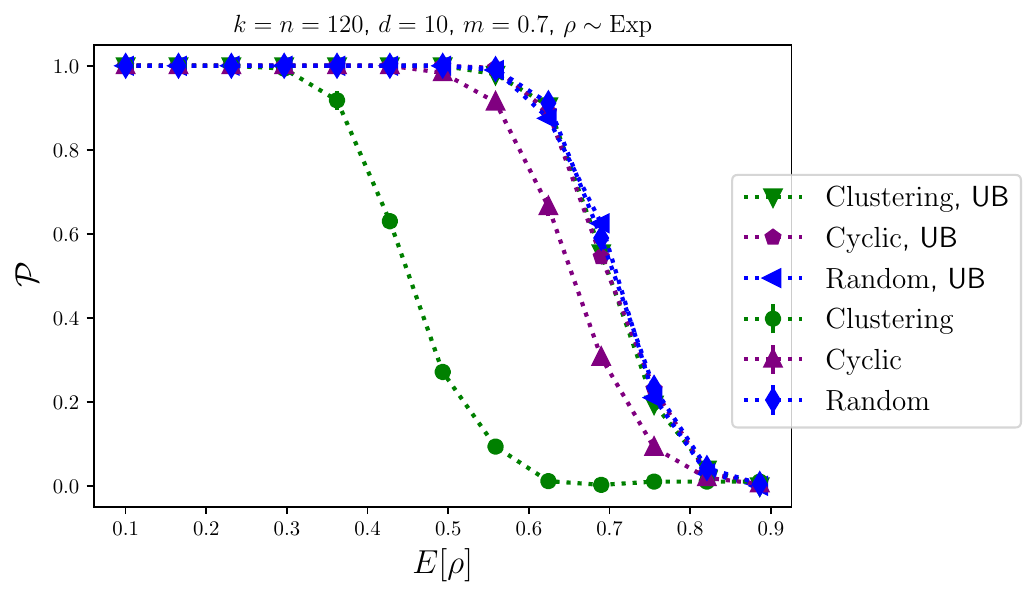}
  \end{subfigure}
  \begin{subfigure}
    \centering
    \includegraphics[width=.45\textwidth]{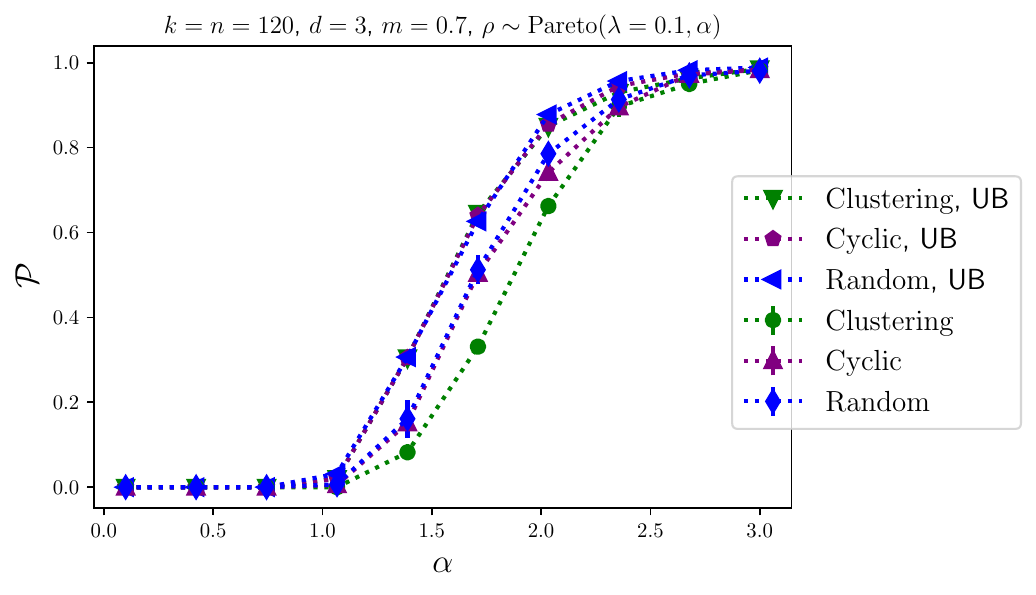}
  \end{subfigure}
  \begin{subfigure}
    \centering
    \includegraphics[width=.45\textwidth]{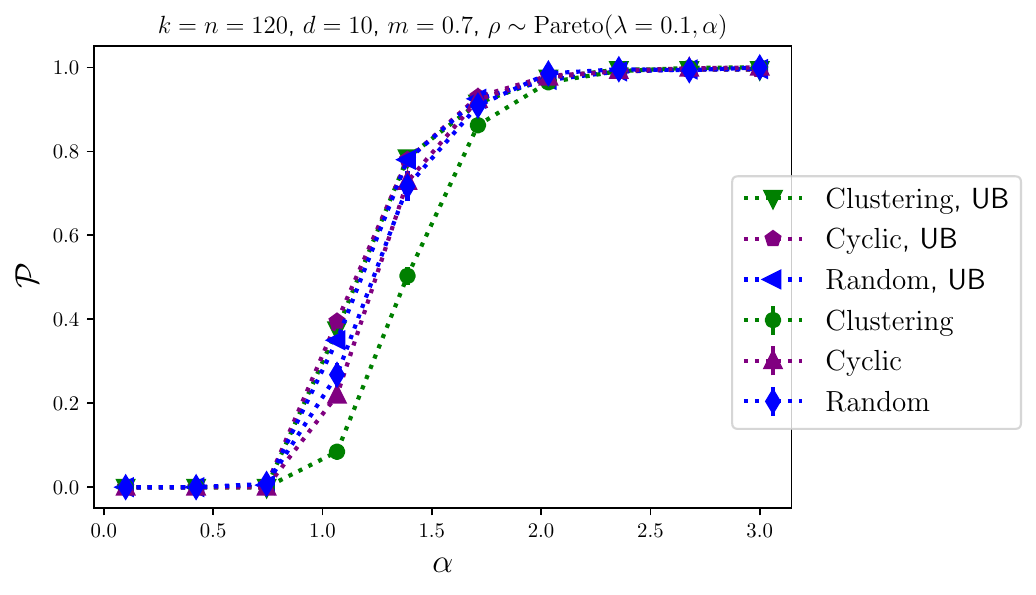}
  \end{subfigure}
  \caption{
    Simulated $\mathcal{P}$ values vs. the upper bound presented in Theorem~\ref{thm:P_upper_bound_for_any_storage_allocation} for $d$-choice allocation with different designs.
    Each plot is for a different $d$ ($3$ or $10$) and for a different demand distribution ($\mathrm{Exp}$ or $\mathrm{Pareto}$). 
    Suffix ``UB'' in the legend refers to the upper bound.
    Note that Pareto distribution becomes (stochastically) smaller as $\alpha$ becomes larger, so does $\mathcal{P}$.
  }
\label{fig:P_sim_vs_upper_bound_for_given_allocation}
\end{figure*}

\vspace{1ex}
\noindent
\textbf{Impact of Service Choice Overlaps}:\space
We elaborate on different strategies for distributing service choice overlaps regarding their impact on $\mathcal{P}$.
As explained in Sec.~\ref{subsec:balanced_d_choice}, cumulative service choice overlaps are the same across all balanced $d$-choice allocations. However, different designs distribute the overlaps according to different strategies.

Fig.~\ref{fig:P_sim_for_design_comparison} shows the simulated values of $\mathcal{P}$ for $3$-choice allocation constructed with clustering, cyclic, block and random design. It assumes that the demand distribution is  $\lambda \times \mathrm{Bernoulli}(p)$ for fixed $\lambda$ and varying $p$. An object is either actively requested with a demand of $\lambda$ or not requested at all. Note also that $\mathcal{P}$ is calculated for maximal load $m = 1$; it quantifies the probability that the system can serve the offered demand under stability.
When $D = 2$, $\mathcal{P}$ for different storage designs is ordered as 
\begin{equation}
    \mathcal{P}_{\mathrm{block}} > \mathcal{P}_{\mathrm{random}} > \mathcal{P}_{\mathrm{cyclic}} > \mathcal{P}_{\mathrm{clustering}}.
\label{eq:P_order_for_storage_designs}
\end{equation}
We present the plots for $\mathrm{Bernoulli}$ demand distribution only, but we have found that the order in \eqref{eq:P_order_for_storage_designs} holds for other demand distributions as well. Examples with other demand distributions can be found in Fig.~\ref{fig:P_sim_vs_upper_bound_for_given_allocation}. This observation suggests that \emph{designing many but consistently small service choice overlaps performs better than fewer but occasionally larger overlaps}.
Notice that random design performs close to block design, which is an important observation for practical systems. It suggests that we cannot substantially improve the system's robustness (in terms of $\mathcal{P}$) by going beyond random design and distributing service choice overlaps evenly with block design.

\begin{figure*}[t]
  \centering
  \begin{subfigure}
    \centering
    \includegraphics[width=.45\textwidth]{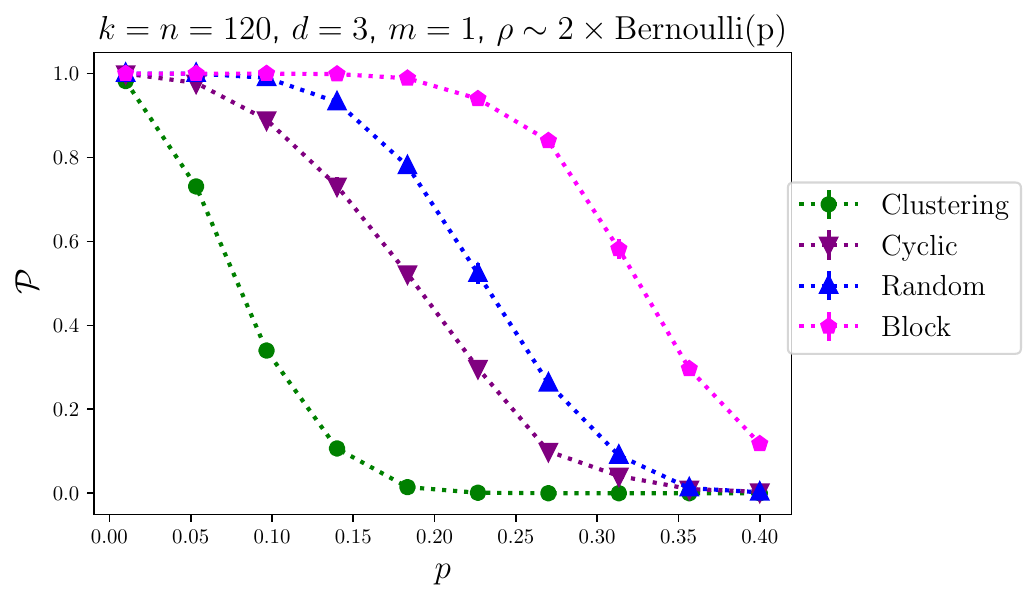}
  \end{subfigure}
  \begin{subfigure}
    \centering
    \includegraphics[width=.45\textwidth]{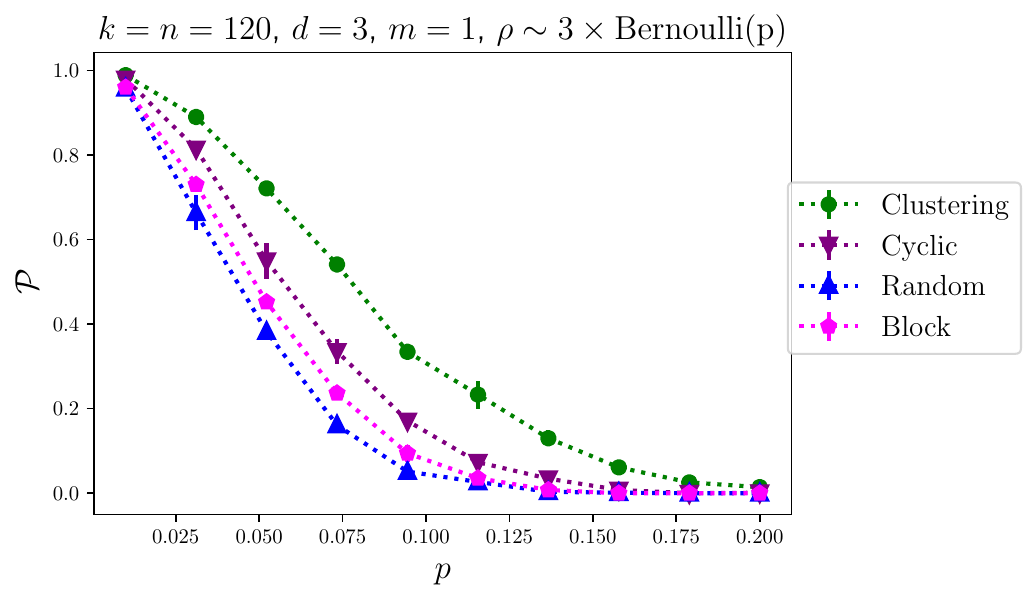}
  \end{subfigure}
  \caption{Simulated $\mathcal{P}$ for different storage designs with varying replication factor $d$.}
\label{fig:P_sim_for_design_comparison}
\end{figure*}

As also shown in Fig.~\ref{fig:P_sim_for_design_comparison}, somewhat interestingly, the order of $\mathcal{P}$ given in \eqref{eq:P_order_for_storage_designs} over the storage designs is reversed when the demand of active objects $\lambda$ is set to $d$. Note that we observe this when $\lambda = d$ and when $\lambda$ is sufficiently close to $d$.
When the demand is ``skewed beyond a level'', the impact of service choice overlaps becomes the opposite of what it was: \emph{designing few but occasionally large overlaps performs better than more but consistently smaller overlaps}.

Storage designs that favor many but consistently small overlaps, such as block design, use the following rationale: demand for most objects will likely be around an average. In contrast, objects with very large or small demands will be rare. If this presumption is true, then it makes sense to implement uniformly medium-size service choices over all subsets of objects. This is better than allocating large service choice spans for a few subsets of objects at the expense of allocating small spans for the rest, as the expected demand will not need a large span.
This situation is represented by the case with $\lambda = 2$ discussed above. Hence we observe the expected order in $\mathcal{P}$ as stated in \eqref{eq:P_order_for_storage_designs}.

On the other hand, storage designs that favor fewer but occasionally larger overlaps, such as clustering design, use the following rationale: object clusters that overlap at many nodes are not likely to be jointly popular, and popular objects are likely to fall into different clusters. This would support demand vectors in which a few ``active'' objects have a large demand (close to the maximum $d$ that can be supported for a single object) while the rest have a small or negligible demand.
In this case, depending on the degree of difference between the demands of active and non-active objects, allocating large service choice spans to a limited collection of subsets of objects while allocating small spans to others (i.e., fewer but occasionally larger overlaps) might make sense.

In order to understand the performance metric $\mathcal{P}$ with rigor for a given replication factor $d$ and storage design, when $\lambda < d$, so far we have only the upper bound presented in Theorem~\ref{thm:P_upper_bound_for_any_storage_allocation}. We will separately present more accurate expressions and bounds for clustering, cyclic, block, and random design in the following sections.
However, when $\lambda = d$, we can find exact expressions of $\mathcal{P}$ for each storage design, as presented in the following Theorem.
This is because when $\lambda = d$, once an object has non-zero demand (active), it needs to use up all the available capacity ($d \cdot m$) at all the nodes that host the object. As each active object books all its service choices, the system can meet the maximum demand requirement only if the set of active objects does not overlap in their service choices. This allows us to decouple the assignment of individual object demands to nodes while analyzing $\mathcal{P}$.
\begin{theorem}
  When the demand distribution is $m d \times \mathrm{Bernoulli}(p)$, for $d$-choice allocation with
  
  \noindent
  (1) Clustering (assuming $d | n$), cyclic or block design:
  \begin{equation}
    \mathcal{P} = \mathbb{E}_A\left[\I{n - A \cdot c + 1} \cdot \prod_{i=1}^{A - 1} \frac{n - i \cdot c}{n - i}\right],
  \label{eq:P_demand_is_bernoulli_lambda_eq_d_for_clustering_cyclic_block}
  \end{equation}
  where (a) $c = d$ for clustering, (b) $c = 2d - 1$ for cyclic, (c) $c = d^2 - d + 1$ for block design, and $A \sim \mathrm{Binomial}(n, p)$.

  \noindent
  (2) Random design:
  \begin{equation}
    \mathcal{P} = \mathbb{E}_A\left[\I{n - A \cdot d + 1} \cdot \prod_{i=1}^{A - 1} \frac{{n - i \cdot d \choose d}}{{n \choose d}}\right],
  \label{eq:P_demand_is_bernoulli_lambda_eq_d_for_random}
  \end{equation}
  where $A \sim \mathrm{Binomial}(n, p)$.
\label{thm:P_demand_is_bernoulli_lambda_eq_d}
\end{theorem}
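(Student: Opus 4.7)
The plan is to reduce the problem to a combinatorial counting question about \emph{disjoint service-choice sets}. Under the Bernoulli demand model, each object $o_i$ is either \emph{idle} with $\rho_i = 0$ or \emph{active} with $\rho_i = md$. An active object needs to route $md$ units of load across its $d$ service choices, each of which has capacity $m$ under the maximal-load requirement, so every node in $C_i$ of an active $o_i$ must be saturated exactly at level $m$ and can accommodate no other active object. Appealing to \eqref{eq:modified_cap_region_w_service_choice_spans} with $I$ ranging over subsets of the active set $S \subseteq \{o_1, \dots, o_n\}$, the demand vector lies in $\mathcal{C}_m$ if and only if the sets $\{C_i : i \in S\}$ are pairwise disjoint; hence $\mathcal{P}$ equals the probability that the active objects have pairwise disjoint service-choice sets.

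Next I would condition on $A := |S|$, which is $\mathrm{Binomial}(n,p)$. Given $A = a$, the active set is a uniformly random $a$-subset of the $n$ objects, so $\mathcal{P} = \mathbb{E}_A[Q_A]$, where $Q_a$ is the probability that a uniformly random $a$-subset has pairwise disjoint service choices. I would compute $Q_a$ by drawing the $a$ elements one at a time, uniformly from whatever has not yet been drawn, and tracking at each step how many objects are \emph{blocked} by the previously chosen ones (i.e., share a node with at least one of them, or are themselves already chosen).

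For the three structured designs in case (1), I would argue that, as long as the running configuration is still valid, the previously chosen $i$ active objects block a total of exactly $i c$ objects, where $c$ is the single-object ``conflict footprint'': $c = d$ for clustering (the $d$ objects filling its cluster), $c = 2d-1$ for cyclic (the $d-1$ neighbours on each side plus itself), and $c = d^2 - d + 1 = n$ for block design (every other object meets it at exactly one node). The probability that the next draw avoids every blocked object is then $(n - ic)/(n-i)$, and multiplying over $i = 1, \dots, a-1$ yields the product in \eqref{eq:P_demand_is_bernoulli_lambda_eq_d_for_clustering_cyclic_block}; the indicator $\I{n - a c + 1}$ zeros the formula whenever no valid $a$-configuration can exist. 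For the random design the bookkeeping lives at the level of nodes: given $i$ active objects whose service choices occupy $i d$ pairwise disjoint nodes, the next object has $d$ service choices drawn uniformly from $\binom{[n]}{d}$, and these miss every occupied node with probability $\binom{n - i d}{d}/\binom{n}{d}$, which after re-indexing gives \eqref{eq:P_demand_is_bernoulli_lambda_eq_d_for_random}.

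The main obstacle is the ``footprints add'' step in the cyclic case. For clustering the $i$ previously chosen objects lie in distinct clusters, so their footprints are manifestly disjoint; for block design the statement collapses to ``$A \leq 1$'' and there is nothing subtle; and for the random design the footprints of a valid partial configuration occupy genuinely distinct nodes by construction. In cyclic, however, an object at position $j$ blocks the cyclic interval $[j - d + 1, j + d - 1]$, and two such intervals are disjoint only when the positions are at least $2d-1$ apart cyclically, whereas validity of the partial configuration only forces a separation of $d$. I expect to handle this either by restricting attention to the feasibility regime $n \geq A(2d-1)$ in which the blocked intervals must be disjoint---so that $\I{n - A c + 1}$ is exactly the feasibility cutoff---or by an inclusion--exclusion argument that absorbs the overcounting; pinning down why the indicator kills precisely the non-physical regime is the point where the most care is needed.
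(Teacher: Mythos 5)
Your overall route is the same as the paper's: reduce the event $\bm{\rho} \in \mathcal{C}_m$ to ``the active objects have pairwise disjoint service choices,'' condition on $A \sim \mathrm{Binomial}(n,p)$ so that the active set is a uniformly random $A$-subset, and evaluate the disjointness probability by sequential draws, bookkeeping a per-object conflict footprint $c$ at the level of objects for clustering/cyclic/block and at the level of nodes for random design. Your clustering, block, and random cases coincide with the paper's argument (the random case shares the paper's idealization that the $C_i$ are independent uniform $d$-subsets) and are fine.

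The obstacle you flag in the cyclic case is a genuine gap, and neither of your proposed repairs closes it. Restricting to the regime $n \geq A(2d-1)$ does not force the footprints to be disjoint: two active objects at cyclic distance $\delta$ with $d \leq \delta \leq 2d-2$ form a valid (non-overlapping) pair whose blocked intervals intersect, no matter how large $n$ is, so after $i$ valid draws the number of blocked objects can be strictly smaller than $ic$ and the conditional probability for the next draw is configuration-dependent. In fact, no bookkeeping can recover the exact product. Take $n=9$, $d=2$ (so $c=3$) and condition on $A=3$: the number of $3$-subsets of a $9$-cycle with pairwise cyclic distance at least $2$ is $\frac{9}{3}\binom{5}{2}=30$, giving exact conditional probability $30/\binom{9}{3}=10/28$, whereas the product $\frac{9-3}{9-1}\cdot\frac{9-6}{9-2}=9/28$; the indicator does not intervene since $n \geq Ac$ here. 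So the step ``the first $i$ valid active objects exclude exactly $ic$ objects'' is false for cyclic design, and the product is only a lower bound there (each factor assumes worst-case, i.e., maximal, exclusion). Notably, this is precisely the step the paper's own proof asserts without justification, so what you have located is not a routine detail to be patched but a flaw in the claimed equality for the cyclic case: a correct statement would either present the cyclic formula as a bound/approximation or replace it by the exact count of $d$-separated subsets of a cycle, namely $\frac{n}{a}\binom{n-a(d-1)-1}{a-1}\big/\binom{n}{a}$ for $A=a\ge 1$.
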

\begin{proof}
  See Appendix~\ref{subsec:proof_thm_P_demand_is_bernoulli_lambda_eq_d}.
\end{proof}

From \eqref{eq:P_demand_is_bernoulli_lambda_eq_d_for_clustering_cyclic_block}, we can clearly see why $\mathcal{P}_{\mathrm{clustering}} > \mathcal{P}_{\mathrm{cyclic}} > \mathcal{P}_{\mathrm{block}}$ for $D = d$. Each time an object is selected to be active with demand $d$, any object that overlaps with it at any of its service choices must not be active for the system to satisfy the maximal demand requirement. The number of objects overlapping with a given object is given by the variable $c$ in this case, which is largest for block design and smallest for clustering design. Hence the order of $\mathcal{P}$.
The expression given in \eqref{eq:P_demand_is_bernoulli_lambda_eq_d_for_clustering_cyclic_block} can be simplified for clustering design as follows
\begin{equation}
    \mathcal{P} = \mathbb{E}_A\left[{{n / d \choose A}}\bigg/{\binom{n}{A}}\right],
\label{eq:P_demand_is_bernoulli_lambda_eq_d_for_clustering}
\end{equation}

Each storage design requires a fundamentally different approach to analyze $\mathcal{P}$ unless the demand distribution allows for decoupling the assignment of individual object demands -- such as the case presented in Theorem~\ref{thm:P_demand_is_bernoulli_lambda_eq_d}.
The upper bound presented in Theorem~\ref{thm:P_upper_bound_for_any_storage_allocation} applies for any storage design, but it is not sufficiently tight for all storage designs and system parameters of interest.
In the following subsections, we will show that we can find tighter bounds for $\mathcal{P}$, and even exact expressions sometimes, by focusing on individual storage designs. This enables us to use the specialized mathematical tools that fit the characteristics of the service choice overlaps implemented by the storage design under consideration.

\subsection{Clustering Design}
\label{subsec:clustering_design}
We refer the reader to Sec.~\ref{subsec:balanced_d_choice} for the description of the construction of $d$-choice allocation with clustering design. Recall that this design is possible only if $d | n$.

Thanks to its tractable structure, we can find an exact expression of $\mathcal{P}$ for systems with clustering design.

\begin{lemma}
  In $d$-choice allocation with clustering design
  \begin{equation}
    \mathcal{P} = F_d(m \cdot d)^{n / d}.
  \label{eq:P_clustering}
  \end{equation}
\label{lm:P_clustering}
\end{lemma}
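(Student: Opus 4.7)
The plan is to derive the exact expression by exploiting the fact that in clustering design the allocation graph decomposes into $n/d$ disjoint complete bipartite components, so the feasibility conditions factor cleanly across clusters.

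First I would invoke the characterization of the modified capacity region in terms of service choice spans, namely \eqref{eq:modified_cap_region_w_service_choice_spans}, which tells us that $\bm{\rho} \in \mathcal{C}_m$ iff $\sum_{i \in I} \rho_i \leq m \cdot \mathrm{span}(o_i;\, i \in I)$ for every $I \subset \{1,\ldots,n\}$. I would then compute the spans explicitly for clustering design. Index the clusters $1,\ldots,n/d$ and let $K_j$ denote the set of $d$ objects assigned to cluster $j$. Because every object in $K_j$ is stored on exactly the same $d$ nodes, we have $C_i = C_{i'}$ for all $i, i' \in K_j$, and $C_i \cap C_{i'} = \emptyset$ whenever $i, i'$ lie in different clusters. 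Consequently, for any $I \subset \{1,\ldots,n\}$,
\begin{equation*}
\mathrm{span}(o_i;\, i \in I) \;=\; d \cdot \bigl|\{j : I \cap K_j \neq \emptyset\}\bigr|.
\end{equation*}

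Next I would argue that among all constraints in \eqref{eq:modified_cap_region_w_service_choice_spans}, only the $n/d$ ``full cluster'' constraints are binding. For a subset $I$ touching clusters $j_1,\ldots,j_s$, the corresponding constraint reads $\sum_{i \in I} \rho_i \leq m \cdot d \cdot s$. Since demands are non-negative,
\begin{equation*}
\sum_{i \in I} \rho_i \;\leq\; \sum_{\ell=1}^{s} \sum_{i \in K_{j_\ell}} \rho_i,
\end{equation*}
so if each of the $s$ full-cluster constraints $\sum_{i \in K_{j_\ell}} \rho_i \leq m \cdot d$ holds, the composite constraint follows automatically. Therefore $\bm{\rho} \in \mathcal{C}_m$ iff $\sum_{i \in K_j} \rho_i \leq m \cdot d$ for every cluster $j = 1,\ldots,n/d$.

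Finally, since the object demands $\rho_i$ are i.i.d.\ by the offered-load model of Sec.~\ref{subsec:offered_load} and the clusters are disjoint index sets, the $n/d$ cluster events are mutually independent and each has probability $\Prob{\rho_1 + \cdots + \rho_d \leq m \cdot d} = F_d(m \cdot d)$. Multiplying yields $\mathcal{P} = F_d(m \cdot d)^{n/d}$, which is \eqref{eq:P_clustering}. There is no real obstacle here; the only step requiring a bit of care is the reduction argument showing that full-cluster constraints dominate all others, but this follows immediately from the non-negativity of demands and the product structure of the spans.
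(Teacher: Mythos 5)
Your proof is correct and follows essentially the same route as the paper: both decompose the system into the $n/d$ disjoint, identical clusters, observe that within a cluster feasibility reduces to the single condition $\rho_{i_1}+\cdots+\rho_{i_d} \leq m\cdot d$, and use independence of the i.i.d.\ demands across clusters to obtain $F_d(m\cdot d)^{n/d}$. The only difference is cosmetic: you make the reduction explicit through the span characterization \eqref{eq:modified_cap_region_w_service_choice_spans} and a constraint-dominance argument, whereas the paper argues directly that each cluster acts as an independent sub-system whose capacity is $m\cdot d$.
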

\begin{proof}
  See Appendix~\ref{subsec:proof_lm_P_clustering}.
\end{proof}

Expression \eqref{eq:P_clustering} enables us to evaluate $\mathcal{P}$ for varying values of $d$. Using \eqref{eq:P_clustering}, it is straightforward to conclude that $\mathcal{P}$ grows faster than polynomial $x^{1/d}$, i.e., $\mathcal{P}_{1} < \mathcal{P}_{d}^{d}$ where $\mathcal{P}_{d}$ denotes $\mathcal{P}$ for $d \geq 1$. Tighter results can be derived for certain distributions of $\rho_i$. For instance, $\mathcal{P}$ grows faster than $x^{1/d^2}$ for $\rho_i \sim \mathrm{Exp}$.
This is demonstrated by the $\mathcal{P}$ vs. $d$ curves plotted in Fig.~\ref{fig:P_clustering_vs_d} for $\rho_i \sim \mathrm{Exp}$.

\begin{figure}[hbt]
  \centering
    \includegraphics[width=.4\textwidth]{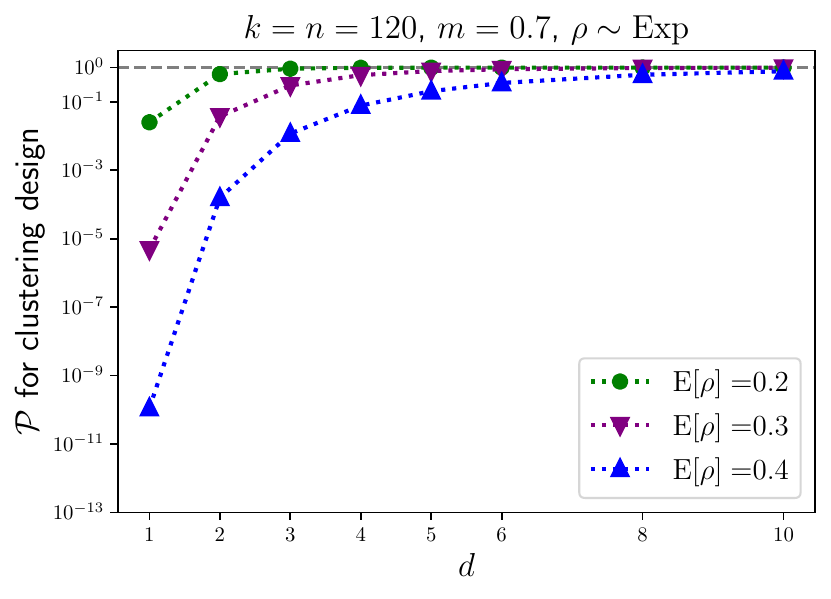}
  \caption{$\mathcal{P}$ vs replication factor $d$ for clustering design.}
\label{fig:P_clustering_vs_d}
\end{figure}

Expression \eqref{eq:P_clustering} also enables us to find useful bounds for $\mathcal{P}$.
For instance, using Chernoff bound, we find the following
\begin{equation}
    \mathcal{P} \geq \sup_{s > 0} \left[1 - \exp\left(-d \left(m \cdot s- \ln(\Phi_{\rho}(s))\right)\right) \right]^{n / d},
\label{eq:P_clustering_upper_bound}
\end{equation}
where $\Phi_{\rho}(s)$ denotes the moment generating function for $\rho_i$.
For instance, if $\rho_i \sim \mathrm{Exp}(\mu)$, \eqref{eq:P_clustering_upper_bound} would be given as
\[
    \mathcal{P} \geq \left[1 - \exp\left(-d \left(m\mu - \ln(m\mu) - 1\right)\right) \right]^{n / d}.
\]

Using \eqref{eq:P_clustering} and concentration inequalities, we can find not only a lower bound for $\mathcal{P}$ but also an upper bound for more general demand distributions.
\begin{theorem}
  Suppose that object demands $\rho_i$ are sub-gaussian in the sense that there exists constants $c, C > 0$ such that
  \[
    \exp(c t^2) \leq \phi_{\rho_i}(t) \leq \exp(C t^2), \qquad \forall t > 0.
  \]
  Then, in a $d$-choice storage allocation constructed with clustering design, there exists constants $\alpha, \beta, \gamma > 0$ such that
  \begin{equation}
  \begin{split}
    \mathcal{P} &\geq \left(1 - \gamma \cdot \exp\left(-d \cdot \beta(m - \mu)^2\right)\right)^{n/d} \\
    \mathcal{P} &\leq \left(1 - \exp\left(-d \cdot \alpha(m - \mu)^2\right)\right)^{n/d},
  \end{split}
  \label{eq:P_bounds_for_clustering}
  \end{equation}
  where $\mu = \Exp{\rho_i}$ and $m > \mu$.
\label{thm:P_bounds_for_clustering}
\end{theorem}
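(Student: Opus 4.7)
The plan is to reduce both inequalities to two-sided tail estimates for the sum $S_d := \rho_1 + \cdots + \rho_d$, and then raise the result to the power $n/d$. By Lemma~\ref{lm:P_clustering}, $\mathcal{P} = F_d(md)^{n/d}$ with $F_d(md) = 1 - \Pr(S_d > md)$, so the two claims are equivalent to the sandwich
\begin{equation*}
\exp\bigl(-d\alpha(m-\mu)^2\bigr) \;\leq\; \Pr(S_d \geq md) \;\leq\; \gamma\exp\bigl(-d\beta(m-\mu)^2\bigr).
\end{equation*}
Thus the task is to produce matching upper- and lower-tail bounds at the sub-gaussian rate, driven respectively by the upper and lower MGF hypotheses.

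For the upper tail (yielding the lower bound on $\mathcal{P}$) I would run a standard Chernoff argument. After centering, the hypothesis $\phi_{\rho_i}(t) \leq \exp(Ct^2)$ gives a sub-gaussian upper bound $\phi_{\rho_i - \mu}(t) \leq \exp(C't^2)$ for every $t>0$ with some $C'$ depending only on $C$ and $\mu$ (the $e^{-\mu t}$ factor contributes a polynomial-in-$t$ correction absorbed by the exponential; this is where the multiplicative constant $\gamma$ is born). Independence gives $\phi_{S_d - d\mu}(t) \leq \exp(dC't^2)$, and optimizing $\exp(dC't^2 - td(m-\mu))$ over $t>0$ delivers the exponent $\beta = 1/(4C')$.

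For the matching lower bound on the upper tail (yielding the upper bound on $\mathcal{P}$), the tool is exponential tilting, i.e., a non-asymptotic Cram\'er-style argument. Define the tilted law by $d\tilde{\Pr}_\lambda / d\Pr \propto \exp(\lambda S_d)$ and pick $\lambda^*>0$ so that the tilted mean $d\psi'(\lambda^*)$ equals $md$, where $\psi(\lambda) := \log \phi_{\rho_i}(\lambda)$. The identity
\begin{equation*}
\Pr(S_d \geq md) \;=\; \phi_{\rho_i}(\lambda^*)^d \cdot \tilde{\mathbb{E}}\bigl[\exp(-\lambda^* S_d)\,\mathbbm{1}(md \leq S_d \leq md + \Delta)\bigr]
\end{equation*}
is bounded below by $\exp(d\psi(\lambda^*) - \lambda^*(md+\Delta)) \cdot \tilde{\Pr}_{\lambda^*}(md \leq S_d \leq md + \Delta)$. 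The lower bound $\phi_{\rho_i}(t) \geq \exp(ct^2)$ gives, through Legendre duality, $\psi^*(m) := \sup_t[tm - \psi(t)] \leq (m-\mu)^2/(4c)$, so the exponential factor is at least $\exp(-d(m-\mu)^2/(4c) - \lambda^*\Delta)$, and the natural choice $\Delta = \Theta(\sqrt{d})$ keeps the $\lambda^*\Delta$ loss at $O(\sqrt{d})$, which can be absorbed by enlarging $\alpha$ slightly beyond $1/(4c)$.

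The main obstacle is the last ingredient of the previous paragraph: establishing $\tilde{\Pr}_{\lambda^*}(md \leq S_d \leq md + \Delta) = \Omega(1)$ for $\Delta = \Theta(\sqrt{d})$. Under the tilt, $S_d$ has mean $md$ and variance $d\psi''(\lambda^*)$, and $\psi''(\lambda^*)$ is bounded above by a constant multiple of the upper sub-gaussian constant $C'$ (by bounding the second derivative of $\log \phi$ via Cauchy--Schwarz applied to the tilted law). A one-sided Chebyshev/Paley--Zygmund estimate then deposits $\Omega(1)$ tilted mass in the window of width $\Theta(\sqrt{d})$ just above the tilted mean $md$. Substituting both tail bounds into $\mathcal{P} = F_d(md)^{n/d}$ and adjusting constants yields the stated inequalities.
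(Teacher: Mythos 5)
Your reduction is the same as the paper's: by Lemma~\ref{lm:P_clustering}, everything hinges on two-sided bounds for the tail $\Pr(S_d \geq md)$ of the i.i.d.\ sum $S_d=\rho_1+\cdots+\rho_d$ at deviation $(m-\mu)d$, raised to the power $n/d$. Where you diverge is in how those tail bounds are obtained: the paper centers the demands and then cites Theorem~3 of \cite{NonAsymptoticLowerTailBounds:ZhangZ20}, which directly supplies the sandwich $\exp(-\alpha x^2/d) \leq 1-\tilde{F}_d(x) \leq \gamma\exp(-\beta x^2/d)$ for all $x\geq 0$, while you propose to re-prove that sandwich from scratch, Chernoff for the upper tail and a Cram\'er-type change of measure for the matching lower bound. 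That is a legitimate, more self-contained route, and your Chernoff half is essentially fine (note only that $e^{-\mu t}\leq 1$ for $t>0$, so centering costs nothing there and is not where $\gamma$ comes from; $\gamma\geq 1$ is simply free slack in the lower bound on $\mathcal{P}$).

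Two points in the tilting half need repair. First, the Legendre step: from $\psi(t)\geq ct^2$ alone you only get $\psi^*(m)\leq \sup_t[tm-ct^2]=m^2/(4c)$, not $(m-\mu)^2/(4c)$; to obtain the $(m-\mu)^2$ exponent you must apply the MGF lower bound to the \emph{centered} demands (equivalently use $\psi(t)\geq \mu t + ct^2$), exactly as the paper does before invoking the cited theorem -- otherwise your $\alpha$ is forced to depend on $m$ through $m^2/(m-\mu)^2$, defeating the point of displaying the $(m-\mu)^2$ dependence. Second, and more substantively, the step you yourself flag as the main obstacle fails as stated: bounded tilted variance plus one-sided Chebyshev or Paley--Zygmund does not give $\tilde{\Pr}_{\lambda^*}(md \leq S_d \leq md+\Delta)=\Omega(1)$, because a sum can place nearly all of its mass just below its mean; Cantelli only controls deviations below the mean, and Paley--Zygmund only yields mass above $\theta\,\E{S_d}$ with $\theta<1$. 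The standard fixes are either (i) tilt slightly further, choosing $\lambda$ so the tilted mean is $md+\Theta(\sqrt{d})$, after which Chebyshev/Cantelli do give $\Omega(1)$ mass in $[md, md+\Theta(\sqrt{d})]$ at an extra exponential cost of only $O(\sqrt{d})$, or (ii) use Berry--Esseen (third moments are controlled by the sub-gaussian upper bound) for $d$ large and handle small $d$ separately. With either repair your argument goes through and reproves the inequality the paper imports from \cite{NonAsymptoticLowerTailBounds:ZhangZ20}; note also that the existence of the tilt parameter $\lambda^*$ itself relies on the MGF lower bound, which rules out bounded demands -- for which the claimed upper bound on $\mathcal{P}$ would in fact be false.
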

\begin{proof}
  See Appendix~\ref{subsec:proof_thm_P_bounds_for_clustering}.
\end{proof}

\begin{remark}
  Even though we stated the bounds given in \eqref{eq:P_bounds_for_clustering} for sub-gaussian demand distributions, we can get bounds in the same form for demand distributions with heavier tail such as sub-exponential distributions.
  These bounds show that $\mathcal{P}$ scales with $d$, $n$ and $m$ as
  \begin{equation}
    \mathcal{P} \;\myapprox\; \left(1 - \exp\left(-d \cdot K(m - \mu)^2\right)\right)^{n/d}
  \label{eq:P_scaling_with_d_n_m}
  \end{equation}
  for some constant $K > 0$.
  Most importantly, this quantifies the performance gain we can achieve in $\mathcal{P}$ by increasing the replication factor $d$.

  First, \eqref{eq:P_bounds_for_clustering} validates our intuition that incrementing $d$ yields diminishing return in increasing $\mathcal{P}$.
  This can be seen visually in the example plots shown in Fig.~\ref{fig:P_clustering_vs_d}.
  When $d \cdot K(m - \mu)^2 < 1$, we can use Taylor expansion to approximate $\mathcal{P}$ as
  \begin{equation}
    \mathcal{P} \approx \left(d \cdot K(m - \mu)^2\right)^{n/d}.
  \label{eq:P_approx_small_d_for_clustering}
  \end{equation}

  Second, \eqref{eq:P_bounds_for_clustering} shows that setting $d$ close to $\log(n)$ helps us increase $\mathcal{P}$ substantially.
  This can also be seen in Fig.~\ref{fig:P_clustering_vs_d}.
  Recall from \eqref{eq:P_scaling_with_d_n_m} that, for $d = 1$ and $M = K(m - \mu)^2$, we have
  \[
    \mathcal{P} \;\myapprox\; \left(1 - \exp\left(-M\right)\right)^{n}
  \]
  Suppose that $n = 1000$ and $M = 5$. Setting $d = 1$ then gives $\mathcal{P} \approx 0.001$.
  By setting $d = \log(n)$, we raise $\mathcal{P}$ up to 
  \[
    \left(1 - n^{-M}\right)^{n/\log(n)}.
  \]
  When $d = 8$, which is the smallest $d$ such that $d \geq \log(n)$ and $d | n$, we get $\mathcal{P} \approx 1$. Thus, in this case, it does not make sense to increment the replication factor $d$ beyond $\log(n)$.
  \closeremark
\end{remark}

Using \eqref{eq:P_scaling_with_d_n_m}, we can also study the behavior of $\mathcal{P}$ in the limit as the replication factor $d$ scales with the system scale $n$.
\begin{corollary}
    In $d$-choice allocation with clustering design
    \begin{equation}
    \mathcal{P} \to
    \begin{cases} 
      1 & d = \Omega(\log(n)) \\
      0 & \text{otherwise}
    \end{cases}
    \quad \text{as } n \to \infty.
    \label{eq:P_limit_for_clustering}
    \end{equation}
\label{cor:P_for_clustering_as_d_to_infty}
\end{corollary}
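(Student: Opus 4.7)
The plan is to derive the asymptotic statement directly from the upper and lower bounds for $\mathcal{P}$ given in Theorem~\ref{thm:P_bounds_for_clustering}. Both bounds have the form $(1 - e^{-d K})^{n/d}$ for appropriate constants, so the question reduces to understanding when the inner probability $e^{-d K}$ is small enough relative to $n/d$ for $(1 - e^{-d K})^{n/d}$ to converge to $1$ or $0$. I will take logarithms of both bounds and apply the standard estimate $-2x \leq \log(1 - x) \leq -x$ valid for $x \in (0, 1/2)$.

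For the sufficient direction, suppose $d \geq c \log n$ with $c$ chosen large enough that $c \beta (m-\mu)^2 > 1$, so that $\gamma e^{-d\beta(m-\mu)^2} \leq \gamma n^{-c\beta(m-\mu)^2}$ tends to $0$. Then by the lower bound in \eqref{eq:P_bounds_for_clustering},
\begin{equation*}
    \log \mathcal{P} \;\geq\; \frac{n}{d}\log\!\left(1 - \gamma e^{-d\beta(m-\mu)^2}\right) \;\geq\; -2\gamma\,\frac{n}{d}\,e^{-d\beta(m-\mu)^2}.
\end{equation*}
Substituting $d \geq c\log n$ yields a bound of order $n^{1 - c\beta(m-\mu)^2}/\log n$, which tends to $0$. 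Hence $\log \mathcal{P} \to 0$ and $\mathcal{P} \to 1$.

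For the necessary direction, suppose $d = o(\log n)$. Using the upper bound in \eqref{eq:P_bounds_for_clustering} and $\log(1 - x) \leq -x$,
\begin{equation*}
    \log \mathcal{P} \;\leq\; \frac{n}{d}\log\!\left(1 - e^{-d\alpha(m-\mu)^2}\right) \;\leq\; -\frac{n}{d}\,e^{-d\alpha(m-\mu)^2}.
\end{equation*}
Since $d\alpha(m-\mu)^2 = o(\log n)$, for any $\epsilon > 0$ we have $e^{-d\alpha(m-\mu)^2} \geq n^{-\epsilon}$ for all sufficiently large $n$. Choosing $\epsilon = 1/2$ gives $\log \mathcal{P} \leq -n^{1/2}/d$, which tends to $-\infty$ as long as $d = o(n^{1/2})$; since $d = o(\log n)$ this is automatic. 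Therefore $\mathcal{P} \to 0$.

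I do not expect serious obstacles: the heavy lifting has already been done in Theorem~\ref{thm:P_bounds_for_clustering}, and what remains is a routine logarithmic asymptotic comparison. The only point requiring mild care is the implicit meaning of ``$\Omega(\log n)$'' in the statement — the sufficiency argument above needs the constant in front of $\log n$ to exceed $1/(\beta(m-\mu)^2)$, so the statement should be read as ``there exists a constant $c_0 > 0$ such that $d \geq c_0 \log n$ suffices,'' while the necessity part covers all $d = o(\log n)$, leaving a gap of multiplicative constants in the critical $d = \Theta(\log n)$ regime that the bounds in \eqref{eq:P_bounds_for_clustering} do not resolve.
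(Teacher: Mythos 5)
Your argument is correct and follows essentially the same route as the paper: the paper's proof also deduces the limit directly from the bounds in Theorem~\ref{thm:P_bounds_for_clustering}, simply substituting $d = c\log(n)^\alpha$ and asserting that both bounds tend to $1$ for $\alpha \geq 1$ and to $0$ for $0 < \alpha < 1$. Your closing caveat is in fact a genuine refinement: for $d = c\log n$ with $c\,\alpha(m-\mu)^2 < 1$ the upper bound in \eqref{eq:P_bounds_for_clustering} tends to $0$, so the ``$d = \Omega(\log n)$'' sufficiency only holds with a sufficiently large leading constant (determined by the sub-gaussian rate), a point the paper's two-line proof glosses over.
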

\begin{proof}
  See Appendix~\ref{subsec:proof_cor_P_for_clustering_as_d_to_infty}.
\end{proof}

\begin{remark}
  The limit in \eqref{eq:P_limit_for_clustering} shows that, in systems with clustering design, replication factor $d$ needs to grow at least as fast as $\log(n)$ to meet the maximal load requirement as the system scale $n$ gets larger.
  In the next subsection, we show that the same observation holds for systems with cyclic design.
  \closeremark
\end{remark}

\subsection{Scan Statistics Interlude}
\label{subsec:scan_stats}
We here define and discuss \emph{scan statistics}.
It is a mathematical object that is instrumental while deriving our results on $\mathcal{P}$ for cyclic, block and random designs.
We make frequent use of scan statistic defined over the sequence of object demands. 
We also present an upper bound on $\mathcal{P}$ for $d$-choice allocation with any storage design in terms of scan statistic.

\begin{definition}
  The \underline{$s$-scan} of $n$ i.i.d. random variables $X_i$ is defined as the sequence of random variables
  \begin{equation*}
      S_{s, i} = \sum_{j=i}^{i + s - 1} X_i \quad \text{for } i \in [1, n - s + 1].
  \end{equation*}
  When $i \in [1, n]$ and $X_i = X_{i \mod n}$, we refer to the scanning sequence as the \underline{circular $s$-scan} and denote it with $S_{s, i}^{(c)}$.

  The \underline{scan statistic} is defined as the maximum scan as $S_s = \max_{1 \leq i \leq n - s + 1} S_{s, i}$.
  Similarly the \underline{circular scan statistic} is defined as $S_s^{(c)} = \max_{1 \leq i \leq n} S_{s, i}^{(c)}$.
\label{def:scan_statistic}
\end{definition}

The circular scan statistic $S^{(c)}_s$ converges to its non-circular counterpart $S_s$ almost surely as the number of random samples $n \to \infty$ (see Appendix~\ref{subsec:circular_scan} for the proof). We make use of this observation while we analyze $\mathcal{P}$ for cyclic design in this section.
In the remainder of the paper, the statistics $S_s$ and $S_s^{(c)}$ are defined over the sequence of object demands $\rho_i$.

Main utility of the scan statistic is that we can find bounds on $\mathcal{P}$ in terms of $S_s^{(c)}$ as shown in this section. This allows applying the results available on scan statistic to derive insight on system performance.
Scan statistics has many applications in different fields. We here present a new application of scan statistics in the context of distributed storage performance.
For a thorough exposition to scan statistics literature, we refer the reader to \cite{ApproxForScanStatDist:Naus82, LimitLawsOfMovingSum:DeheuvelsD87, PoissonApproxForScanStat:DemboK92, ScanStats:GlazNW01, ExtremeValueMethodsWithAppsToFinance:Novak11}.
We here discuss only some of the results on scan statistics which helps us analyze $\mathcal{P}$.

Using the idea introduced in Sec.~\ref{subsec:calculating_P_for_given_storage_allocation}, we can find an upper bound on $\mathcal{P}$ for any $d$-choice allocation by focusing on only the demand-capacity conditions defined on the consecutive objects of a fixed length. We can express such an upper bound in terms of scan statistic as given in the following.
\begin{lemma}
  In $d$-choice storage allocation, we have
  \begin{equation}
      \mathcal{P} \leq \min_{1 \leq s \leq n}\Prob{S^{(c)}_s \leq s \cdot md}.
  \label{eq:P_upper_bound_for_any_design}
  \end{equation}
\label{lm:P_upper_bound_for_any_design}
\end{lemma}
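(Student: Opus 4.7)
The plan is to exploit the necessary conditions implicit in the modified capacity region $\mathcal{C}_m$ from \eqref{eq:modified_cap_region_w_service_choice_spans}, discard all but a structured family of them (indexed by $s$-sized windows of consecutive objects), and observe that what remains is exactly a statement about the circular scan statistic of the demand sequence.

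First, I would recall that $\bm{\rho} \in \mathcal{C}_m$ requires $\sum_{i \in I}\rho_i \leq m \cdot \mathrm{span}(o_i;\,i \in I)$ for \emph{every} subset $I \subset \{1, \dots, n\}$. Since every object has exactly $d$ service choices in a $d$-choice allocation, the union bound on $\mathrm{span}$ yields $\mathrm{span}(o_i;\,i \in I) \leq |I|\cdot d$. Consequently, a weaker but still necessary condition for $\bm{\rho} \in \mathcal{C}_m$ is
\begin{equation*}
  \sum_{i \in I}\rho_i \;\leq\; m \cdot |I| \cdot d \qquad \text{for every } I \subset \{1, \dots, n\}.
\end{equation*}

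Next I would fix an integer $s \in \{1, \dots, n\}$ and restrict attention to the $n$ subsets of size $s$ corresponding to circular windows, $I_i = \{i, i+1, \dots, i+s-1\} \pmod{n}$ for $i = 1, \dots, n$. For each such window the condition above reads $S^{(c)}_{s, i} \leq s \cdot md$, so that $\bm{\rho} \in \mathcal{C}_m$ entails $S^{(c)}_{s, i} \leq s \cdot md$ simultaneously for all $i$, which by definition of the circular scan statistic is just $S^{(c)}_s \leq s \cdot md$.

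Taking probabilities gives $\mathcal{P} = \Prob{\bm{\rho} \in \mathcal{C}_m} \leq \Prob{S^{(c)}_s \leq s \cdot md}$ for each fixed $s$, and minimizing the right-hand side over $s$ yields the claimed bound. There is no real obstacle here; the only subtle point worth highlighting is that the inequality is only a bound (not an equality) because we have dropped (i) all non-window subsets and (ii) the gap between $\mathrm{span}(\cdot)$ and its crude upper bound $sd$ — which is precisely the slack that later, design-specific analyses in the paper will tighten.
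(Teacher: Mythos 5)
Your proposal is correct and follows essentially the same route as the paper's proof: both relax the span-based necessary conditions of $\mathcal{C}_m$ using the crude bound $\mathrm{span}(o_i;\,i \in I) \leq |I| \cdot d$, restrict to circular windows of $s$ consecutive objects to recognize the circular scan statistic, and then minimize over $s$. Your write-up is, if anything, slightly more explicit about the union-bound step and about why the inequality is strict-in-spirit (dropped subsets and span slack).
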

\begin{proof}
  See Appendix~\ref{subsec:proof_lm_P_upper_bound_for_any_design}.
\end{proof}

In Corollary~\ref{cor:P_upper_bound_for_any_design_insightful} below, we present several results on $\mathcal{P}$ that we build on \eqref{eq:P_upper_bound_for_any_design}. We do not present the proofs for these statements as their proofs closely follow the statements and proofs presented for cyclic design in the next sub-section. 
We state them here to demonstrate what scan statistics offers to understand performance of $d$-choice allocation with any storage design.
For instance, \eqref{eq:P_convergence_for_any_design} given below shows that $d$ must scale at least as fast as $\log(n)^{1/3}$ as the system scale $n$ gets larger.
\begin{corollary}
  By substituting $s = d$ in \eqref{lm:P_upper_bound_for_any_design}, we obtain a more insightful version of the upper bound as
  \begin{equation}
    \mathcal{P} \leq \Prob{S_d^{(c)} \leq m \cdot d^2}.
  \label{eq:P_upper_bound_for_any_design_insightful}
  \end{equation}
  Using this, we find that in the limit $n \to \infty$
    \begin{equation}
      \mathcal{P} \leq 
      \exp\left(-w_{n, d} \; Q_d(m \cdot d^2)\right),
    \label{eq:P_asymptotic_upper_bound_for_any_design}
    \end{equation}
    where $w_{n, d} = n - d + 1$ and $Q_d(x) = 1 - F_d(x)$. 
    This implies the following in the limit $n \to \infty$
    \begin{equation}
    \begin{split}
    \mathcal{P} &\leq \exp\left(-w_{n, d} \cdot \exp\left(-d \cdot \alpha(md - \mu)^2\right)\right).
    \label{eq:P_asymptotic_upper_bound_for_any_design_insightful}
    \end{split}
    \end{equation}
    This finally implies that as $n \to \infty$
    \begin{equation}
        \mathcal{P} \to 0 \;\;\text{if}\;\; d = o\left(\log(n)^{1/3}\right).
    \label{eq:P_convergence_for_any_design}
    \end{equation}
\label{cor:P_upper_bound_for_any_design_insightful}
\end{corollary}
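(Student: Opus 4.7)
The plan is to prove the four inequalities in order, with the asymptotic threshold of Step 4 following from the explicit bound of Step 3. \textbf{Step 1} (deriving \eqref{eq:P_upper_bound_for_any_design_insightful}) is immediate: substituting $s = d$ into the minimum on the right-hand side of Lemma~\ref{lm:P_upper_bound_for_any_design} turns $s \cdot m d$ into $m d^2$, giving $\mathcal{P} \le \Prob{S_d^{(c)} \le m d^2}$.

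\textbf{Step 2} (deriving \eqref{eq:P_asymptotic_upper_bound_for_any_design}) is the main work. I would first invoke the almost-sure convergence $S_d^{(c)} \to S_d$ (Appendix~\ref{subsec:circular_scan}) to reduce to the non-circular scan. Letting $N$ denote the number of length-$d$ windows whose sum exceeds $m d^2$, one has $\E[N] = w_{n,d} Q_d(m d^2)$ and $\Prob{S_d \le m d^2} = \Prob{N = 0}$. The plan is then to apply the Chen--Stein Poisson approximation to $N$: because each window shares at most $d - 1$ summands with any other, the local dependency neighborhoods have controlled size, and the coupling/compensator terms in Chen--Stein vanish asymptotically once the threshold is sufficiently deep in the tail. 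This yields $\Prob{N = 0} \le \exp(-w_{n,d} Q_d(m d^2))(1 + o(1))$, which is the claimed asymptotic upper bound. I expect this step to be the main obstacle, since promoting the Poisson approximation from an equivalence to an asymptotic inequality of the exact stated form requires careful handling of the Chen--Stein error terms in the regime where $Q_d(m d^2)$ decays exponentially in $d$.

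\textbf{Step 3} (deriving \eqref{eq:P_asymptotic_upper_bound_for_any_design_insightful}) amounts to lower-bounding $Q_d(m d^2)$. Since $Q_d(m d^2) = \Prob{(\rho_1 + \cdots + \rho_d)/d > m d}$ and $m d > \mu$ for all sufficiently large $d$, I would reuse the sub-gaussian MGF lower bound $\phi_{\rho_i}(t) \ge \exp(c t^2)$ assumed in Theorem~\ref{thm:P_bounds_for_clustering}. A standard Cramér change-of-measure argument then yields $Q_d(m d^2) \ge \exp(-d \alpha (m d - \mu)^2)$ for some $\alpha > 0$, mirroring the lower-bound half of Theorem~\ref{thm:P_bounds_for_clustering}. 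Substituting into \eqref{eq:P_asymptotic_upper_bound_for_any_design} and using monotonicity of $x \mapsto \exp(-w_{n,d} x)$ gives \eqref{eq:P_asymptotic_upper_bound_for_any_design_insightful}. Finally, for \textbf{Step 4}, taking $-\log$ on both sides reduces the convergence $\mathcal{P} \to 0$ to showing that $(n - d + 1) \exp(-d \alpha (m d - \mu)^2) \to \infty$. Since $(m d - \mu)^2 = \Theta(d^2)$ for large $d$, the exponent inside the inner exponential is $\Theta(d^3)$, so the condition becomes $\log(n - d + 1) - \Theta(d^3) \to \infty$, which holds precisely when $d = o(\log(n)^{1/3})$.
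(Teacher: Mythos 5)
Your proposal is correct and follows essentially the same route as the paper, which proves this corollary by mirroring the cyclic-design chain: the scan-statistic upper bound with $s=d$, the Poisson limit $\Prob{S_d \leq x} \to \exp\left(-w_{n,d}\,Q_d(x)\right)$ (together with the circular-to-non-circular reduction), the sub-gaussian lower bound $Q_d(md^2) \geq \exp\left(-d\,\alpha(md-\mu)^2\right)$, and the final $d^3$-versus-$\log(n)$ comparison. The only difference is that where you propose to re-derive the Poisson approximation via Chen--Stein and control its error terms yourself, the paper simply invokes the known Poisson-approximation theorem for scan statistics (Theorem~2 of \cite{PoissonApproxForScanStat:DemboK92}, under the hypothesis $\rho_i > 0$), so your acknowledged ``main obstacle'' is handled by citation rather than by hand.
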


\subsection{Cyclic Design}
\label{subsec:cyclic_design}
We refer the reader to Sec.~\ref{subsec:balanced_d_choice} for the description of the construction of $d$-choice allocation with cyclic design, and its generalization $r$-gap design. As discussed previously, $r$-gap design controls the service choice overlaps/spans loosely by a single parameter $r$. Lemma~\ref{lm:on_rgap} below makes this statement concrete.
Throughout this section, object $o_i$ implicitly denotes $o_{i \mod n}$ throughout.

\begin{lemma}
  In $d$-choice allocation with $r$-gap design, for a set of consecutive objects $S = \{o_i, o_{i+1}, \dots, o_{i+x-1}\}$ where $i, x \in [1, n]$, we have $r \geq d-1$ and $x \leq \mathrm{span}(S) \leq x + 2r$.
\label{lm:on_rgap}
\end{lemma}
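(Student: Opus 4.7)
The plan is to prove the three assertions separately. The bound $r \geq d-1$ follows from analyzing the $d$ objects that share any one node, while the span bounds come from double-counting object-node incidences and exploiting the disjointness guaranteed by the $r$-gap property.

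For $r \geq d-1$, fix any node $v$. Every pair of the $d$ distinct objects stored on $v$ shares $v$ as a common service choice, so by Def.~\ref{def:rgap_design} their indices lie within cyclic distance $r$. The claim reduces to a geometric statement on $\mathbb{Z}/n\mathbb{Z}$: any $d$ distinct points that are pairwise within cyclic distance $r$ fit in a cyclic arc of at most $r+1$ positions. I would prove this by ordering the points cyclically as $p_1,\ldots,p_d$ with clockwise gaps $g_1,\ldots,g_d$ summing to $n$, and observing that each gap satisfies either $g_j \leq r$ or $g_j \geq n-r$ (since the two points at its endpoints are themselves within cyclic distance $r$). When $n > 2r$, two gaps of size $\geq n-r$ cannot coexist, so at most one such gap exists; the remaining $d-1$ gaps then sum to at most $r$, which confines all $d$ points to an arc of length $\leq r$ and forces $d \leq r+1$.

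For $\mathrm{span}(S) \geq x$, I would count object-node incidences: the $x$ objects of $S$ contribute $xd$ incidences on the nodes in $\bigcup_{j=i}^{i+x-1} C_j$, and since each node stores at most $d$ objects in total it receives at most $d$ incidences, giving $|\bigcup_{j=i}^{i+x-1} C_j| \geq x$. For $\mathrm{span}(S) \leq x+2r$, I would exhibit a large set of nodes lying outside this union. Let $T$ be the set of indices $j$ whose cyclic distance to every element of $\{i,i+1,\ldots,i+x-1\}$ strictly exceeds $r$; assuming $n > x+2r$, one has $|T| = n - x - 2r$. The $r$-gap property gives $C_j \cap C_k = \emptyset$ for every $j \in T$ and $k \in \{i,\ldots,i+x-1\}$, so $\bigcup_{j \in T} C_j$ is disjoint from $\bigcup_{k=i}^{i+x-1} C_k$. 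Applying the same incidence-counting bound to $T$ yields $|\bigcup_{j \in T} C_j| \geq n - x - 2r$, whence $\mathrm{span}(S) \leq n - (n - x - 2r) = x + 2r$.

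The anticipated obstacle is the geometric lemma behind $r \geq d-1$: the naive approach of picking one of the $d$ points as a ``center'' and noting that the others lie in an arc of length $2r$ only yields $d \leq 2r+1$, which is strictly weaker than the desired bound. The cyclic gap dichotomy --- each gap is either short ($\leq r$) or long ($\geq n-r$), and two long gaps cannot coexist when $n > 2r$ --- is what sharpens the bound to $d \leq r+1$.
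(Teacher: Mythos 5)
The paper states Lemma~\ref{lm:on_rgap} without any proof (there is no corresponding appendix section), so there is no argument of record to compare you against; judged on its own, your treatment of the two span bounds is correct. The incidence count (each of the $x$ objects has $d$ copies, and by the balancedness convention each node hosts exactly $d$ of the $n$ objects) gives $\mathrm{span}(S)\ge x$, and the complementary count is a nice way to get the upper bound: every object at cyclic distance greater than $r$ from the block has service choices disjoint from $\mathrm{span}(S)$ by Def.~\ref{def:rgap_design}, those objects alone occupy at least $n-x-2r$ nodes, hence $\mathrm{span}(S)\le x+2r$ (and the bound is trivial when $n\le x+2r$, a case you should state rather than assume away).

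The genuine gap is in the proof of $r\ge d-1$. The reduction is right: the $d$ objects stored on a common node pairwise overlap, hence are pairwise within cyclic distance $r$. But your gap dichotomy only disposes of configurations that contain a ``long'' gap ($\ge n-r$): the sentence ``at most one such gap exists; the remaining $d-1$ gaps then sum to at most $r$'' presumes a long gap is present, and says nothing when all $d$ gaps are $\le r$. That case is not vacuous, and under the hypothesis you actually state ($n>2r$) the geometric claim is even false: for $n=2r+1$ all of $\mathbb{Z}_n$ is pairwise within distance $r$, has no long gap, and has $2r+1>r+1$ points. For $2r+2\le n\le 3r$ spread configurations with no long gap still exist (e.g.\ $\{0,4,8\}\subset\mathbb{Z}_{10}$ with $r=4$), so they must be bounded by a separate argument rather than ignored. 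The repair is two-tiered: when $n>3r$, a jump of the partial sums from $[0,r]$ to $[n-r,n-1]$ would require a single gap of size at least $n-2r>r$, so ``no long gap'' forces all $d$ indices into an arc of length $r$ and $d\le r+1$ follows; when $2r+2\le n\le 3r$ one needs the genuinely less trivial fact that $d$ points of $\mathbb{Z}_n$ pairwise within distance $r$ number at most $r+1$ (the clique number of the $r$-th power of the $n$-cycle), which requires an analysis of the cross pairs between the ``low'' and ``high'' arcs around a base point, not just consecutive gaps. Finally, note that the lemma itself silently assumes $n\ge 2r+2$: for $n\le 2r+1$ the condition in Def.~\ref{def:rgap_design} is vacuous, every allocation is an $r$-gap design, and $r\ge d-1$ can fail; that looseness is the paper's, but your proof should make the needed assumption explicit since your argument leans on it.
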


By definition (see Def.~\ref{def:rgap_design}), $r$-gap design decouples the service choices for objects that are at least $r$-apart (in terms of their indices).
Lemma~\ref{lm:on_rgap} shows that this translates into controlling the span for consecutive objects $o_i, o_{i+1}, \dots, o_{i+x-1}$ of any length $x$.
Then using the same idea introduced in Sec.~\ref{subsec:calculating_P_for_given_storage_allocation}, we can find bounds on $\mathcal{P}$ by only focusing on the demand-capacity conditions defined on the consecutive objects of fixed length.




\begin{lemma}
  In $d$-choice allocation with $r$-gap design, we have
  \begin{equation}
    \Prob{S^{(c)}_{r + 1} \leq m d} \leq \mathcal{P} \leq \Prob{S^{(c)}_s \leq m(s + 2r)}.
  \label{eq:P_bounds_for_rgap}
  \end{equation}
  for $s = 1, \dots, n-2r$.
\label{lm:P_bounds_for_rgap}
\end{lemma}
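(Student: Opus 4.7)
The plan is to establish the two inequalities separately, leveraging the span control from Lemma~\ref{lm:on_rgap} together with the characterization of $\mathcal{C}_m$ in \eqref{eq:modified_cap_region_w_service_choice_spans}.

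For the upper bound, I would restrict the family of demand-capacity conditions defining $\mathcal{C}_m$ to those indexed by the sets $I_i = \{i, i+1, \ldots, i+s-1\}$ of $s$ cyclically consecutive objects, for $i = 1, \ldots, n$. Dropping the other conditions can only enlarge the feasible event, so $\mathcal{P}$ is upper bounded by the probability that $\sum_{j \in I_i} \rho_j \leq m \cdot \mathrm{span}(I_i)$ holds for every $i$. By Lemma~\ref{lm:on_rgap} we have $\mathrm{span}(I_i) \leq s + 2r$, so each retained condition is implied by the weaker linear one $\sum_{j \in I_i} \rho_j \leq m(s+2r)$. Replacing the span by $s + 2r$ enlarges the event once more, and the simultaneous validity of these $n$ weaker inequalities is precisely $\{S^{(c)}_s \leq m(s+2r)\}$, yielding the claimed upper bound. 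The restriction $s \leq n-2r$ simply keeps $s+2r \leq n$ so the bound is nontrivial.

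For the lower bound, I would exhibit an explicit feasible split and show that $\{S^{(c)}_{r+1} \leq md\} \subset \{\bm{\rho} \in \mathcal{C}_m\}$. The natural candidate is equal splitting: route a fraction $\rho_i/d$ of each demand $\rho_i$ onto every node in $C_i$. The load on a node $u$ is then $\tfrac{1}{d}\sum_{i:\, u \in C_i}\rho_i$, so it suffices to bound the total demand of the objects hosted on a single node. By Def.~\ref{def:rgap_design}, any two such indices have cyclic distance at most $r$; a standard ``largest-gap'' argument on the sorted positions then shows that the entire set of indices on $u$ lies inside some cyclic window of $r+1$ consecutive objects. Hence the load on $u$ is at most $\tfrac{1}{d}S^{(c)}_{r+1}$, which is at most $m$ whenever $S^{(c)}_{r+1} \leq md$, giving the lower bound.

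The step to spell out most carefully is the geometric claim that any set of indices on the cycle of length $n$ with pairwise cyclic distance at most $r$ fits inside an arc of $r+1$ consecutive positions; the rest of the argument is routine bookkeeping with \eqref{eq:modified_cap_region_w_service_choice_spans} and Lemma~\ref{lm:on_rgap}. A minor subtlety worth flagging is the strict-versus-non-strict inequality in the definition of $\mathcal{C}_m$: for continuous demand distributions this is a measure-zero issue, and otherwise it can be absorbed by a standard limiting argument.
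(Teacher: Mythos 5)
Your upper bound is exactly the paper's argument: keep only the demand--capacity conditions for the $n$ cyclic windows of $s$ consecutive objects, bound their spans by $s+2r$ via Lemma~\ref{lm:on_rgap}, and recognize the resulting event as $\{S^{(c)}_s \leq m(s+2r)\}$. Your lower bound, however, takes a genuinely different route. The paper does not exhibit a splitting rule; it considers ``spiky'' demand vectors (demand $md$ on objects spaced $r+1$ apart, zero elsewhere), argues each such vector is servable because spiky objects do not share service choices, and then invokes convexity of the capacity region to cover every vector whose $(r+1)$-window sums are at most $md$. Your equal-split construction is more explicit and, where it works, cleaner: it directly produces a feasible assignment and never needs the convex-combination step.

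The catch is precisely the step you flagged: the claim that a subset of $\mathbb{Z}_n$ with pairwise cyclic distance at most $r$ must lie in an arc of $r+1$ consecutive positions is false in general. Take $n=3r$ (e.g., $n=12$, $r=4$) and a node hosting objects with indices $0$, $r$, $2r$: all pairwise cyclic distances equal $r$, yet the smallest covering arc has $2r+1$ positions, so the node's load under equal splitting is controlled by $S^{(c)}_{2r+1}$, not $S^{(c)}_{r+1}$. Nothing in Def.~\ref{def:rgap_design} forbids such a node, and the lemma's hypothesis only forces $n \geq 2r+1$, so your argument has a gap in the range $2r+1 \leq n \leq 3r$. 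The largest-gap argument does go through once $n > 3r$: fixing any hosted object $o_x$, all others lie in $[x-r, x+r]$, and if the extremes $a \leq b$ of this linear window satisfied $b-a > r$, the pairwise constraint would force $b-a \geq n-r > 2r$, a contradiction. So either add the (mild) hypothesis $n > 3r$ --- harmless for every asymptotic use of \eqref{eq:P_bounds_for_rgap} in the paper, where $r$ is tied to $d$ and $n \to \infty$ --- or fall back on the paper's spiky-load-plus-convexity argument, which only uses the overlap structure of the design and never needs to localize the objects sharing a given node.
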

\begin{proof}
  See Appendix~\ref{subsec:proof_lm_P_bounds_for_rgap}.
\end{proof}

\begin{figure*}[t]
  \centering
  \begin{subfigure}
    \centering
    \includegraphics[width=.45\textwidth]{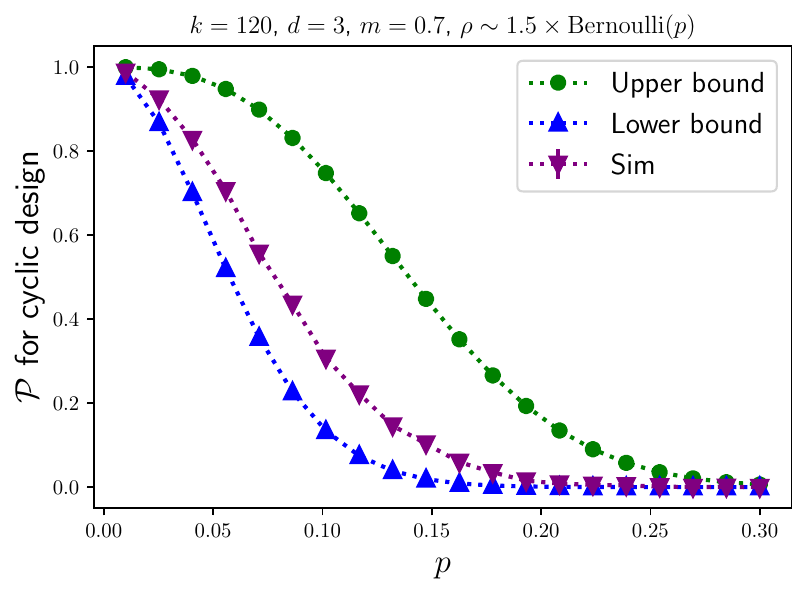}
  \end{subfigure}
  \begin{subfigure}
    \centering
    \includegraphics[width=.45\textwidth]{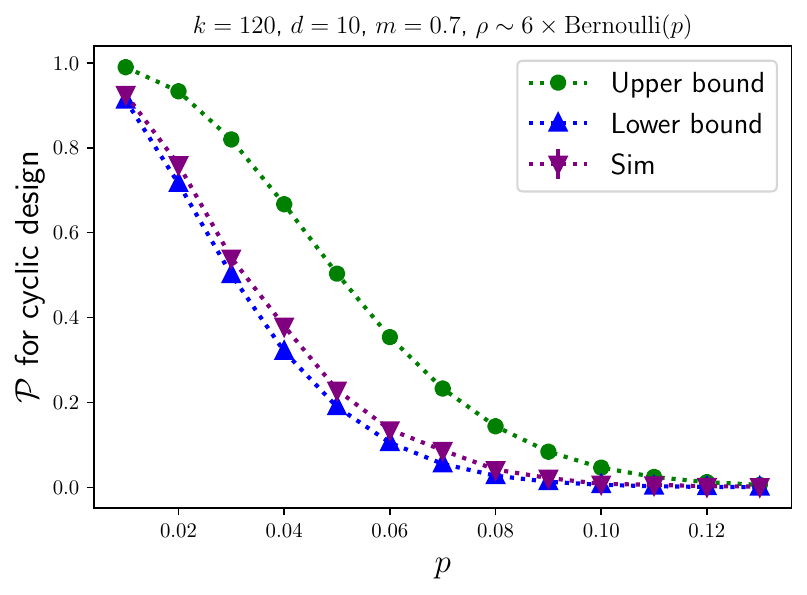}
  \end{subfigure}
  \caption{Simulated $\mathcal{P}$ values vs. the bounds presented in Lemma~\ref{lm:P_bounds_for_cyclic} for $d$-choice allocation with cyclic design.
  }
\label{fig:P_sim_vs_bounds_for_cyclic}
\end{figure*}

Notice that cyclic design (as well as clustering) is an $r$-gap design. Hence the bounds given in Lemma~\ref{lm:P_bounds_for_rgap} are valid for cyclic design. In addition, the well-defined structure of cyclic design allows us to refine these bounds as follows.
\begin{lemma}
  In $d$-choice allocation with cyclic design, we have
  \begin{equation}
  \begin{split}
    \mathcal{P} &\geq \Prob{S_d^{(c)} \leq m d}, \\
    \mathcal{P} &\leq \min_{1 \leq s \leq n - d + 1} \Prob{S^{(c)}_s \leq m(s + d - 1)}.
  \end{split}
  \label{eq:P_bounds_for_cyclic}
  \end{equation}
\label{lm:P_bounds_for_cyclic}
\end{lemma}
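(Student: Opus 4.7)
The plan is to prove the two inequalities separately, exploiting a structural fact specific to cyclic design: the service choices of any window of $s$ consecutive objects form a window of $s+d-1$ consecutive nodes. Concretely, if we write the service choices of $o_i$ as $C_i = \{i-d+1, i-d+2, \dots, i\} \pmod n$, then $\bigcup_{j=0}^{s-1} C_{i+j} = \{i-d+1, \dots, i+s-1\} \pmod n$, which has cardinality exactly $s + d - 1$ (as long as $s \leq n - d + 1$). Equivalently, each node $j$ stores exactly the $d$ consecutive objects $\{o_j, o_{j+1}, \dots, o_{j+d-1}\} \pmod n$.

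For the \textbf{upper bound}, I would follow the same recipe used in Lemma~\ref{lm:P_upper_bound_for_any_design}: restrict the full set of demand-versus-capacity conditions in \eqref{eq:modified_cap_region_w_service_choice_spans} to the subset indexed by all windows of $s$ consecutive objects. Because each such window has span exactly $s + d - 1$, a \emph{necessary} condition for $\bm{\rho} \in \mathcal{C}_m$ is $\sum_{j=0}^{s-1} \rho_{i+j} \leq m(s+d-1)$ for every $i$, i.e., $S^{(c)}_s \leq m(s+d-1)$. Taking probabilities gives $\mathcal{P} \leq \Prob{S^{(c)}_s \leq m(s+d-1)}$, and minimizing over $s \in \{1, \dots, n-d+1\}$ yields the upper bound. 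This refines Lemma~\ref{lm:P_upper_bound_for_any_design} (where the cap was $s \cdot md$) by replacing the general bound $sd$ on the span with the exact cyclic span $s+d-1$.

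For the \textbf{lower bound}, I would exhibit an explicit feasible demand assignment whenever $S_d^{(c)} \leq md$. The natural candidate is \emph{uniform splitting}: route a fraction $\rho_i / d$ of object $o_i$'s demand to each of its $d$ service choices. Using the dual description above, the load induced on node $j$ is
\[
L_j \;=\; \frac{1}{d} \sum_{i=j}^{j+d-1} \rho_i \;=\; \frac{1}{d}\, S^{(c)}_{d, j}.
\]
Hence $\max_j L_j = S^{(c)}_d / d$, and the maximal load stays below $m$ precisely when $S^{(c)}_d \leq md$. This shows $\{S^{(c)}_d \leq md\} \subseteq \{\bm{\rho} \in \mathcal{C}_m\}$, and taking probabilities yields $\mathcal{P} \geq \Prob{S^{(c)}_d \leq md}$.

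The steps are individually short; the main subtlety is keeping the cyclic bookkeeping honest, in particular verifying that (i) every window of $s$ consecutive objects really does have span $s+d-1$ (so that the upper-bound constraint is the tightest consecutive-window constraint), and (ii) the uniform-splitting assignment exactly reconstructs a $d$-scan at each node. Both are consequences of the shift structure of the bijections $f_0, f_1, \dots, f_{d-1}$ defining cyclic design, so the proof is mostly a careful unpacking rather than a new technical estimate.
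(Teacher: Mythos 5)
Your upper bound follows essentially the same route as the paper's: the paper obtains it by noting that in cyclic design any $s$ consecutive objects have span at most $s+d-1$ and substituting this into the necessary-condition (consecutive-window) argument of Lemma~\ref{lm:P_bounds_for_rgap}, which is exactly your restriction of the demand-vs-capacity conditions of \eqref{eq:modified_cap_region_w_service_choice_spans} to windows of $s$ consecutive objects. Your lower bound, however, takes a genuinely different and more constructive route. The paper gets it from the general $r$-gap lower bound of Lemma~\ref{lm:P_bounds_for_rgap} with $r=d-1$: there one serves a family of ``spiky'' demand vectors (demand $md$ placed on objects spaced $r+1$ apart, which fully book disjoint service choices) and then invokes convexity of the capacity region to cover all demand vectors whose circular $d$-window sums stay below $md$. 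You instead exhibit an explicit feasible routing---split each $\rho_i$ uniformly as $\rho_i/d$ over its $d$ service choices---and use the dual structural fact that each node hosts exactly $d$ consecutive objects, so the load on node $j$ equals $\frac{1}{d}S^{(c)}_{d,j'}$ for the appropriate window and $\{S_d^{(c)} \leq md\}$ directly implies $\bm{\rho} \in \mathcal{C}_m$. This avoids the convex-combination step entirely and verifies membership in $\mathcal{C}_m$ by construction; the trade-off is that it is specific to the cyclic structure, whereas the paper's spiky-scenario argument applies to any $r$-gap design (e.g., clustering). One cosmetic remark: $\mathcal{C}_m$ is defined with strict inequalities in \eqref{eq:modified_cap_region_w_matrix_inequalities}, so on the boundary event $S_d^{(c)} = md$ your assignment attains load exactly $m$; the paper's own proofs gloss over the same strict/non-strict distinction, and it is immaterial for continuous demand distributions.
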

\begin{proof}
  See Appendix~\ref{subsec:proof_lm_P_bounds_for_cyclic}.
\end{proof}

Lemma~\ref{lm:P_bounds_for_cyclic} allows us to apply the results on the scan statistics to analyze and calculate $\mathcal{P}$ for cyclic design.
Fig.~\ref{fig:P_sim_vs_bounds_for_cyclic} shows the simulated values for $\mathcal{P}$ together with the values computed using \eqref{eq:P_bounds_for_cyclic}. The object demands are assumed to be distributed as $\lambda \times \mathrm{Bernoulli}(p)$, and the plots are given for different values of $d$ and $\lambda$.
We use the expression in Theorem~2, \cite{ApproxForScanStatDist:Naus82} presented for scan statistic distribution to compute the bounds in \eqref{eq:P_bounds_for_cyclic}.

We next find asymptotic bounds for $\mathcal{P}$ by leveraging the Poisson limit for scan statistic as presented in Theorem 2, \cite{PoissonApproxForScanStat:DemboK92}.
\begin{theorem}
  Suppose the object demands $\rho_i > 0$. Then, for $d$-choice allocation with cyclic design, in the limit $n \to \infty$
  \begin{equation}
  \begin{split}
    \mathcal{P} &\geq \exp\left(-w_{n, d} \cdot Q_d(m d)\right), \\
    \mathcal{P} &\leq \min_{1 \leq s \leq n - d + 1} \exp\left(-w_{n, s} \cdot Q_s(m(s + d - 1))\right).
  \end{split}
  \label{eq:P_asymptotic_bounds_for_cyclic}
  \end{equation}
  where $w_{n, u} = n - u + 1$ and $Q_u(x) = 1 - F_u(x), \; \forall u > 0$.
\label{thm:P_asymptotic_bounds_for_cyclic}
\end{theorem}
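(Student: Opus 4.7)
The plan is to start from the non-asymptotic bounds established in Lemma~\ref{lm:P_bounds_for_cyclic}, which express $\mathcal{P}$ in terms of the circular scan statistic $S^{(c)}_s$, and convert these bounds into the exponential form stated in the theorem by invoking a Poisson-type approximation for the scan statistic. Two ingredients are needed: a transition from the circular scan statistic to the ordinary scan statistic in the asymptotic regime, and a Poisson limit for $\Prob{S_s \leq x}$.

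First I would replace $S^{(c)}_s$ with $S_s$. As stated in Sec.~\ref{subsec:scan_stats} (proof in Appendix~\ref{subsec:circular_scan}), the circular scan statistic converges almost surely to its non-circular counterpart as $n \to \infty$, so $\Prob{S^{(c)}_s \leq x} \to \Prob{S_s \leq x}$ in the same limit. This reduces the problem to estimating $\Prob{S_s \leq x}$ for the non-circular scan of i.i.d. non-negative variables $\rho_i$ (positivity is used to match the hypotheses of the Poisson limit).

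Next I would apply the Poisson approximation of Theorem~2 in \cite{PoissonApproxForScanStat:DemboK92}. That result states that, for i.i.d. non-negative summands and appropriately scaled thresholds, the number $N_s(x)$ of length-$s$ windows whose partial sum exceeds $x$ converges in distribution to $\mathrm{Poisson}(\lambda)$ with mean $\lambda = w_{n,s}\cdot Q_s(x)$, where $w_{n,s} = n - s + 1$. Since $\{S_s \leq x\} = \{N_s(x) = 0\}$, this yields the asymptotic equivalence
\[
\Prob{S_s \leq x} \;\sim\; \exp\bigl(-w_{n,s}\cdot Q_s(x)\bigr).
\]
Substituting $s = d$ and $x = m d$ into the lower bound of Lemma~\ref{lm:P_bounds_for_cyclic} produces the claimed lower bound on $\mathcal{P}$. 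Substituting arbitrary $s$ and $x = m(s + d - 1)$ into the upper bound, and then taking the minimum over $s \in [1, n-d+1]$ (which commutes with the monotonic map $x \mapsto e^{-x}$), yields the claimed upper bound.

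The main obstacle is verifying that the Poisson regime of \cite{PoissonApproxForScanStat:DemboK92} actually applies in the scalings of interest here. One must check that the intensity $w_{n,s}\cdot Q_s(x)$ stays of moderate order: if it is forced to zero the bound becomes vacuous (both sides tend to $1$), and if it blows up the approximation degenerates (both sides tend to $0$). For the lower bound this amounts to controlling how $Q_d(m d)$ scales when $d$ grows with $n$, which is the non-trivial regime relevant for Corollary-style consequences such as \eqref{eq:P_convergence_for_any_design}. A secondary technical point is uniformity of the approximation error in $s$ so that the minimum in the upper bound is preserved in the limit; this follows from the uniform Poisson estimates in \cite{PoissonApproxForScanStat:DemboK92}, but needs to be explicitly invoked.
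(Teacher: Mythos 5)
Your proposal follows essentially the same route as the paper's proof: it starts from the scan-statistic bounds of Lemma~\ref{lm:P_bounds_for_cyclic}, passes from the circular scan statistic $S^{(c)}_s$ to $S_s$ via the almost-sure convergence shown in Appendix~\ref{subsec:circular_scan}, and applies the Poisson limit of Theorem~2 in \cite{PoissonApproxForScanStat:DemboK92} to obtain $\Prob{S_s \leq x} \approx \exp\left(-w_{n,s}\,Q_s(x)\right)$ before substituting the appropriate arguments. Your additional remarks on the intensity regime and uniformity in $s$ go beyond what the paper spells out, but the core argument is identical and correct.
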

\begin{proof}
  See Appendix~\ref{subsec:proof_thm_P_asymptotic_bounds_for_cyclic}.
\end{proof}

We next present the bounds in \eqref{eq:P_asymptotic_bounds_for_cyclic} in a more tractable form.
\begin{corollary}
    In $d$-choice allocation with cyclic design, in the limit $n \to \infty$ we have
    \begin{equation}
      \exp\left(-w_{n, d} \; Q_d(md)\right) \leq 
      \mathcal{P} \leq 
      \exp\left(-w_{n, d} \; Q_d(2md)\right),
    \label{eq:P_asymptotic_bounds_for_cyclic_simpler}
    \end{equation}
    where $w_{n, d} = n - d + 1$ and $Q_d(x) = 1 - F_d(x)$.
\label{cor:P_asymptotic_bounds_for_cyclic_simpler}
\end{corollary}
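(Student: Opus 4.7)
The plan is to derive both inequalities directly from Theorem~\ref{thm:P_asymptotic_bounds_for_cyclic} by a short specialization argument, since the corollary only repackages that theorem into a uniform, easily interpretable form in which both sides use the same $Q_d$ and the same weight $w_{n,d}$.

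For the lower bound, I would simply observe that the lower bound in the corollary is \emph{identical} to the lower bound in Theorem~\ref{thm:P_asymptotic_bounds_for_cyclic}, so nothing needs to be proved.

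For the upper bound, my approach is to specialize the minimization in Theorem~\ref{thm:P_asymptotic_bounds_for_cyclic} to the single choice $s=d$, which is admissible whenever $d \leq n-d+1$, i.e., for all sufficiently large $n$. This yields, in the limit $n \to \infty$,
\[
\mathcal{P} \;\leq\; \exp\bigl(-w_{n,d}\, Q_d(m(2d-1))\bigr).
\]
To obtain the cleaner bound stated in the corollary, I would then invoke the fact that $Q_d(x) = 1 - F_d(x)$ is non-increasing in $x$, so $Q_d(m(2d-1)) \geq Q_d(2md)$, which gives $-w_{n,d}\, Q_d(m(2d-1)) \leq -w_{n,d}\, Q_d(2md)$ and hence
\[
\exp\bigl(-w_{n,d}\, Q_d(m(2d-1))\bigr) \;\leq\; \exp\bigl(-w_{n,d}\, Q_d(2md)\bigr).
\]
Chaining the two inequalities produces the desired upper bound.

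There is no real obstacle here: the lower bound is identical to the one in the theorem, and the upper bound only requires picking the canonical index $s=d$ and one use of monotonicity of the tail $Q_d$ to replace $m(2d-1)$ with the notationally cleaner $2md$. The only subtle point worth a sentence in the write-up is why we deliberately \emph{weaken} the tighter bound $Q_d(m(2d-1))$ to $Q_d(2md)$: this matches the $Q_d(md)$ on the lower-bound side and allows the bounds to be compared and used in later asymptotic statements (such as the necessity/sufficiency of $d = \Omega(\log n)$) without carrying an extra constant inside the argument of $Q_d$.
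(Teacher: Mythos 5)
Your proposal is correct and matches the paper's own proof essentially step for step: the lower bound is taken verbatim from Theorem~\ref{thm:P_asymptotic_bounds_for_cyclic}, and the upper bound is obtained by specializing the minimum to $s=d$ and then using the monotonicity of $Q_d$ to replace $Q_d(m(2d-1))$ by the weaker $Q_d(2md)$. No gaps; your remark on why the bound is deliberately loosened is a reasonable addition but not needed for correctness.
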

\begin{proof}
    See Appendix~\ref{subsec:proof_cor_P_asymptotic_bounds_for_cyclic_simpler}.
\end{proof}

The bounds in \eqref{eq:P_asymptotic_bounds_for_cyclic_simpler} show how $\mathcal{P}$ scales with replication factor $d$ in large-scale systems.
The first factor $w_{n, d}$ goes down with $d$, which increases both the lower and upper bound.
The impact of $d$ is more subtle in the second factor $Q_d(d \cdot x)$ where $x$ is $m$ and $2m$ for the lower and upper bound.
We next find a new lower and upper bound on $\mathcal{P}$, which are looser than those in \eqref{eq:P_asymptotic_bounds_for_cyclic_simpler} but more insightful on the impact of $d$.

\begin{corollary}
  Suppose the object demands $\rho_i$ are sub-gaussian as given in Theorem~\ref{thm:P_bounds_for_clustering}.
  Then, in $d$-choice allocation with cyclic design, there exists $\alpha, \beta, \gamma > 0$ such that in the limit $n \to \infty$
  \begin{equation}
  \begin{split}
    \mathcal{P} &\geq \exp\left(-w_{n, d} \cdot \gamma \cdot \exp\left(-d \cdot \beta(m - \mu)^2\right)\right) \\
    \mathcal{P} &\leq \exp\left(-w_{n, d} \cdot \exp\left(-d \cdot \alpha(2m - \mu)^2\right)\right),
  \end{split}
  \label{eq:P_asymptotic_bounds_for_cyclic_insightful}
  \end{equation}
  where $w_{n, d} = n - d + 1$ and $m \geq \mu$.
\label{cor:P_asymptotic_bounds_for_cyclic_insightful}
\end{corollary}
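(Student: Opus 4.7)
The plan is to combine Corollary~\ref{cor:P_asymptotic_bounds_for_cyclic_simpler} with the sub-gaussian concentration machinery already employed in the proof of Theorem~\ref{thm:P_bounds_for_clustering}. Corollary~\ref{cor:P_asymptotic_bounds_for_cyclic_simpler} gives, in the limit $n\to\infty$,
\[
\exp\!\left(-w_{n,d}\,Q_d(md)\right) \;\leq\; \mathcal{P} \;\leq\; \exp\!\left(-w_{n,d}\,Q_d(2md)\right).
\]
Since $x\mapsto \exp(-w_{n,d}\,x)$ is monotone decreasing, the result will follow once I produce an \emph{upper} bound on $Q_d(md)$ of the form $\gamma\exp(-d\,\beta(m-\mu)^2)$ and a \emph{lower} bound on $Q_d(2md)$ of the form $\exp(-d\,\alpha(2m-\mu)^2)$. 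So the task is reduced to two one-dimensional large-deviation statements about the sum $\rho_1+\dots+\rho_d$ of i.i.d.\ sub-gaussian variables with mean $\mu$.

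For the lower bound on $\mathcal{P}$, I would derive the upper bound on $Q_d(md)$ by a standard Chernoff argument. Writing $Q_d(md)=\Pr\{\rho_1+\dots+\rho_d > md\}$, I would center the variables and invoke the upper MGF hypothesis $\phi_{\rho_i}(t)\leq \exp(Ct^2)$ to obtain, for every $t>0$,
\[
Q_d(md) \;\leq\; \exp\!\left(-t\,d(m-\mu) + d\,\tilde C\,t^2\right),
\]
after absorbing the centering into the constant. Optimizing over $t>0$ produces $Q_d(md)\leq \gamma\exp(-d\,\beta(m-\mu)^2)$ with explicit $\beta,\gamma>0$. Plugging this into the lower bound of Corollary~\ref{cor:P_asymptotic_bounds_for_cyclic_simpler} gives the first inequality in \eqref{eq:P_asymptotic_bounds_for_cyclic_insightful}.

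The harder direction, and the main obstacle, is the matching lower bound $Q_d(2md)\geq \exp(-d\,\alpha(2m-\mu)^2)$ needed to produce the upper bound on $\mathcal{P}$. Chernoff only yields one-sided inequalities, so here I must exploit the \emph{lower} MGF assumption $\phi_{\rho_i}(t)\geq \exp(ct^2)$, which guarantees that the distribution carries enough mass in its upper tail. The plan is to use an exponential tilting (Cram\'er) argument: replacing the original law by the tilted law $\diff\tilde F(x) \propto \exp(tx)\diff F(x)$ with $t$ chosen so that the tilted mean equals $2m$, and then bounding the change-of-measure factor from below using the lower MGF hypothesis. This yields the target inequality $Q_d(2md)\geq \exp(-d\,\alpha(2m-\mu)^2)$ for a suitable $\alpha>0$. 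This is the same machinery used to establish the upper bound in Theorem~\ref{thm:P_bounds_for_clustering}, so the argument transfers essentially verbatim; the only adaptation is that the exponent $(2m-\mu)^2$ replaces $(m-\mu)^2$ because Corollary~\ref{cor:P_asymptotic_bounds_for_cyclic_simpler} passes through the constant-shift factor of $2$ coming from the window length in the scan-statistic upper bound. Substituting the resulting inequality into the upper bound of Corollary~\ref{cor:P_asymptotic_bounds_for_cyclic_simpler} completes the proof.
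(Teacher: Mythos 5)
Your proposal follows essentially the same route as the paper: both reduce the statement to the bounds of Corollary~\ref{cor:P_asymptotic_bounds_for_cyclic_simpler} and then substitute an upper bound on $Q_d(md)$ of the form $\gamma\exp(-d\,\beta(m-\mu)^2)$ and a lower bound on $Q_d(2md)$ of the form $\exp(-d\,\alpha(2m-\mu)^2)$, using the monotonicity of $x \mapsto \exp(-w_{n,d}\,x)$. The only difference is presentational: the paper simply quotes the two-sided sub-gaussian tail bounds on $Q_d$ already established in the proof of Theorem~\ref{thm:P_bounds_for_clustering} (which rests on a cited non-asymptotic lower-tail result), whereas you sketch re-deriving them via Chernoff and exponential tilting, which is the same underlying machinery.
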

\begin{proof}
  See Appendix~\ref{subsec:proof_cor_P_asymptotic_bounds_for_cyclic_insightful}.
\end{proof}

\begin{remark}
  Similar to the bounds in Theorem~\ref{thm:P_bounds_for_clustering}, we can extend the bounds in \eqref{eq:P_asymptotic_bounds_for_cyclic_insightful} for demand distributions with heavier tail than sub-gaussian, such as sub-exponential distributions.

  The bounds in \eqref{eq:P_asymptotic_bounds_for_cyclic_insightful} say, for a fixed maximal load $m$, $\mathcal{P}$ for large-scale systems with cyclic design scales in $d$ as
  \[
    \mathcal{P} \;\myapprox\; \exp\left(-(n - d + 1) \cdot \exp\left(-d \cdot K\right)\right)
  \]
  for some constant $K > 0$.
  \closeremark
\end{remark}


Similar to \eqref{eq:P_scaling_with_d_n_m} given for clustering design, \eqref{eq:P_asymptotic_bounds_for_cyclic_insightful} allows us to analyze $\mathcal{P}$ for cyclic design as $d$ grows with $n$.

\begin{corollary}
    In $d$-choice allocation with cyclic design
    \begin{equation}
    \mathcal{P} \to
    \begin{cases} 
      1 & d = \Omega(\log(n)) \\
      0 & \text{otherwise}
    \end{cases}
    \quad\text{as } n \to \infty.
    \label{eq:P_limit_for_cyclic}
    \end{equation}
\label{cor:P_for_cyclic_as_d_to_infty}
\end{corollary}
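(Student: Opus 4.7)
The plan is to read off both asymptotic regimes directly from the bounds already established in Corollary~\ref{cor:P_asymptotic_bounds_for_cyclic_insightful}. Those bounds have the common shape $\mathcal{P} = \exp(-w_{n,d}\cdot e^{-dK})$ (up to a multiplicative constant in the inner exponent), where $w_{n,d}=n-d+1$ and $K$ is a positive constant that differs between the lower and upper bound. Since $\exp(-x)\to 1$ as $x\to 0^+$ and $\exp(-x)\to 0$ as $x\to\infty$, the dichotomy reduces entirely to the asymptotic behavior of the scalar $w_{n,d}\,e^{-dK}=\exp\bigl(\log(n-d+1)-dK\bigr)$.

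\textbf{The ``otherwise'' direction ($d=o(\log n)$).} I would use the upper bound in \eqref{eq:P_asymptotic_bounds_for_cyclic_insightful}, so $K=\alpha(2m-\mu)^2>0$. The exponent inside is $-(n-d+1)\exp(-dK)$. Writing this as $-\exp\bigl(\log(n-d+1)-dK\bigr)$ and noting that $d=o(\log n)$ implies $dK=o(\log n)$, the inner argument tends to $+\infty$, so $(n-d+1)e^{-dK}\to\infty$ and therefore $\mathcal{P}\to 0$.

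\textbf{The ``$d=\Omega(\log n)$'' direction.} Here I would use the lower bound in \eqref{eq:P_asymptotic_bounds_for_cyclic_insightful} with $K=\beta(m-\mu)^2>0$. Assuming $d\geq C\log n$ for a constant $C>1/K$ (which is what the $\Omega$ hypothesis must give us for the bound to bite), we have $e^{-dK}\leq n^{-CK}$ with $CK>1$, so $(n-d+1)\gamma\,e^{-dK}=O(n^{1-CK})\to 0$. Hence the lower bound tends to $\exp(0)=1$, forcing $\mathcal{P}\to 1$.

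\textbf{Expected obstacle.} The only subtlety is the matching of constants: the $\Omega$-notation in the statement hides the factor $1/(\beta(m-\mu)^2)$ that the lower bound actually needs in the slope of $d$ against $\log n$. In the strict sense, $d=\Omega(\log n)$ with a tiny constant could leave $w_{n,d}e^{-dK}$ bounded away from both $0$ and $\infty$, so the dichotomy in the corollary is to be read as ``$d$ growing at least like $\log n$ with a sufficiently large implicit constant suffices, and anything of smaller order fails.'' Apart from this bookkeeping of constants, the argument is a one-line asymptotic substitution into the bounds proven in Corollary~\ref{cor:P_asymptotic_bounds_for_cyclic_insightful}, so no further scan-statistics machinery is needed.
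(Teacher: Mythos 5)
Your proposal is correct and follows essentially the same route as the paper: the paper's proof also reads the dichotomy directly off the bounds in Corollary~\ref{cor:P_asymptotic_bounds_for_cyclic_insightful}, by writing $d = c\log(n)^\alpha$ and observing that both the lower and upper bound in \eqref{eq:P_asymptotic_bounds_for_cyclic_insightful} tend to $1$ when $\alpha \geq 1$ and to $0$ when $0 < \alpha < 1$. Your caveat about the implicit constant in $d = \Omega(\log n)$ (the lower bound only forces $\mathcal{P} \to 1$ when the slope exceeds $1/\bigl(\beta(m-\mu)^2\bigr)$) is a legitimate bookkeeping point that the paper's one-line proof glosses over as well.
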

\begin{proof}
  See Appendix~\ref{subsec:proof_cor_P_for_cyclic_as_d_to_infty}.
\end{proof}

\begin{remark}
  The limit in \eqref{eq:P_limit_for_cyclic} is the counterpart of the limit in Corollary~\ref{cor:P_for_clustering_as_d_to_infty}.
  Thus, observations noted for clustering design in the remark after Corollary~\ref{cor:P_for_clustering_as_d_to_infty} also hold for cyclic design.
  \closeremark
\end{remark}

\subsection{Block Design}
\label{subsec:block_design}
We refer the reader to Sec.~\ref{subsec:balanced_d_choice} for the description of the construction of $d$-choice allocation with block design.
Similar to the bounds given in Lemma~\ref{lm:P_bounds_for_rgap} for $r$-gap design, we can find the following bounds on $\mathcal{P}$ for systems with block design.
\begin{lemma}
  In $d$-choice allocation with block design
  \begin{equation}
     \Prob{S_d^{(c)} \leq m d/2} \leq \mathcal{P} \leq \Prob{S_d^{(c)} \leq m (d^2 - d)}.
  \label{eq:P_bounds_for_block_design}
  \end{equation}
\label{lm:P_bounds_for_block_design}
\end{lemma}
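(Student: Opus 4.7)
The plan is to prove the two inequalities separately, adapting the template from Lemmas~\ref{lm:P_bounds_for_rgap} and~\ref{lm:P_bounds_for_cyclic} to the combinatorics of the symmetric $(n,d,1)$-block design (for which $n = d^2 - d + 1$ and $|C_i \cap C_j| = 1$ whenever $i \neq j$). Both directions will work from the capacity characterisation in Lemma~\ref{lm:cap_region_for_system_w_replicated_storage}, which reduces $\bm{\rho} \in \mathcal{C}_m$ to the span inequalities $\sum_{i \in I} \rho_i \leq m\,\mathrm{span}(I)$, with an appropriately chosen cyclic ordering of the $n$ objects providing the bridge to the scan statistic $S_d^{(c)}$.

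For the upper bound, I would first observe that $\mathrm{span}(I)$ equals the number of nodes (equivalently, blocks of the design) that intersect $I$; a double counting using $\sum_v |\mathrm{block}(v) \cap I| = d|I|$ together with $\sum_v \binom{|\mathrm{block}(v) \cap I|}{2} = \binom{|I|}{2}$ then forces, for any $d$-subset $I$ that is not itself a block, the existence of at least one block disjoint from $I$, i.e.\ $\mathrm{span}(I) \leq n - 1 = d^2 - d$. I would next fix a cyclic ordering in which no $d$ cyclically consecutive objects coincide with a block---such an ordering exists by a first-moment argument, since only $n$ of the $\binom{n}{d}$ possible $d$-subsets are blocks---and apply the span condition to every cyclic window to conclude $\bm{\rho} \in \mathcal{C}_m \Rightarrow S_d^{(c)} \leq m(d^2 - d)$, which delivers the upper bound.

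For the lower bound, I would produce an explicit load assignment that serves $\bm{\rho}$ whenever $S_d^{(c)} \leq md/2$. Splitting each $\rho_i$ equally over $o_i$'s $d$ service choices places load $(1/d)\sum_{i \in \mathrm{block}(v)} \rho_i$ on node $v$, so stability at level $m$ reduces to the block-sum bound $\sum_{i \in \mathrm{block}(v)} \rho_i \leq md$ for every block. To see that this is implied by $S_d^{(c)} \leq md/2$, I would use the fact that in the chosen cyclic ordering every block of $d$ points fits inside a cyclic arc of length at most $2d-1$, so its support is contained in the union of just two adjacent $d$-windows and $\sum_{i \in \mathrm{block}(v)} \rho_i \leq 2 S_d^{(c)} \leq md$, which completes the argument.

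The hardest step by far will be securing the two structural properties of the cyclic ordering simultaneously: no $d$-window is a block, and every block is contained in a cyclic arc of length $2d-1$. For $d=3$ (the Fano plane) the second property holds automatically---the pigeonhole on the three gaps summing to $n = 7$ forces max gap at least $3$ and hence arc length at most $5 = 2d-1$---while the first follows from the first-moment argument, so the theorem drops out cleanly. For larger $d$ the arc-length constraint is non-trivial (pigeonhole gives only max gap $\geq d$, not $\geq d^2 - 3d + 3$), and the argument will most likely need either a careful combinatorial construction based on the incidence geometry of the symmetric design or a probabilistic/random-ordering argument, possibly relying on the randomised approximate block-design construction described in Appendix~\ref{subsec:approx_block_design_w_random_construction}.
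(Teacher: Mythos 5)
Your lower bound is where the proposal breaks. The key structural claim---that one can choose a cyclic ordering of the $n=d^2-d+1$ objects in which \emph{every} block is contained in a cyclic arc of length $2d-1$, so that $\sum_{i\in \mathrm{block}(v)}\rho_i\le 2S_d^{(c)}$---is not just hard for large $d$, it is impossible once $d\ge 5$. In a symmetric $(n,d,1)$ design the $d$ blocks through a fixed object $p$ cover all other $n-1=d^2-d$ objects (every pair of objects lies in a block). If each block through $p$ sat inside an arc of length $2d-1$, that arc would contain $p$, so every object would lie within cyclic distance $2d-2$ of $p$; there are only $4d-4$ such positions, so we would need $d^2-d\le 4d-4$, i.e.\ $d\le 4$. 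Hence for $d\ge5$ some block through $p$ contains an object far from $p$ in any ordering, and no ordering can make block sums controllable by two adjacent $d$-windows. The equal-split assignment itself is fine, but it requires bounding sums over blocks, which are globally spread in every ordering, so it cannot be driven by the width-$d$ scan statistic at the threshold $md/2$. The paper's proof of the left inequality in \eqref{eq:P_bounds_for_block_design} avoids assignments and orderings altogether: as in Lemma~\ref{lm:P_bounds_for_rgap}, it takes the ``spiky'' extreme demand vectors supported on every $d$-th object, argues that each loaded object can receive at least $m(1+(d-1)/2)\ge md/2$ because (by $\card{C_i\cap C_j}=1$) each of its service choices is shared with at most one other loaded object, and then uses convexity of the capacity region to conclude that every demand vector whose circular $d$-window sums are at most $md/2$ is servable. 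You would need to adopt this (or some other) route for the lower bound; the arc-covering route is a dead end, not an open technicality.

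Your upper bound, by contrast, is essentially sound and in fact more careful than the paper's, which simply asserts that any $d$ objects span at most $d(d-1)$ nodes (a collinear $d$-set, i.e.\ a block, actually meets all $n$ nodes); excluding block-windows by reordering, justified by exchangeability of the i.i.d.\ demands, legitimately recovers the constant $d^2-d$. Two details need patching, though. The two counting identities alone do not force a block disjoint from a non-block $d$-set (they are consistent with all $a_v\in\{1,2\}$); you need the classical blocking-set argument: if every block met $I$ with $\card{I}=d$, then every block through a point outside $I$ would meet $I$ exactly once, contradicting that some block contains two points of $I$ unless $I$ is itself a block. Also the first-moment count $n^2/\binom{n}{d}$ exceeds $1$ at $d=3$ (it equals $49/35$), so for the Fano plane you need either a sharper argument or an explicit block-free cyclic ordering (one exists, e.g.\ $1,2,5,4,6,3,7$ for the standard labelling); for $d\ge4$ the first-moment bound does work.
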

\begin{proof}
  See Appendix~\ref{subsec:proof_lm_P_bounds_for_block_design}.
\end{proof}

\begin{remark}
  Notice that the upper bound in \eqref{eq:P_bounds_for_block_design} scales with $d$ in the same manner as the bound in Corollary~\ref{cor:P_upper_bound_for_any_design_insightful} given for any storage design.
  Recall that the asymptotically tight bounds \eqref{eq:P_bounds_for_rgap} given for $r$-gap design scale more slowly with $d$ as 
  \[
    \mathcal{P} \leq \Prob{S_d^{(c)} \leq m \cdot (d + 2r)}.
  \]
  In this sense, block design can possibly achieve the best possible scaling of $\mathcal{P}$ in $d$. However, we cannot conclude this with our results. This is because the lower bound in \eqref{eq:P_bounds_for_block_design} given for block design scales in $d$ the same way as those given for $r$-gap design.
  Determining whether block design provides a better scaling of $\mathcal{P}$ in $d$ requires finding a tighter lower bound for block design, which is an open problem.
  \closeremark
\end{remark}

Just like the bounds given in Lemma~\ref{lm:P_bounds_for_cyclic} for cyclic design, the bounds given in Lemma~\ref{lm:P_bounds_for_block_design} allow us to use the results available on scan statistics to analyze and calculate $\mathcal{P}$ for block design.
Thus, similar to Theorem~\ref{thm:P_asymptotic_bounds_for_cyclic} for cyclic design, we next find asymptotic bounds on $\mathcal{P}$ for block design as follows.
\begin{theorem}
    In $d$-choice allocation with block design, in the limit $n \to \infty$ we have
    \begin{equation}
    \begin{split}
        \mathcal{P} &\geq \exp\left(-w_{n, d} \; Q_d(m d/2)\right), \\
        \mathcal{P} &\leq \exp\left(-w_{n, d} \; Q_d(m (d^2 - d))\right).
    \end{split}
    \label{eq:P_asymptotic_bounds_for_block}
    \end{equation}
    where $w_{n, d} = n - d + 1$ and $Q_d(x) = 1 - F_d(x)$.
\label{thm:P_asymptotic_bounds_for_block}
\end{theorem}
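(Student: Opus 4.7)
The plan is to mirror the proof of Theorem~\ref{thm:P_asymptotic_bounds_for_cyclic} used for cyclic design: combine the finite-$n$ sandwich from Lemma~\ref{lm:P_bounds_for_block_design} with the Poisson limit theorem for the scan statistic. Lemma~\ref{lm:P_bounds_for_block_design} reduces the problem to estimating the tail of the circular $d$-scan statistic $S_d^{(c)}$ at two different thresholds, namely $md/2$ for the lower bound on $\mathcal{P}$ and $m(d^2-d)$ for the upper bound.

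Next, I would invoke Theorem~2 of \cite{PoissonApproxForScanStat:DemboK92}, the same Poisson approximation that was used in the cyclic case. Under the required regularity on $F_d$, this result yields, for a threshold $x$ in the appropriate scaling regime,
\[
\Prob{S_d^{(c)} \leq x} \;=\; \exp\bigl(-w_{n,d}\, Q_d(x)\bigr)\,(1+o(1)) \qquad \text{as } n \to \infty,
\]
with $w_{n,d}=n-d+1$. Since the circular scan $S_d^{(c)}$ and the non-circular scan $S_d$ differ by at most $O(d/n)$ in the limiting exponent (as sketched in Appendix~\ref{subsec:circular_scan}), this Poisson approximation transfers to $S_d^{(c)}$. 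Substituting $x=md/2$ in the lower bound of Lemma~\ref{lm:P_bounds_for_block_design} and $x=m(d^2-d)$ in the upper bound then gives exactly the two inequalities in \eqref{eq:P_asymptotic_bounds_for_block}.

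The main obstacle is verifying that the Poisson regime is valid at both thresholds simultaneously. For the upper bound, the threshold $m(d^2-d)$ lies deep in the tail of $S_d$ (well above its mean $d\mu$), so exceedances are rare, the cluster sizes are essentially $1$, and the standard Chen--Stein/Dembo--Karlin Poisson approximation applies without difficulty. For the lower bound, the threshold $md/2$ is smaller and may sit closer to the mean of $S_d$; here one has to check that the expected number of $d$-windows with sum exceeding the threshold is still $o(n/d)$ times a decaying factor, so that the clumping of consecutive exceedances does not invalidate the Poisson limit. Under the same sub-Gaussian (or sub-exponential) assumption on $\rho_i$ used in Theorem~\ref{thm:P_bounds_for_clustering} and Corollary~\ref{cor:P_asymptotic_bounds_for_cyclic_insightful}, Chernoff bounds on $Q_d(md/2)$ yield an exponentially small per-window exceedance probability, which is exactly the regime in which Dembo--Karlin's result gives the stated asymptotic exponent. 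No genuinely new ideas beyond those already used for the cyclic design are required; the proof is obtained by replacing the two scan thresholds from Lemma~\ref{lm:P_bounds_for_cyclic} by the two thresholds from Lemma~\ref{lm:P_bounds_for_block_design} and running the same Poisson-limit argument verbatim.
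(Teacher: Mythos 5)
Your proposal matches the paper's proof: the paper derives Theorem~\ref{thm:P_asymptotic_bounds_for_block} by applying the same argument as Theorem~\ref{thm:P_asymptotic_bounds_for_cyclic}, i.e., the Dembo--Karlin Poisson limit for the scan statistic together with the circular-to-noncircular convergence, substituted into the two thresholds of Lemma~\ref{lm:P_bounds_for_block_design}. Your additional discussion of when the Poisson regime is valid is a more careful treatment than the paper gives, but the route is the same.
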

\begin{proof}
    See Appendix~\ref{subsec:proof_thm_P_asymptotic_bounds_for_block}.
\end{proof}

Bounds in \eqref{eq:P_asymptotic_bounds_for_block} show how $\mathcal{P}$ scales with $d$ in large-scale systems with block design.
Similar to what we did for cyclic design in Corollary~\ref{cor:P_asymptotic_bounds_for_cyclic_insightful}, we next find new asymptotic bounds on $\mathcal{P}$ for block design, which are looser than those in \eqref{eq:P_asymptotic_bounds_for_block} but more insightful on the impact of $d$.

\begin{corollary}
  Suppose the object demands $\rho_i$ are sub-gaussian as in Theorem~\ref{thm:P_bounds_for_clustering}.
  Then, in $d$-choice allocation with block design, there exists constants $\alpha, \beta, \gamma > 0$ such that in the limit $n \to \infty$
  \begin{equation}
  \begin{split}
    \mathcal{P} &\geq \exp\left(-w_{n, d} \cdot \gamma \cdot \exp\left(-d \cdot \beta(m/2 - \mu)^2\right)\right) \\
    \mathcal{P} &\leq \exp\left(-w_{n, d} \cdot \exp\left(-d \cdot \alpha(m(d - 1) - \mu)^2\right)\right),
  \end{split}
  \label{eq:P_asymptotic_bounds_for_block_insightful}
  \end{equation}
  where $w_{n, d} = n - d + 1$ and $m \geq \mu$.
\label{cor:P_asymptotic_bounds_for_block_insightful}
\end{corollary}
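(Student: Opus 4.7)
The plan is to mirror the proof of Corollary~\ref{cor:P_asymptotic_bounds_for_cyclic_insightful}, applying sub-gaussian concentration and anti-concentration inequalities to the asymptotic bounds from Theorem~\ref{thm:P_asymptotic_bounds_for_block}. Since that theorem already reduces $\mathcal{P}$ to expressions involving $Q_d(md/2)$ and $Q_d(m(d^2-d))$, the task is to upper-bound the former and lower-bound the latter in the sub-gaussian form $\exp(-d \cdot (\text{const}) \cdot (\cdot - \mu)^2)$.

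First I would handle the lower bound on $\mathcal{P}$. Writing $\xi_i = \rho_i - \mu$ and observing that the upper hypothesis $\phi_{\rho_i}(t) \leq \exp(Ct^2)$ yields $\phi_{\xi_i}(t) \leq \exp(C't^2)$ for some $C' > 0$ after centering, a standard Chernoff argument gives $\Pr(\sum_{i=1}^d \xi_i > y) \leq \exp(-y^2/(4dC'))$ for all $y > 0$. Specializing to $y = d(m/2 - \mu) > 0$ gives $Q_d(md/2) \leq \exp(-d(m/2-\mu)^2/(4C'))$. Substituting this into the lower bound of Theorem~\ref{thm:P_asymptotic_bounds_for_block} produces the claimed inequality with $\beta = 1/(4C')$ and $\gamma = 1$.

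For the upper bound on $\mathcal{P}$, I need a lower bound on $Q_d(m(d^2-d))$. Using the lower MGF hypothesis $\phi_{\rho_i}(t) \geq \exp(ct^2)$, an anti-concentration argument of the form used in Corollary~\ref{cor:P_asymptotic_bounds_for_cyclic_insightful}, based on exponential tilting combined with a Paley–Zygmund-type second-moment estimate, yields $\Pr(\sum_{i=1}^d \xi_i > y) \geq \exp(-\alpha y^2/d)$ for some $\alpha > 0$ over the relevant range of $y$. Taking $y = d(m(d-1) - \mu)$ then gives $Q_d(m(d^2-d)) \geq \exp(-\alpha d (m(d-1)-\mu)^2)$, which upon substitution into the upper bound of Theorem~\ref{thm:P_asymptotic_bounds_for_block} completes the proof.

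The main obstacle is applying the anti-concentration bound in the correct regime. The threshold $m(d-1) - \mu$ grows linearly in $d$, so the effective deviation $y = d(m(d-1)-\mu)$ is of order $d^2$, whereas the variance of the centered sum is only of order $d$; this is well outside the Gaussian window and requires the tilting parameter to be chosen of order $(m(d-1)-\mu)$. I expect this step to be the delicate one: one must verify that the saddle-point used for cyclic design remains valid at this larger scale, which should follow from the fact that the upper MGF bound $\phi_{\rho_i}(t) \leq \exp(Ct^2)$ guarantees the cumulant generating function is finite everywhere, keeping the tilted distribution well-defined for the required range of $y$ and ensuring the second-moment factor in Paley–Zygmund stays bounded.
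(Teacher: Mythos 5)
Your proposal follows essentially the same route as the paper: the paper likewise starts from the asymptotic bounds of Theorem~\ref{thm:P_asymptotic_bounds_for_block} and substitutes the two-sided sub-gaussian tail estimates $Q_d(md/2) \leq \gamma \exp(-d\,\beta(m/2-\mu)^2)$ and $Q_d(m(d^2-d)) \geq \exp(-d\,\alpha(m(d-1)-\mu)^2)$, the only difference being that the paper reuses the bounds on $Q_d$ already established in the proof of Theorem~\ref{thm:P_bounds_for_clustering} (via the cited non-asymptotic lower-tail result) whereas you sketch re-deriving them directly by Chernoff and tilting/Paley--Zygmund arguments. Note that your step $Q_d(md/2)\leq\exp(-d(m/2-\mu)^2/(4C'))$ implicitly needs $m/2>\mu$, but this is the same implicit restriction present in the paper's own proof, so it is not a gap relative to the paper.
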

\begin{proof}
  See Appendix~\ref{subsec:proof_cor_P_asymptotic_bounds_for_block_insightful}.
\end{proof}

\begin{remark}
  Similar to the bounds in Theorem~\ref{thm:P_bounds_for_clustering} and Corollary~\ref{cor:P_asymptotic_bounds_for_cyclic_insightful}, we can extend \eqref{eq:P_asymptotic_bounds_for_block_insightful} for demand distributions with heavier tail than sub-gaussian, such as sub-exponential distributions.

  The bounds in \eqref{eq:P_asymptotic_bounds_for_block_insightful} show that, in large-scale systems with block design, $\mathcal{P}$ scales in $d$ \emph{at least as fast as} $r$-gap design, e.g., clustering and cyclic design. That is, in the limit $n \to \infty$
  \[
    \mathcal{P} \geq \exp\left(-(n - d + 1) \cdot \exp\left(-d \cdot K\right)\right)
  \]
  for some constant $K > 0$.
  The upper bound in \eqref{eq:P_asymptotic_bounds_for_block_insightful}, however, is not asymptotically tight. It says that in the limit $n \to \infty$
  \[
    \mathcal{P} \leq \exp\left(-(n - d + 1) \cdot \exp\left(-d^3 \cdot L\right)\right)
  \]
  for block design, where $L$ is a positive constant.
  \closeremark
\end{remark}



\subsection{Random Design}
\label{subsec:random_design}
As described in Sec.~\ref{subsec:balanced_d_choice}, with random design, each object is stored on a set of $d$ nodes chosen randomly. Recall that in the clustering and cyclic design, or $r$-gap design in general, service choice span for a given subset of objects is determined by the distance between the objects with respect to their indices. Instead of this deterministic structure, in random design, service choice spans are determined by the random node selection process that is independently performed for each object.


As the service choices $C_i$ are selected randomly and independently for each object, the span of $u$ objects is determined by the union of $u$ randomly chosen sets $C_i$ of fixed size $d$. This mathematical object is known as \emph{the occupancy metric for random allocation with complexes}.
\begin{definition}
  Let there be $n$ cells into which $u$ sets of $d$ particles (complexes) are thrown independently. Particles within each set are allocated to different cells. The number of cells containing at least one particle is defined as the \underline{occupancy} for random allocation with complexes and denoted as $N_{n, d, u}$.
\label{def:occupany_of_random_allocation_w_complexes}
\end{definition}

Using arguments similar to those we used to derive the bounds in Theorem~\ref{thm:P_upper_bound_for_any_storage_allocation}, Lemma~\ref{lm:P_bounds_for_cyclic} and Lemma~\ref{lm:P_bounds_for_block_design}, we can find an upper bound on $\mathcal{P}$ for random design as follows.
\begin{lemma}
  In $d$-choice allocation with random design
  \begin{equation}
    \mathcal{P} \leq \prod_{i = 1}^{v} \E_{N_{n, d, u_i}}\left[F_{u_i}(m \cdot N_{n, d, u_i})\right],
  \label{eq:P_upper_bound_for_random}
  \end{equation}
  where $v$ and $u_i$ are positive integers, and $u_1 + \dots + u_v = n$.
  $N_{n, d, u_i}$ is a random variable defined in Def.~\ref{def:occupany_of_random_allocation_w_complexes}.
\label{lm:P_upper_bound_for_random}
\end{lemma}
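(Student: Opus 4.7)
The plan is to follow the same general strategy used for Theorems~\ref{thm:P_upper_bound_for_any_storage_allocation}, \ref{lm:P_bounds_for_cyclic}, and \ref{lm:P_bounds_for_block_design}: relax the full set of demand-vs-capacity conditions from Lemma~\ref{lm:cap_region_for_system_w_replicated_storage} to a carefully chosen subset whose joint probability factors nicely, then exploit the extra structure of random design (independence across objects) to turn the resulting bound into a product. Concretely, I fix any ordered partition of the $n$ objects into $v$ disjoint groups $G_1,\dots,G_v$ of sizes $u_1,\dots,u_v$ with $u_1+\cdots+u_v=n$, and I only retain one condition per group, namely
\begin{equation*}
  A_i \;=\; \Bigl\{\sum_{j\in G_i}\rho_j \;\leq\; m\cdot\mathrm{span}(G_i)\Bigr\}, \qquad i=1,\dots,v.
\end{equation*}

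The first step is to observe that $\{\bm{\rho}\in\mathcal{C}_m\}\subseteq \bigcap_{i=1}^{v} A_i$, because each $A_i$ is one of the constraints appearing in the characterization \eqref{eq:modified_cap_region_w_service_choice_spans} of $\mathcal{C}_m$. Consequently $\mathcal{P}\leq \Prob{\bigcap_{i=1}^{v} A_i}$. The second step is to argue that the events $A_i$ are (at least stochastically) independent across $i$: the object demands $\rho_j$ are i.i.d., so the sums $\sum_{j\in G_i}\rho_j$ involve disjoint collections of demands and are mutually independent; moreover, in random design the service choices $C_j$ are drawn independently across objects, so the spans $\mathrm{span}(G_i)$, each depending only on $\{C_j : j\in G_i\}$, are independent across groups as well and independent of the demand sums. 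The third step is to evaluate a single factor: since $\sum_{j\in G_i}\rho_j$ has CDF $F_{u_i}$ by definition, and $\mathrm{span}(G_i)$ is distributed exactly as $N_{n,d,u_i}$ from Def.~\ref{def:occupany_of_random_allocation_w_complexes}, conditioning on the span and taking expectation gives
\begin{equation*}
  \Prob{A_i} \;=\; \E_{N_{n,d,u_i}}\bigl[F_{u_i}(m\cdot N_{n,d,u_i})\bigr].
\end{equation*}
Multiplying these identities over $i=1,\dots,v$ via the independence of Step~2 yields \eqref{eq:P_upper_bound_for_random}.

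The main obstacle is the independence claim in Step~2. Strictly speaking, the random design described in Sec.~\ref{subsec:balanced_d_choice} builds the allocation through $d$ random permutations $f_0,\dots,f_{d-1}$ with a no-repeat constraint, so the sets $C_j$ are not perfectly independent. I would resolve this either by (i) passing to the idealized model in which each $C_j$ is drawn uniformly and independently from $\binom{[n]}{d}$, for which the argument is exact, or (ii) noting that the balls-in-bins coupling makes the family of spans negatively associated, so that $\Prob{\bigcap_i A_i}\leq \prod_i \Prob{A_i}$ still holds and the bound is preserved. Either route keeps the upper-bound direction intact, which is all that is needed for \eqref{eq:P_upper_bound_for_random}.
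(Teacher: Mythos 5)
Your proposal matches the paper's own argument essentially step for step: keep only one demand-vs-capacity condition per block of the partition, identify each block's span with the occupancy variable $N_{n,d,u_i}$ of Def.~\ref{def:occupany_of_random_allocation_w_complexes}, and use the independence of the i.i.d.\ demands and of the randomly chosen service choices across blocks to factor the joint probability into the product in \eqref{eq:P_upper_bound_for_random}. The only difference is that you explicitly flag the permutation-induced coupling among the $C_j$, which the paper's proof silently assumes away by treating the service choices as independent, so your treatment is, if anything, slightly more careful on that point.
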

\begin{proof}
    See Appendix~\ref{subsec:proof_lm_P_upper_bound_for_random}.
\end{proof}

The upper bound in \eqref{eq:P_upper_bound_for_random} allows us to use the results on occupancy metric $N_{n, d, u}$ to analyze $\mathcal{P}$ for random design.
Occupancy for allocation with complexes is well-studied and much is known about its distribution. We here discuss only some of the results that we use, and refer the reader to \cite{RandomAllocations:KolchinSC78, RandomGraphs:Kolchin99, EstimatingSizeOfUnionOfRandomSubsets:BarotP01} and references therein for a more complete exposition.

We next present a version of \eqref{eq:P_upper_bound_for_random} in a form that is less general but more analytically insightful.
\begin{corollary}
  In $d$-choice allocation with random design
  \begin{equation}
    \mathcal{P} \leq \E_{N_{n, d, u}}\left[F_{u}(m \cdot N_{n, d, u})\right]^{n / u},
  \label{eq:P_upper_bound_for_random_w_even_partitioning}
  \end{equation}
  where $u$ is a positive integer.
\label{cor:P_upper_bound_for_random_w_even_partitioning}
\end{corollary}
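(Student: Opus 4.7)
The plan is to obtain the stated bound by specializing Lemma~\ref{lm:P_upper_bound_for_random} to a uniform partition of the object index set. The previous lemma gives the general upper bound
\[
  \mathcal{P} \leq \prod_{i=1}^{v} \mathbb{E}_{N_{n,d,u_i}}\!\left[F_{u_i}(m \cdot N_{n,d,u_i})\right]
\]
for any positive integers $v, u_1, \dots, u_v$ with $u_1 + \dots + u_v = n$. The corollary is the case where all block sizes are equal.

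Concretely, I would proceed as follows. First, fix any positive integer $u$ dividing $n$ and set $v = n/u$ and $u_i = u$ for all $i \in \{1, \dots, v\}$. This is an admissible choice in Lemma~\ref{lm:P_upper_bound_for_random} since $u_1 + \dots + u_v = v \cdot u = n$. Substituting into the general bound collapses the product to a single factor raised to the $v$-th power:
\[
  \mathcal{P} \leq \prod_{i=1}^{n/u} \mathbb{E}_{N_{n,d,u}}\!\left[F_{u}(m \cdot N_{n,d,u})\right] = \mathbb{E}_{N_{n,d,u}}\!\left[F_{u}(m \cdot N_{n,d,u})\right]^{n/u},
\]
which is exactly the claim. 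Note that each factor is identical because, under random design, the $u_i$ objects composing block $i$ have service choices drawn independently and identically across blocks, so the random variable $N_{n,d,u_i}$ has the same distribution for every block of size $u$.

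The only real technical wrinkle is the divisibility assumption $u \mid n$. If one wishes to state the corollary for arbitrary positive integers $u \leq n$, I would absorb the leftover objects into a final block of size $n - u \lfloor n/u \rfloor$: apply Lemma~\ref{lm:P_upper_bound_for_random} with $v = \lfloor n/u \rfloor + 1$, $u_i = u$ for $i \leq \lfloor n/u \rfloor$, and a single residual $u_{v}$. The residual factor is at most $1$ (since $F_{u_v}$ is a distribution function), so it can be dropped, yielding $\mathcal{P} \leq \mathbb{E}[F_{u}(m \cdot N_{n,d,u})]^{\lfloor n/u \rfloor}$. The form stated in the corollary is recovered in the divisibility case, which is the case of interest in the sequel.

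I do not anticipate any genuine obstacle: the main work is already packaged in Lemma~\ref{lm:P_upper_bound_for_random}, and the corollary is essentially a re-expression in a more convenient form for the later asymptotic analysis of $\mathcal{P}$ via the occupancy metric $N_{n,d,u}$.
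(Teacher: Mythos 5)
Your proposal is correct and follows essentially the same route as the paper, which obtains the bound by specializing Lemma~\ref{lm:P_upper_bound_for_random} to the even partition $u_i = u$, $v = n/u$ (the paper's one-line proof states this substitution, modulo a small typo writing $u_i = n/u$). Your extra remark handling the case $u \nmid n$ via a residual block bounded by $1$ is a harmless refinement the paper does not bother to spell out.
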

\begin{proof}
    See Appendix~\ref{subsec:proof_cor_P_upper_bound_for_random_w_even_partitioning}.
\end{proof}

In Chapter~7 of \cite{RandomAllocations:KolchinSC78}, an exact expression is given for the distribution of occupancy metric $N_{n, d, u}$. We can use this expression to calculate the upper bound in \eqref{eq:P_upper_bound_for_random_w_even_partitioning}. However, this would not yield a tractable expression, hence would not give insight on how the upper bound in \eqref{eq:P_upper_bound_for_random_w_even_partitioning} scales with $d$.
One approach to obtain a tractable expression would be approximating the random variable $N_{n, d, u}$ with its average value. That gives the following approximate upper bound
\begin{equation*}
    \mathcal{P} \lesssim F_u\left(m \cdot \Exp{N_{n, d, u}}\right)^{n / u},
\end{equation*}
for any positive integer $u$.
Setting $u = d$ above gives us
\begin{equation}
    \mathcal{P} \lesssim F_d\left(m \cdot \Exp{N_{n, d, d}}\right)^{n / d}.
\label{eq:P_approx_upper_bound_for_block_for_u_eq_d}
\end{equation}
Notice that this form of the upper bound is the same as the exact expression of $\mathcal{P}$ given for clustering design in \eqref{eq:P_clustering}, except that the argument $m \cdot d$ of $F_d$ is replaced here by $m \cdot \Exp{N_{n, d, d}}$. Using the expression given in Chapter 7 of \cite{RandomAllocations:KolchinSC78}, we get
\[
    \Exp{N_{n, d, d}} = n\left(1 - (1 - d / n)^d\right).
\]
Using this, we can show that
$\lim_{n \to \infty} \Exp{N_{n, d, d}} = d^2$,
which implies in the limit $n \to \infty$
\begin{equation}
    \mathcal{P} \lesssim F_d\left(m \cdot d^2\right)^{n / d}.
\label{eq:P_limiting_approx_upper_bound_for_block_for_u_eq_d}
\end{equation}
This, together with the arguments we used in Theorem~\ref{thm:P_bounds_for_clustering} gives us the following approximate upper bound in the limit $n \to \infty$
\begin{equation}
    \mathcal{P} \lesssim \left(1 - \exp(-d \cdot \alpha (m \cdot d - \mu)^2)\right)^{n / d}.
\label{eq:P_limiting_approx_upper_bound_for_block_for_u_eq_d_refined}
\end{equation}

Comparing \eqref{eq:P_limiting_approx_upper_bound_for_block_for_u_eq_d} with \eqref{eq:P_clustering} shows that \eqref{eq:P_limiting_approx_upper_bound_for_block_for_u_eq_d} is an upper bound for clustering design also.
Thus by using random design, we can possibly achieve a better scaling of $\mathcal{P}$ in replication factor $d$ compared to clustering or cyclic design. 
Recall from Sec.~\ref{subsec:clustering_design} and \ref{subsec:cyclic_design} that $\mathcal{P} \to 0$ as $n \to \infty$ when $d = o(\log(n))$.
By \eqref{eq:P_limiting_approx_upper_bound_for_block_for_u_eq_d_refined}, we may have $\mathcal{P} \to 1$ when $d = \Omega(\log(n)^{1/3})$.
However, we cannot conclude this as we do not have a tight lower bound on $\mathcal{P}$ for random design. We leave this as an open problem. In the remainder, we discuss an approach for adding structure in random design so that we can derive a lower bound for $\mathcal{P}$.

\vspace{1ex}
\noindent
\textbf{Lower bound on $\mathcal{P}$}:\space
For other storage designs discussed so far, we found a lower bound on $\mathcal{P}$ by considering a set of ``worst-case'' demand vectors for service and showing that the system can meet the maximal load requirement if the demand vectors obey a particular sufficiency rule. Scan statistic has proved to be an effective tool to capture this sufficiency rule and obtain an asymptotically tight lower bound for $\mathcal{P}$ (recall Lemma~\ref{lm:P_bounds_for_rgap} and Corollary~\ref{cor:P_for_cyclic_as_d_to_infty}). The \emph{invariant structure implemented in service choice overlaps} made this possible for scan statistics. 

Random design, however, does not impose any structure on service choice overlaps. 
Even the extreme event of all objects getting assigned to the same set of $d$ nodes can happen with a non-zero probability. This makes it impossible to find sufficient conditions to meet the maximal load requirement that would give a lower bound for $\mathcal{P}$.
We next introduce a \emph{constrained random design} that limits the random node selection process. The goal with this is to instill enough structure in service choice overlaps to find a lower bound for $\mathcal{P}$.
For a given object $i$, let us define its \emph{$d$-hop siblings} as the following set
\[
    \Psi_i = \set{o_j \;\bigl|\; |j - i| = 0 \!\!\!\mod d, \;j \neq i, \;0 \leq j \leq n}.
\]
For a given object $i$, let us also define the number of $d$-hop siblings with which it has non-zero service choice overlaps as
\[
    v_i = \card{\set{o_j \;\bigl|\; o_j \in \Psi_i, \;\card{C_i \cap C_j} > 0}}.
\]
Let us also define a limit on the number of overlapping siblings as $v_i \leq v_{\max}$ for all $i$.
Then, we select $d$ nodes for each object by randomly drawing from all \emph{suitable} nodes without replacement. Unlike the random design (as described in Sec.~\ref{subsec:balanced_d_choice}), not all nodes are suitable for each object. A node is deemed suitable for an object only if storing the object on this node would maintain the limit $v_i \leq v_{\max}$.
Notice that this constraint limits the overlapping objects similar to $r$-gap design, but the constraint is more relaxed.

Using the structure implemented by constrained random design, we can find a lower bound for $\mathcal{P}$ as follows.
\begin{lemma}
  In $d$-choice allocation with constrained random design with maximum number of overlapping siblings $v_{\max}$
  \begin{equation}
    \mathcal{P} \geq \Prob{S_d^{(c)} \leq m d / v_{\max}}.
  \label{eq:P_lower_bound_for_constrained_random}
  \end{equation}
\label{lm:P_lower_bound_for_constrained_random}
\end{lemma}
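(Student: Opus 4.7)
The plan is to argue that the event $\{S_d^{(c)} \leq md/v_{\max}\}$ is a sufficient deterministic condition for the demand vector $\bm{\rho}$ to lie in the modified capacity region $\mathcal{C}_m$; combined with Def.~\ref{def:P} this immediately gives the claimed lower bound on $\mathcal{P}$. By the span characterization in \eqref{eq:modified_cap_region_w_service_choice_spans}, it is enough to verify, on the scan event, that $\sum_{i \in O}\rho_i \leq m\cdot\mathrm{span}(O)$ for every subset $O$ of objects.

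First I would partition the object indices into the $n/d$ disjoint windows $W_q = \{o_{qd}, \ldots, o_{qd+d-1}\}$ of $d$ consecutive indices, and let $Q_O$ be the set of windows that meet $O$. Since $S_d^{(c)} \leq md/v_{\max}$ bounds the total demand of any window, summing gives $\sum_{i \in O}\rho_i \leq |Q_O|\cdot md/v_{\max}$. I would then establish a matching span lower bound $\mathrm{span}(O) \geq |Q_O|\cdot d/v_{\max}$ via the following intuition: each window $W_q$ contributes $d$ service choices per object in $O \cap W_q$, and the sibling constraint $v_i \leq v_{\max}$ limits the duplicate counting of service choices between windows by a factor of $v_{\max}$. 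The two bounds together would yield $\sum_{i \in O}\rho_i \leq m\cdot\mathrm{span}(O)$, as required. An equivalent dual route I would keep in reserve is to exhibit a uniform-splitting assignment in which each object $o_i$ routes $\rho_i/d$ to each service choice in $C_i$, and then bound the resulting load on every node $u$ by grouping $O_u$ into windows of $d$ consecutive indices.

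I expect the span lower bound to be the main obstacle, since the sibling constraint only restricts overlaps between $d$-spaced siblings and does not directly control overlaps among objects in the same window or between non-sibling objects across different windows. Overcoming this will likely require decomposing the service-choice span by coset modulo $d$ and invoking the cap $|O_u^{(c)}| \leq v_{\max}+1$ (valid for each node $u$ and each coset $c$, since same-coset objects stored on the same node are mutually overlapping siblings) to convert the per-coset constraint into a global span bound. If this combinatorial route is too loose, the fallback is to dualize and construct an explicit non-uniform load assignment---routing each object's demand in proportion to the spare capacity of its service choices---and to verify directly on the scan event that no node exceeds load $m$, rather than proceed through the span characterization.
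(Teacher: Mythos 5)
Your main route has a genuine gap: the span bound $\mathrm{span}(O) \geq |Q_O|\, d/v_{\max}$ is simply false for admissible realizations of the constrained random design. The constraint limits only overlaps between $d$-hop siblings, i.e.\ objects whose indices agree modulo $d$; objects whose index difference is not a multiple of $d$ are completely unconstrained. Hence the design admits realizations in which $d$ objects taken one from each of $d$ distinct windows (chosen with pairwise distinct residues mod $d$, so no two are siblings) are all stored on the same set of $d$ nodes. For this $O$ you get $|Q_O| = d$ but $\mathrm{span}(O) = d$, so your bound fails whenever $v_{\max} < d$; worse, on the scan event each of these objects may individually carry demand up to $m d/v_{\max}$, so the cumulative demand can exceed $m\cdot\mathrm{span}(O)$ and no routing whatsoever (uniform split or proportional-to-spare-capacity) can serve it --- which is why your flow-based fallback fails for the same reason. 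Your per-coset cap $\lvert O_u^{(c)}\rvert \leq v_{\max}+1$ is a correct observation, but it only controls same-coset stacking on a node and says nothing about cross-coset co-location, which is exactly what breaks the bound. In short, a subset-by-subset verification of \eqref{eq:modified_cap_region_w_service_choice_spans} cannot work here, because outside a single residue class the constrained design imposes no structure at all.

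The paper's proof avoids this entirely by exploiting the constraint only where it has force: it considers the spiky extreme demand vectors supported on a single residue class mod $d$ (demand on $o_x, o_{x+d}, o_{x+2d}, \dots$ only), so that all active objects are mutual $d$-hop siblings and the cap $v_i \leq v_{\max}$ directly limits how many active objects can share a node; it argues each such spiky vector with per-object demand $m d/v_{\max}$ is servable, and then invokes convexity of the capacity region (as in the proof of Lemma~\ref{lm:P_bounds_for_rgap}) to pass from these extreme points to the scan event $\{S_d^{(c)} \leq m d/v_{\max}\}$. That restriction to coset-aligned demands is the idea missing from your proposal. Be aware, though, that if you rebuild the argument carefully along the paper's lines you should track the per-node count of mutually overlapping siblings, which can be $v_{\max}+1$ rather than $v_{\max}$ (a node shared by $t$ siblings gives each of them $t-1$ overlapping siblings), and the step identifying the convex hull of the coset-spiky vectors with the full scan event also deserves more justification than the paper gives; but those are issues with the paper's own sketch, not repairs that rescue your subset/window decomposition.
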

\begin{proof}
  See Appendix~\ref{subsec:proof_lm_P_lower_bound_for_constrained_random}.
\end{proof}
\begin{remark}
  Notice that the lower bound \eqref{eq:P_lower_bound_for_constrained_random} is the same as the lower bound given for cyclic design in \eqref{eq:P_bounds_for_cyclic}, except the dividing factor $v_{\max}$. Thus, we can derive asymptotic lower bounds for constrained random design by employing the same arguments we used for cyclic design between Theorem~\ref{thm:P_asymptotic_bounds_for_cyclic} and Corollary~\ref{cor:P_for_cyclic_as_d_to_infty}.
  Given this, if $v_{\max}$ stays finite as the system scale $n \to \infty$, we can conclude for contrained random design that $\mathcal{P} \to 1$ as $n \to \infty$ if and only if $d = \Omega(\log(n))$.
  \closeremark
\end{remark}




\section{Conclusions}
\label{sec:conclusions}
Distributed systems replicate data objects to balance the offered load across the storage nodes. Offered load should be as evenly distributed over the nodes as possible in order to provide fast and predictable data access performance. Furthermore, load balancing should be robust in the presence of skews and changes in object popularities. 

Load balancing performance is mainly determined by two design decisions: (1) the replication factor, (2) the assignment of objects to storage nodes. In this paper, we analyzed the performance implications of these two design choices by considering four storage schemes used in practical systems: clustering, cyclic, block and random design. In our analysis, we consider the goal of load balancing as maintaining the load on any node below a given threshold.

For the first design decision, we derived necessary and sufficient conditions to achieve the desired load balance in a system of $n$ nodes as $n \to \infty$. First, we found that the replication factor $d = \Omega(\log(n)^{1/3})$ is necessary for any storage scheme. When the overlaps between object copies are few but occasionally large (i.e., clustering and cyclic design), we found that $d = \Omega(\log(n))$ is necessary and sufficient. When the overlaps are many but consistently small (i.e., block and random design), we found that $d = \Omega(\log(n))$ is sufficient but not necessary. This overall implies that $d = \Omega(\log(n)^{1/3})$ might be sufficient. We do not have a proof for this and left it as an open problem.

For the second design decision, we showed that the choice of storage scheme depends on the level of skew in object demands.
We found that, in majority of the cases, many but consistently small overlaps between object copies is better than few but occasionally large overlaps, i.e., $\text{block} > \text{random} > \text{clustering} > \text{cyclic}$ where $>$ denotes ``better''. However, the impact of overlaps becomes the opposite when the level of skew goes beyond a level.

\section{Acknowledgements}
This research is supported by the National Science Foundation
under Grants No. CIF-2122400.


\bibliographystyle{unsrt}
\bibliography{references}

\newpage
\section{Appendix}
\label{sec:appendix}

\subsection{Proof of Lemma~\ref{lm:cap_region_for_system_w_replicated_storage}}
\label{subsec:proof_lm_cap_region_for_replicated_storage}
Let us start by restating the claim in a slightly different but equivalent way: 
for a demand vector $(\rho_1, \dots, \rho_n)$ to lie in the capacity region, it is \emph{necessary} and \emph{sufficient} to have $\sum_{i \in I}{\rho_i} \leq \mathrm{span}(I)$ for all $I \subset \{1, \ldots, n\}$. 
We here prove this claim in two steps as given below.

\vspace{1ex}
\noindent
\textbf{Necessary condition:} Span of a set of objects gives the total capacity available to serve those objects jointly.
It is surely impossible to stabilize the system when the cumulative demand for a set of objects exceeds the cumulative capacity available to serve them. Therefore, for the system to serve the demand with stability, the span of a set of objects must be at least as much as the cumulative demand for those objects.

If a system can serve a given demand vector $\bm{\rho}$, then the system would be able to serve any demand vector that is constructed by setting a subset of the demands within $\bm{\rho}$ to zero. That is, if the system can serve $\bm{\rho} = (\rho_1, \dots, \rho_n)$, then it can serve all demand vectors $\bm{\rho}^{I}$ for $I \subset \{1, \ldots, n\}$ such that $\rho_i^{I} = \rho_i$ for $i \in I$ and $\rho_i^{I} = 0$ for $i \not\in I$.
For instance, if a system can serve $(1, 2, 3)$, then it can also serve all of its smaller subsets: $(1, 0, 0)$, $(0, 2, 0)$, $(0, 0, 3)$, $(1, 2, 0)$, $(1, 0, 3)$, $(0, 2, 3)$. This is why it is necessary for the system to have sufficient cumulative capacity available to serve all subsets of objects $\{o_i \mid i \in I\}$ for $I \subset \{1, \ldots, k\}$ with the cumulative demand $\sum_{i \in I}{\rho_i}$. Thus the set of conditions given in \eqref{eq:cap_region_w_service_choice_spans} is necessary for system stability.

\vspace{1ex}
\noindent
\textbf{Sufficient condition:}
Let us first consider the case with all object demands being zero except for \emph{one} object, say $o_1$ without loss of generality. It is then easy to see that the system can serve the demand if $\rho_1 \leq \mathrm{span}(\{1\}) = \lvert C_1 \rvert$. 

Let us next consider the case with all object demands being zero except for two objects, say $o_1$ and $o_2$ without loss of generality. System can serve the demand if two conditions are met. First, the object demands can be served individually, that is,
\begin{equation}
    \rho_1 \leq \lvert C_1 \rvert, \quad \rho_2 \leq \lvert C_2 \rvert.
\label{eq:serving_two_objs_cond_1}
\end{equation}
Second, the object demands can be served jointly. It is easy to find the two extreme points at which the object demands can be served jointly: (1) $\rho_1 = \lvert C_1 \rvert, ~\rho_2 = \lvert C_2 - C_1 \rvert$, (2) $\rho_1 = \lvert C_1 - C_2 \rvert, ~\rho_2 = \lvert C_2 \rvert$. Given that the capacity region is a convex polytope, the system can serve any convex combination of these two extreme points, which implies that the system is stable if
\begin{equation}
    \rho_1 + \rho_2 \leq \lvert C_1 \cup C_2 \rvert.
\label{eq:serving_two_objs_cond_2}
\end{equation}
Putting \ref{eq:serving_two_objs_cond_1} and \ref{eq:serving_two_objs_cond_2} together, we conclude that the system can serve the demand if
\begin{equation}
    \rho_1 \leq \lvert C_1 \rvert, \quad \rho_2 \leq \lvert C_2 \rvert, \quad \rho_1 + \rho_2 \leq \lvert C_1 \cup C_2 \rvert.
\label{eq:serving_two_objs_cond}
\end{equation}
Note that the capacity at the nodes in $C_1 \cup C_2$ will be fully used to serve the demand if $\rho_1 + \rho_2 = \lvert C_1 \cup C_2 \rvert$. This observation will be useful for the next step.

Let us now consider the case with object demands being zero except for three objects, say $o_1$, $o_2$ and $o_3$ without loss of generality. We already know from above that the condition given in \ref{eq:serving_two_objs_cond} needs to be met for serving each pair of objects $\{o_1, o_2\}$, $\{o_1, o_3\}$ or $\{o_2, o_3\}$ jointly. We now need to find the sufficient condition for serving all the three objects $\{o_1, o_2, o_3\}$ jointly. It is easy to find the three extreme points at which the object demands can be served jointly:
\begin{itemize}[label={}]
    \item $\rho_1 = \lvert C_1 - (C_2 \cup C_3) \rvert, ~\rho_2 + \rho_3 = \lvert C_2 \cup C_3 \rvert$,
    \item $\rho_2 = \lvert C_2 - (C_1 \cup C_3) \rvert, ~\rho_1 + \rho_3 = \lvert C_1 \cup C_3 \rvert$,
    \item $\rho_3 = \lvert C_3 - (C_1 \cup C_2) \rvert, ~\rho_1 + \rho_2 = \lvert C_1 \cup C_2 \rvert$.
\end{itemize}
Given that the capacity region is a convex polytope, the system can serve any convex combination of these three extreme points, which implies that the system is stable if
\begin{equation}
    \rho_1 + \rho_2 + \rho_2 \leq \lvert C_1 \cup C_2 \cup C_2 \rvert.
\label{eq:serving_three_objs_cond_1}
\end{equation}
Putting all conditions together, we conclude that the system can serve the demand if
\begin{gather*}
    \rho_1 \leq \lvert C_1 \rvert, ~ \rho_2 \leq \lvert C_2 \rvert, ~ \rho_3 \leq \lvert C_3 \rvert, \\
    \rho_1 + \rho_2 \leq \lvert C_1 \cup C_2 \rvert, \\ \rho_1 + \rho_3 \leq \lvert C_1 \cup C_3 \rvert, \\ \rho_2 + \rho_3 \leq \lvert C_2 \cup C_3 \rvert, \\
    \rho_1 + \rho_2 + \rho_3 \leq \lvert C_1 \cup C_2 \cup C_3 \rvert.
\label{eq:serving_three_objs_cond}
\end{gather*}

Similar to above, we can continue incrementing the number of objects with non-zero demand until covering all $n$ objects. Using the same arguments outlined above, when all $n$ objects have non-zero demand, the system can serve the demand if $\sum_{i \in I}{\rho_i} \leq \mathrm{span}(I)$ for all $I \subset \{1, \ldots, n\}$. Thus the set of conditions given in \eqref{eq:cap_region_w_service_choice_spans} is sufficient for system stability.

\subsection{Proof of Corollary~\ref{cor:P_larger_demand_leq_P}}
\label{subsec:proof_cor_P_larger_demand_leq_P}
Recall from \eqref{eq:P_integration} that $\mathcal{P}$ can be written as the integration of the joint density of the object demands over the modified capacity region $\mathcal{C}_m$ as
\[
    \mathcal{P} = \int_{\bm{c} \in \mathcal{C}_m} \Prob{\bm{\rho} = \bm{c}} \diff \bm{c}.
\]
Recall also from \eqref{eq:P_integrand} that object demands $\rho_i$ are i.i.d., hence
\[
    \Prob{\bm{\rho} = \bm{c}} = \prod_{i = 1}^{k} f(c_i).
\]
The above expressions also hold for the demand vector $\bm{\rho}^\prime$ with demand distribution $F^\prime$.

Now recall from Sec.~\ref{subsec:cap_region} that $\mathcal{C}_m$ is a convex polytope, which is defined in the non-negative orthant and it contains the origin (zeros vector).
Let us denote the boundary of $\mathcal{C}_m$ as $\mathcal{B}_m$. Note that the maximal access load in the system is equal to $m$ under the demand vectors in $\mathcal{B}_m$. Each point $\bm{b}$ in $\mathcal{B}_m$ defines a line segment $l_{\bm{b}}$ to the origin as
\[
    l_{\bm{b}} = \{\alpha \cdot \bm{b}\; |\; 0 \leq \alpha \leq 1\}.
\]
We can then cover all vectors in $\mathcal{C}_m$ by line segments $l_{\bm{\rho}}$, which allows us to express the integral given above for $\mathcal{P}$ as
\begin{equation}
    \mathcal{P} = \int_{\bm{b} \in \mathcal{B}_m} \int_{0}^{1} \Prob{\bm{\rho} = \alpha \cdot \bm{b}} \diff \alpha \diff \bm{b}.
\label{eq:P_integral_w_line_segments}
\end{equation}
where the inner integral is over a given line segment $l_{\bm{\rho}}$.

The ordering $F^\prime(x) < F(x)$ implies that demand vectors $\bm{\rho}$ with smaller coordinate values $\rho_i$ become less likely when the demand distribution $F(x)$ is replaced with $F^\prime(x)$. This implies that the inner integral in \eqref{eq:P_integral_w_line_segments} becomes smaller when $\rho$ is replaced with $\rho^\prime$. This shows that $\mathcal{P} > \mathcal{P}^\prime$.

\subsection{Proof of Lemma~\ref{lm:balancing_allocation_reduces_cum_overlap}}
\label{subsec:proof_lm_balancing_allocation_reduces_cum_overlap}
Consider a fixed value of $t$ in $[1, u)$.
Let us denote the nodes with $u$ and $v$ objects as node-$u$ and node-$v$ respectively.
Let us now pick one of the object copies on node-$u$, which we refer to as the tagged object.
The tagged object participates in ${u - 1 \choose t - 1}$ many $t$-subsets of objects stored on node-$u$. Hence it contributes ${u - 1 \choose t - 1}$ overlaps to the cumulative overlap $\mathrm{CumOverlap}_t$.

Following the statement given in the Lemma, let us now move the tagged object from node-$u$ to node-$v$. The tagged object will no longer overlap with the objects on node-$u$ hence its contribution to the cumulative overlap due to its overlap with objects on node-$u$ is now zero. However, it now overlaps with the objects stored on node-$v$ and contributes ${v \choose t - 1}$ overlaps to the cumulative overlap. Given that $u > v + 1$, we have
\[
    {v \choose t - 1} > {u - 1 \choose t - 1}.
\]
This shows that moving an object from node-$u$ to node-$v$ reduces $\mathrm{CumOverlap}_t$ by ${v \choose t - 1} - {u - 1 \choose t - 1}$.

\subsection{Proof of Lemma~\ref{lm:balanced_d_choice_allocation_cum_overlap}}
\label{subsec:proof_lm_balanced_d_choice_allocation_cum_overlap}
In Lemma~\ref{lm:balancing_allocation_reduces_cum_overlap}, we have shown that moving an object from one node to another with fewer objects reduces the cumulative overlaps.
Using the same arguments used in Lemma~\ref{lm:balancing_allocation_reduces_cum_overlap}, we can show that the opposite is also true: moving an object from one node to another with more objects increases the cumulative overlaps. This implies that the balanced allocation minimizes the cumulative overlap.

We next find the value of cumulative overlap $\mathrm{CumOverlap}_t$ for a given $t$.
In a balanced $d$-choice allocation, each node stores $d$ different object copies. On every node, there are ${d \choose t}$ many $t$-subsets overlapping on the node. This together with the fact that there are $n$ nodes in the system gives us the expression for cumulative overlap as given in \eqref{eq:cum_overlap_in_balanced_d_choice}.

\subsection{Approximating block design with random construction}
\label{subsec:approx_block_design_w_random_construction}
As described in Section~\ref{subsec:balanced_d_choice}, block design exists only for system parameters meeting the following equality: $n = d^2 - d + 1$.
It is however possible to construct storage designs that approximate block design. We here give an example randomized construction that we used in this paper while evaluating $\mathcal{P}$ numerically for block design.

Our randomized storage construction is given as follows.
We start by putting $d$ copies for each object into a list. We then shuffle the list and put the object copies into a queue that we refer to as \emph{object queue}.
After this, all we do is essentially pulling the object copies one by one from the queue and placing each on a randomly selected node.
However due to random selection of nodes, for some of the objects we might end up choosing a node that 
(1) already stores a copy of the object or 
(2) already stores $d$ different object copies (recall that each node stores $d$ different objects in a $d$-choice allocation).
If we hit one of these two cases, we continue looking for a viable node for the object pulled from the queue until we find one, which we refer to as \emph{object at hand}.
We first try by incrementing the node index until we land on a viable node. 
If we still fail to find a viable node, we go back to the first node and start iterating over all the nodes until we land on a node (tagged node) that does not store a copy for the object. We then pick one of the objects stored on this tagged node at random, remove it from the node and put it back in the object queue. This makes the node viable for the object at hand and we place it there. Putting the removed object copy back into the object queue also makes sure that we will find a viable node for it when we pull it from the queue next time.

Recall that block design distributes the overlaps evenly between all pairs of service choices. The construction described above obviously does not guarantee this even when the system parameters allow for block design. It does not even attempt to distribute overlaps evenly across the service choices, but merely assigns the object copies to nodes using a ``guided'' random process. However, we find that it tends to achieve small overlaps between the service choices.
Table~\ref{table:overlap_distribution} shows the distribution of overlaps computed numerically. In a single \emph{run}, we use the construction given above and create storage design instance for a given $n$ and $d$, and compute the overlap distribution for the design. For each $n$ and $d$, we execute $100$ runs, and report the average and standard deviation values for each overlap size.

We draw two conclusions from Table~\ref{table:overlap_distribution}.
First, the randomized construction given above tends to minimize the overlaps between the service choices. Majority of the overlaps are of size $1$, and the fraction of larger overlaps goes down quickly (similar to Zipf's Law) as the overlap size increases.
As the replication factor $d$ increases, the overlaps tend to get larger.
This is because, when $n$ is fixed, larger $d$ increases the chance of overlaps. This is why block design requires quadratically larger number of nodes ($d^2 + d - 1$) as $d$ increases.
In practice, $d$ is desired to be and is typically much smaller than $n$, in which case the randomized construction tends to be a very good approximator of the block design.
As shown in Table~\ref{table:overlap_distribution}, when $d < 0.05 n$, the randomized construction yields an overlap of size $1$ between more than $\myapprox 94\%$ of the service choices.

\begin{table*}[h]
\centering
\begin{tabular}{cccccccccccc}
                                            &                         & \multicolumn{10}{c}{Overlap size}                                                                                                                                                      \\ \cline{3-12} 
                                            & \multicolumn{1}{c|}{}   & \multicolumn{2}{c|}{1}             & \multicolumn{2}{c|}{2}             & \multicolumn{2}{c|}{3}             & \multicolumn{2}{c|}{4}             & \multicolumn{2}{c|}{5}             \\ \hline
\multicolumn{1}{|c|}{n}                     & \multicolumn{1}{c|}{d}  & mean   & \multicolumn{1}{c|}{stdev} & mean   & \multicolumn{1}{c|}{stdev} & mean   & \multicolumn{1}{c|}{stdev} & mean   & \multicolumn{1}{c|}{stdev} & mean   & \multicolumn{1}{c|}{stdev} \\ \hline
\multicolumn{1}{|c|}{\multirow{9}{*}{100}}  & \multicolumn{1}{c|}{2}  & 0.997 & \multicolumn{1}{c|}{0.004} & 0.010 & \multicolumn{1}{c|}{0}     &       & \multicolumn{1}{c|}{}      &       & \multicolumn{1}{c|}{}      &       & \multicolumn{1}{c|}{}      \\
\multicolumn{1}{|c|}{}                      & \multicolumn{1}{c|}{3}  & 0.986 & \multicolumn{1}{c|}{0.007} & 0.014 & \multicolumn{1}{c|}{0.007} & 0.003 & \multicolumn{1}{c|}{0}     &       & \multicolumn{1}{c|}{}      &       & \multicolumn{1}{c|}{}      \\
\multicolumn{1}{|c|}{}                      & \multicolumn{1}{c|}{4}  & 0.966 & \multicolumn{1}{c|}{0.008} & 0.034 & \multicolumn{1}{c|}{0.008} & 0.002 & \multicolumn{1}{c|}{0}     &       & \multicolumn{1}{c|}{}      &       & \multicolumn{1}{c|}{}      \\
\multicolumn{1}{|c|}{}                      & \multicolumn{1}{c|}{5}  & 0.935 & \multicolumn{1}{c|}{0.006} & 0.063 & \multicolumn{1}{c|}{0.006} & 0.002 & \multicolumn{1}{c|}{0.001} & 0.001 & \multicolumn{1}{c|}{0}     &       & \multicolumn{1}{c|}{}      \\
\multicolumn{1}{|c|}{}                      & \multicolumn{1}{c|}{6}  & 0.890 & \multicolumn{1}{c|}{0.006} & 0.104 & \multicolumn{1}{c|}{0.006} & 0.005 & \multicolumn{1}{c|}{0.001} & 0.001 & \multicolumn{1}{c|}{0}     &       & \multicolumn{1}{c|}{}      \\
\multicolumn{1}{|c|}{}                      & \multicolumn{1}{c|}{7}  & 0.842 & \multicolumn{1}{c|}{0.008} & 0.145 & \multicolumn{1}{c|}{0.008} & 0.012 & \multicolumn{1}{c|}{0.002} & 0.001 & \multicolumn{1}{c|}{0.001} & 0.001 & \multicolumn{1}{c|}{0}     \\
\multicolumn{1}{|c|}{}                      & \multicolumn{1}{c|}{8}  & 0.783 & \multicolumn{1}{c|}{0.006} & 0.193 & \multicolumn{1}{c|}{0.006} & 0.023 & \multicolumn{1}{c|}{0.003} & 0.002 & \multicolumn{1}{c|}{0.001} & 0.001 & \multicolumn{1}{c|}{0}     \\
\multicolumn{1}{|c|}{}                      & \multicolumn{1}{c|}{9}  & 0.715 & \multicolumn{1}{c|}{0.007} & 0.239 & \multicolumn{1}{c|}{0.007} & 0.042 & \multicolumn{1}{c|}{0.004} & 0.004 & \multicolumn{1}{c|}{0.001} & 0.001 & \multicolumn{1}{c|}{0}     \\
\multicolumn{1}{|c|}{}                      & \multicolumn{1}{c|}{10} & 0.642 & \multicolumn{1}{c|}{0.007} & 0.284 & \multicolumn{1}{c|}{0.007} & 0.065 & \multicolumn{1}{c|}{0.004} & 0.008 & \multicolumn{1}{c|}{0.002} & 0.001 & \multicolumn{1}{c|}{0}     \\ \hline
\multicolumn{1}{|c|}{\multirow{9}{*}{1000}} & \multicolumn{1}{c|}{2}  & 0.999 & \multicolumn{1}{c|}{0}     & 0.001 & \multicolumn{1}{c|}{0.001} &       & \multicolumn{1}{c|}{}      &       & \multicolumn{1}{c|}{}      &       & \multicolumn{1}{c|}{}      \\
\multicolumn{1}{|c|}{}                      & \multicolumn{1}{c|}{3}  & 0.998 & \multicolumn{1}{c|}{0.001} & 0.002 & \multicolumn{1}{c|}{0.001} &       & \multicolumn{1}{c|}{}      &       & \multicolumn{1}{c|}{}      &       & \multicolumn{1}{c|}{}      \\
\multicolumn{1}{|c|}{}                      & \multicolumn{1}{c|}{4}  & 0.997 & \multicolumn{1}{c|}{0.001} & 0.003 & \multicolumn{1}{c|}{0}     &       & \multicolumn{1}{c|}{}      &       & \multicolumn{1}{c|}{}      &       & \multicolumn{1}{c|}{}      \\
\multicolumn{1}{|c|}{}                      & \multicolumn{1}{c|}{5}  & 0.993 & \multicolumn{1}{c|}{0}     & 0.006 & \multicolumn{1}{c|}{0}     &       & \multicolumn{1}{c|}{}      &       & \multicolumn{1}{c|}{}      &       & \multicolumn{1}{c|}{}      \\
\multicolumn{1}{|c|}{}                      & \multicolumn{1}{c|}{6}  & 0.989 & \multicolumn{1}{c|}{0.001} & 0.011 & \multicolumn{1}{c|}{0.001} &       & \multicolumn{1}{c|}{}      &       & \multicolumn{1}{c|}{}      &       & \multicolumn{1}{c|}{}      \\
\multicolumn{1}{|c|}{}                      & \multicolumn{1}{c|}{7}  & 0.984 & \multicolumn{1}{c|}{0.001} & 0.015 & \multicolumn{1}{c|}{0.001} & 0.001 & \multicolumn{1}{c|}{0}     &       & \multicolumn{1}{c|}{}      &       & \multicolumn{1}{c|}{}      \\
\multicolumn{1}{|c|}{}                      & \multicolumn{1}{c|}{8}  & 0.979 & \multicolumn{1}{c|}{0.001} & 0.020 & \multicolumn{1}{c|}{0.001} & 0.001 & \multicolumn{1}{c|}{0}     &       & \multicolumn{1}{c|}{}      &       & \multicolumn{1}{c|}{}      \\
\multicolumn{1}{|c|}{}                      & \multicolumn{1}{c|}{9}  & 0.971 & \multicolumn{1}{c|}{0.001} & 0.028 & \multicolumn{1}{c|}{0.001} & 0.001 & \multicolumn{1}{c|}{0}     &       & \multicolumn{1}{c|}{}      &       & \multicolumn{1}{c|}{}      \\
\multicolumn{1}{|c|}{}                      & \multicolumn{1}{c|}{10} & 0.963 & \multicolumn{1}{c|}{0.001} & 0.036 & \multicolumn{1}{c|}{0.001} & 0.001 & \multicolumn{1}{c|}{0.001} &       & \multicolumn{1}{c|}{}      &       & \multicolumn{1}{c|}{}      \\ \hline
\end{tabular}
\caption{Overlap size distribution for the $d$-choice allocation instances created using the randomized construction process described in Appendix~\ref{subsec:approx_block_design_w_random_construction}.}
\label{table:overlap_distribution}
\end{table*}

\subsection{Proof of Theorem~\ref{thm:P_upper_bound_for_any_storage_allocation}}
\label{subsec:proof_thm_P_upper_bound_for_any_storage_allocation}
As described in Sec.~\ref{subsec:calculating_P_for_given_storage_allocation}, considering only a subset of the demand-vs-capacity conditions (in \eqref{eq:modified_cap_region_w_service_choice_spans}) gives an upper bound on $\mathcal{P}$.
Let us consider only one of the conditions for $t$ objects and choose these objects randomly, where $t$ is an integer in $[1, n]$. 
The span of service choices for these objects (total capacity available for jointly serving them) would then be the random variable $\mathrm{span}_t$ as defined in Def.~\ref{def:Pr_span}.
The cumulative capacity for these objects would be sufficient to keep the maximal load below $m$ only if 
\[
    \rho_1 + \dots + \rho_t \leq m \cdot \mathrm{span}_t.
\]
Then, the probability $\mathcal{P}_t$ that $\mathrm{span}_t$ is sufficient to meet the cumulative demand offered for them would be given as \eqref{eq:P_t}.
By the discussion above $\mathcal{P} < \mathcal{P}_t$.
We can then take the minimum of $\mathcal{P}_t$ for $t \in [1, n]$ and that would give us the upper bound in \eqref{eq:P_upper_bound_for_any_storage_allocation}.

\subsection{Proof of Theorem~\ref{thm:P_demand_is_bernoulli_lambda_eq_d}}
\label{subsec:proof_thm_P_demand_is_bernoulli_lambda_eq_d}
Demand of the active objects is set to $m d$; hence each active object fully uses all its service choices. This is why all pairs of active objects should have zero overlap in their service choices for the system to achieve the maximal load requirement.
We will set the maximal load $m = 1$ in the remainder of the proof for the sake of keeping the exposition clean. The proof can easily be extended for $m < 1$.

In the Theorem statement, the expression for $\mathcal{P}$ is presented in two equations: \eqref{eq:P_demand_is_bernoulli_lambda_eq_d_for_clustering_cyclic_block} given for clustering, cyclic or block design, and \eqref{eq:P_demand_is_bernoulli_lambda_eq_d_for_random} given for the random design.
As shown in the following, the derivation follows two different processes for the relevant storage constructions.

As the object demands are distributed as $d \times \mathrm{Bernoulli}(p)$, the number of active objects, denoted as $A$, is a random variable distributed as $\mathrm{Binomial}(n, p)$.
Let $a_i$ denote the objects selected to be active by the demand distribution.
In order to derive the expression for $\mathcal{P}$, we use the following steps:
(1) iterate over the active objects $a_i$ in order starting at $i = 1$,
(2) derive the probability $P_i$ that the active object $a_i$ does not overlap with any of the previous active objects $a_j$ for $j < i$,
(3) take the product of $P_i$ defined for all active objects.
Obviously, $P_1 = 1$ for any storage design.

\vspace{1ex}
\noindent
\textbf{Clustering, cyclic or block design}:
As discussed above, each time an object is selected to be active, all objects that overlap with it should be excluded from the active object selection. Let $c$ be the number of objects that need to be excluded each time an object is selected to be active. We have (a) $c = d$ for clustering design, (b) $c = 2d - 1$ for cyclic design, (c) $c = d^2 - d + 1$ for block design.

If the first $i - 1$ active objects have zero service choice overlap, they will lead to excluding $(i - 1) \cdot c$ objects from active object selection process.
Thus, the $i$th active object $a_i$ will not overlap with the previous objects with probability $(n - (i - 1) \cdot c) / (n - i + 1)$.
Since $A$ objects are selected to be active in total, we get the product of probabilities as stated in \eqref{eq:P_demand_is_bernoulli_lambda_eq_d_for_clustering_cyclic_block}.
The indicator function in the expression comes from the fact that it is impossible to select $A$ active objects without any service choice overlap if the number of objects $A \cdot c$ covered by them is greater than the total number of objects $n$.

\vspace{1ex}
\noindent
\textbf{Random design}:
In random design, for each active object, $d$ nodes are selected at random and their capacity is fully used to serve the demand for the object. Hence, for the system to meet the maximal load requirement, no other active object should select any node that has been selected before.
Notice that this rationale differs from the one we used for the previous storage designs. In this case, we focus on nodes being excluded by the active objects rather than other objects.

If the first $i - 1$ objects are selected to have zero overlaps in their service choices, their selection will exclude (cover) $(i - 1) \cdot d$ nodes.
Thus, the nodes selected for the $i$th active object $a_i$ will not overlap with the previous objects with probability ${n - (i - 1) \cdot d \choose d} / {n \choose d}$.
Taking the product of these probabilities for all $A$ active objects gives \eqref{eq:P_demand_is_bernoulli_lambda_eq_d_for_random}.
The indicator function in the expression comes from the fact that, in this case, it is not possible to select $A$ active objects without any service choice overlap if the number of nodes $A \cdot d$ covered by them is greater than the total number of nodes $n$.

\subsection{Proof of Lemma~\ref{lm:P_clustering}}
\label{subsec:proof_lm_P_clustering}
Notice that in a storage allocation with clustering design, clusters are decoupled from each other in terms of the storage nodes and the object demands. This is why we can think of each cluster as an independent sub-system. 

There are $n / d$ clusters in the system ($d|n$).
Let us use $\mathcal{P}_c$ to denote the probability that a cluster meets the maximal load requirement. Given that clusters are identical and they are independent from each other, we can express the probability that the entire system meets the maximal load requirement as
\begin{equation}
    \mathcal{P} = \left(\mathcal{P}_c\right)^{n / d}.
\label{eq:P_w_P_c}
\end{equation}

Each cluster stores $d$ objects and each object is hosted on each node within the cluster. Thanks to this, the demand for any of the $d$ objects can be served at any node within the cluster. This implies that a cluster can meet the maximal load requirement as long as the cumulative demand offered on the cluster is less than the cumulative capacity available within the cluster. That is
\[
    \mathcal{P}_c = \Prob{\rho_1 + \dots + \rho_d \leq m \cdot d}.
\]
Substituting this expression of $\mathcal{P}_c$ in \eqref{eq:P_w_P_c} gives us \eqref{eq:P_clustering}.


\subsection{Proof of Theorem~\ref{thm:P_bounds_for_clustering}}
\label{subsec:proof_thm_P_bounds_for_clustering}
As given in the Theorem statement, object demands $\rho_i$ are sub-gaussian in the sense that there exists constants $c, C > 0$ such that
\begin{equation}
    \exp(c t^2) \leq \phi_{\rho_i}(t) \leq \exp(C t^2), \qquad \forall t > 0,
\label{eq:rho_i_sub_gaussian_inequality}
\end{equation}
where $\phi_{\rho_i}(t)$ is the moment generating function of $\rho_i$.

Let us modify the object demands and define their \emph{zero-mean} versions as $\tilde{\rho}_i = \rho_i - \mu$ for $\mu = \Exp{\rho_i}$. Given that $\rho_i$ are sub-gaussian in the sense given in \eqref{eq:rho_i_sub_gaussian_inequality} and the moment generating function of $\tilde{\rho_i}(t)$ is given as $\phi_{\tilde{\rho_i}}(t) = \exp(-\mu t) \phi_{\rho_i}(t)$, there exists $c^\prime, C^\prime > 0$ such that
\begin{equation}
    \exp(c^\prime t^2) \leq \phi_{\rho_i}(t) \leq \exp(C^\prime t^2), \qquad \forall t > 0.
\label{eq:tilde_rho_i_sub_gaussian_inequality}
\end{equation}
That is, $\tilde{\rho}_i$ are sub-gaussian in the same sense that $\rho_i$ are sub-gaussian.
This, together with the fact that $\tilde{\rho}_i$ are zero-mean makes it possible to use Theorem 3 in \cite{NonAsymptoticLowerTailBounds:ZhangZ20} and find the following bounds. There exists constants $\alpha, \beta, \gamma > 0$ such that the sum $\tilde{S} = \tilde{\rho}_1 + \dots + \tilde{\rho}_d$ satisfies
\[
    \exp(-\alpha x^2 / d) \leq 1 - \tilde{F}_d(x) \leq \gamma \exp(-\beta x^2 / d), \quad \forall x \geq 0.
\]
where $\tilde{F}_d(x)$ denotes the distribution function for $\tilde{S}$.

Let us now define the sum $S = \rho_1 + \dots + \rho_d$ and its distribution function as $F_d(x)$. As shown in Lemma~\ref{lm:P_clustering}, each cluster in the storage allocation meets the maximal load $m$ with probability $F_d(m \cdot d)$.
Given that $\tilde{\rho}_i = \rho_i - \mu$, we have
\[
    F_d(m \cdot d) = \tilde{F}_d((m - \mu) \cdot d).
\]
Then, using the bounds given above for $\tilde{F}_d$, we obtain the following bounds on $F_d(m \cdot d)$
\begin{equation*}
\begin{split}
    &1 - \gamma \exp(-\beta (m - \mu)^2 \cdot d)  \\
    &\leq F_d(m \cdot d) \\
    &\leq 1 - \exp(-\alpha (m - \mu)^2 \cdot d),
    \quad \forall m \geq \mu.
\end{split}
\end{equation*}
Substituting these bounds in \eqref{eq:P_clustering} gives us \eqref{eq:P_bounds_for_clustering}.

\subsection{Proof of Corollary~\ref{cor:P_for_clustering_as_d_to_infty}}
\label{subsec:proof_cor_P_for_clustering_as_d_to_infty}
Let us suppose that $d = c\log(n)^\alpha$ for constants $c, \alpha > 0$.
Both the lower and upper bound in \eqref{eq:P_bounds_for_clustering} converge to $1$ as $n \to \infty$ if $\alpha \geq 1$, and the bounds converge to $0$ if $0 < \alpha < 1$.

\subsection{Proof of Lemma~\ref{lm:P_upper_bound_for_any_design}}
\label{subsec:proof_lm_P_upper_bound_for_any_design}
System can possibly meet the maximal load requirement only if a set of objects $O$ has a cumulative demand less than its scaled span $m \cdot \mathrm{span}(O)$.
In a $d$-choice allocation, when objects in $O$ have zero overlap in their service choices, they would attain the maximum span, which is given as $\card{O} \cdot md$.
This implies that it is necessary for the cumulative demand for any $s$ consecutive objects to be less than $s \cdot md$ in order for the system to possibly meet the maximal load requirement.
This necessary condition gives us the following upper bound
\[
    \mathcal{P} \leq \Prob{S_s^{(c)} \leq s \cdot m \; d}.
\]
This bound holds for any $1 \leq s \leq n$. Also, the minimum of these bounds over $s$ should also be an upper bound. This gives us \eqref{eq:P_upper_bound_for_any_design}.

\subsection{Convergence of circular scan statistic to its non-circular counterpart}
\label{subsec:circular_scan}
%
We here show that the circular $s$-scan statistic $S_s^{(c)}$ defined for $n$ i.i.d. samples (see Def.~\ref{def:scan_statistic}) converge to its non-circular counterpart $S_s$.
Note that the proofs presented in this section are mostly translation of those given for circular maximal $d$-spacing in \cite{LoadBalancing:AktasFS21}[Appendix~E] to the scan statistic we make use of in this paper. This translation is possible after seeing that maximal spacing is a specialized scan statistic for the spacings between the ordered uniform samples on the unit line.
We here present the proofs for scan statistic for completeness.

In the following, we show convergence first in distribution, then in probability, and finally almost surely.
Note that showing almost sure convergence implies convergence in probability, which then implies convergence in distribution.
We present the convergence in this order for a more streamlined exposition of the arguments.

\begin{lemma}
  For $d < n$,
  \begin{equation*}
    \Prob{S_s > x} \leq \Prob{S_s^{(c)} > x} \leq \frac{n}{n-d} \Prob{S_s > x}.
  \end{equation*}
\label{eq:Pr_Sc_leq_scaled_Pr_S}
\end{lemma}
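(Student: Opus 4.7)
The lower bound is immediate from the definitions. The windows appearing in $S_s = \max_{1 \leq i \leq n-s+1} S_{s,i}$ are precisely the $n-s+1$ non-wrapping windows among the $n$ windows defining $S_s^{(c)}$, so $S_s \leq S_s^{(c)}$ pointwise, and therefore $\Prob{S_s > x} \leq \Prob{S_s^{(c)} > x}$.

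For the upper bound, the plan is to exploit the stationarity of the i.i.d.\ sequence via a cyclic shift argument. For each $k \in \{0, 1, \ldots, n-1\}$, let $T_k$ denote the non-circular scan statistic of the shifted sequence $X_{k+1}, X_{k+2}, \ldots, X_{k+n}$ (indices read modulo $n$). Because $X_1, \ldots, X_n$ are i.i.d., each $T_k$ has the same distribution as $S_s$, so $\Prob{T_k > x} = \Prob{S_s > x}$ for every $k$. The key observation is a double-counting relationship between these shifted scans and $S_s^{(c)}$.

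Specifically, fix a realization for which $S_s^{(c)} > x$, witnessed by some circular window starting at index $i \in \{1, \ldots, n\}$. In the shifted sequence with offset $k$, this same window appears as a non-wrapping window of length $s$ precisely when $i - k \bmod n$ lies in $\{1, \ldots, n-s+1\}$, which happens for exactly $n-s+1$ values of $k$. For each such $k$ we have $T_k > x$. Hence
\begin{equation*}
\sum_{k=0}^{n-1} \I{T_k > x} \;\geq\; (n-s+1)\, \I{S_s^{(c)} > x}.
\end{equation*}
Taking expectations and using $\Prob{T_k > x} = \Prob{S_s > x}$ yields $n\,\Prob{S_s > x} \geq (n-s+1)\,\Prob{S_s^{(c)} > x}$, i.e.\ the stated upper bound (with the denominator $n-d$ corresponding to the scan length used; the bound is monotone in the denominator, so the slightly looser $n/(n-d)$ holds whenever $d \geq s-1$, as in the application to $S_d^{(c)}$).

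The argument is essentially a careful application of stationarity plus a counting identity, so no major analytic obstacle is expected. The one step that needs care is the index-chasing modulo $n$ verifying that each witnessing circular window contributes to exactly $n-s+1$ shifts; getting the off-by-one correct there is what pins down the constant $n/(n-d)$.
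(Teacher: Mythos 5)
Your proof is correct, and while it rests on the same underlying symmetry as the paper's argument (an i.i.d.\ sequence is exchangeable under cyclic shifts, and a witnessing window of length $s$ remains a non-wrapping window in all but $s-1$ of the $n$ shifts), the execution is genuinely different. The paper routes the counting through a representative set $R^\prime \subset R = \{S_s > x\}$ containing no two cyclically equivalent sequences, arguing $(n-d)\Prob{R^\prime} \leq \Prob{R}$ and $n\,\Prob{R^\prime} = \Prob{C}$ for $C = \{S_s^{(c)} > x\}$, and then dividing. You replace that orbit bookkeeping with the pointwise double-counting inequality
\begin{equation*}
  \sum_{k=0}^{n-1} \I{T_k > x} \;\geq\; (n-s+1)\, \I{S_s^{(c)} > x},
\end{equation*}
followed by taking expectations and using $T_k \overset{d}{=} S_s$. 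This buys two things. First, it sidesteps the fragile step $n\,\Prob{R^\prime} = \Prob{C}$, which as written requires every realization in $C$ to have $n$ distinct cyclic shifts and a measurable selection of representatives (problematic for periodic realizations and for discrete demand distributions such as the Bernoulli demands used elsewhere in the paper); your indicator identity holds unconditionally and only needs linearity of expectation. Second, it produces the slightly sharper constant $n/(n-s+1)$, from which the stated $n/(n-d)$ follows, as you note, once the window length and $d$ are matched ($s = d$, as in the application to $S_d^{(c)}$) — the same $s$-versus-$d$ slack is already present in the paper's own statement and proof, so your handling of it is appropriate.
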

\begin{proof}
  Let us denote the events $\left\{S_s > x\right\}$ and $\left\{S_s^{(c)} > x\right\}$ respectively with $R$ and $C$.
  
  The first inequality is immediate; if a sequence of random samples $\bm{x} = (x_1, x_2, \dots, x_n) \in R$ then $\bm{x} \in C$, while the opposite direction may not hold. Thus, $R \subseteq C$, hence $\Prob{R} \leq \Prob{C}$.
  
  Next we show the second inequality.
  Let a random sample sequence $\bm{x} \in R$. Then, at least $n - d$ different permutations of $\bm{x}$ lie in $R$. In order to see this, let the maximum $s$-scan within $\bm{x}$ be $\bm{y} = (x_i, \dots, x_{i+s-1})$. Shifting (by feeding what is shifted out back in the sequence at the opposite end) $\bm{x}$ to the left by at most $i - 1$ times will preserve $\bm{y}$, hence each of the $i - 1$ shifted versions will also lie in $R$. Similarly, shifting $\bm{x}$ to the right by at most $n - (i+s-1)$ times will also preserve $\bm{y}$.
  We call such permutations, which are obtained by shifting with wrapping around, a \textit{cyclic permutation}.
  
  Let us introduce a set $R^\prime \subset R$ such that for any $\bm{x} \in R^\prime$, no cyclic permutation of $\bm{x}$ lies in $R^\prime$. $R$ contains at least $n - s$ cyclic permutations of every $\bm{x} \in R^\prime$. This together with the fact that probability of sampling a sequence $\bm{x}$ is independent of the order of the samples gives us $(n - d) \Prob{R^\prime} \leq \Prob{R}$.
  
  Now let $\bm{x}^\prime \in C$. All $n - 1$ cyclic permutations of $\bm{x}^\prime$ will also lie in $C$ (recall that we are now working in the circular setting).
  This together with the fact $R^\prime \subset R \subseteq C$, and the fact that probability of sampling a sequence $\bm{x}$ is independent of the order of the samples gives us $n \cdot \Prob{R^\prime} = \Prob{C}$. 
  Putting it all together, we have $\Prob{C} / n = \Prob{R^\prime} \leq \Prob{R} / (n - d)$, which yields the second inequality.
  
\end{proof}

\begin{lemma}
  For $d = o(n)$, $S_s^{(c)} / S_s \to 1$ in probability as $n \to \infty$.
\label{lm:S_circular_convergence_in_prob}
\end{lemma}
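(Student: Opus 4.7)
The plan is to prove the stronger statement $\Prob{S_s^{(c)} = S_s} \to 1$, from which convergence of the ratio to $1$ in probability is immediate. Index the circular scans by starting position, $V_i = X_i + X_{i+1} + \cdots + X_{i+s-1}$ with indices mod $n$, so $S_s = \max_{1 \leq i \leq n-s+1} V_i$, $S_s^{(c)} = \max_{1 \leq i \leq n} V_i$, and the wrap-around starts form the cyclic arc $W = \{n-s+2, \dots, n\}$ of length $s-1$. Let $A = \{i : V_i = \max_j V_j\}$ be the (nonempty) argmax set. Then the event $\{S_s^{(c)} > S_s\}$ equals $\{A \subseteq W\}$, since the non-circular scan attains the circular max unless every index realizing that max is a wrap index.

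The key input is that $(V_1, \dots, V_n)$ is cyclically stationary: i.i.d.-ness of $(X_1, \dots, X_n)$ makes their joint law invariant under cyclic shift, which descends to the $V_i$'s and hence to the random set $A$. I combine this with an arc-counting identity. For any nonempty $A \subseteq \{1, \dots, n\}$, the number of cyclic shifts $W + i$ (mod $n$) that contain $A$ equals $\max(0, \, s - \mathrm{span}(A))$, where $\mathrm{span}(A)$ is the length of the minimum cyclic arc enclosing $A$; this count is largest, equal to $s - 1$, when $A$ is a singleton. Therefore $\sum_{i=0}^{n-1} \mathbbm{1}(A \subseteq W + i) \leq s - 1$ pointwise.

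Taking expectations and using cyclic stationarity to replace each $\Prob{A \subseteq W + i}$ by $\Prob{A \subseteq W}$ yields $n \cdot \Prob{A \subseteq W} \leq s - 1$, so $\Prob{S_s^{(c)} > S_s} \leq (s-1)/n$. Under $s = o(n)$ this vanishes. Combined with $S_s^{(c)} \geq S_s$ and $\Prob{S_s = 0} \leq \Prob{X_1 = 0}^n \to 0$ (which holds for any nondegenerate nonnegative demand distribution, since $S_s = 0$ forces every $X_i$ to vanish), the ratio $S_s^{(c)}/S_s$ is well defined with probability tending to $1$ and equals $1$ on the coincidence event, giving the claimed convergence in probability.

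The step requiring the most care is the treatment of ties in the argmax. For continuous demand distributions $|A| = 1$ almost surely and one could simply invoke uniformity of the argmax on $\{1, \dots, n\}$; but for atomic distributions such as the Bernoulli demands used elsewhere in the paper, $A$ can be large (even equal to $\{1, \dots, n\}$ when every $X_i = 0$), so the counting-plus-expectation argument above is what makes the bound work uniformly and lets $s = o(n)$ suffice without any concentration hypothesis on $S_s$ itself.
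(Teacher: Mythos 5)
Your proof is correct and follows essentially the same route as the paper's: both bound $\Prob{S_s^{(c)} > S_s}$ by a cyclic-shift counting argument that exploits the shift-invariance of the i.i.d.\ demand sequence, obtaining a bound of order $s/n$ that vanishes when the window length is $o(n)$. Your write-up is in fact somewhat more careful than the paper's version, since the argmax-set/indicator-expectation formulation handles ties cleanly (the paper asserts that exactly $d-1$ cyclic permutations of a bad sequence remain bad, which is only an upper bound when the maximum is attained at several windows) and you explicitly dispose of the $S_s = 0$ event so that the ratio is well defined, a point the paper's proof passes over silently.
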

\begin{proof}
  It is easy to see $S_s^{(c)} \geq S_s$.
  Let $D = S_s^{(c)} - S_s$ and $S$ be the set of all sequence of spacings for which $D > 0$.
  For every $\bm{x} \in S$, $d - 1$ of its cyclic permutations (see the Proof of Lemma~\ref{eq:Pr_Sc_leq_scaled_Pr_S} for the definition of a cyclic permutation) also lie in $S$ while the remaining $n - d$ of them lie in $S^c$ (complement of $S$).
  Thus, for every $d$ points in $S$, there are at least $n - d$ points in $S^c$, and all the points in $S$ or $S^c$ have the same probability measure in both sets.
  This gives us the following upper bound $\Prob{D > 0} = \Prob{S} \leq d / n$, which $\to 0$ as $n \to \infty$. This implies $S_s^{(c)} / S_s \to 1$ in probability.
\end{proof}
In order to use the results known for the convergence of $S_s$ in probability or a.s. in addressing $S_s^{(c)}$, we need the following Lemma.

\begin{lemma}
  For $d = o(n)$, $S_s^{(c)} / S_s \to 1$ a.s. as $n \to \infty$.
\label{lm:S_circular_to_S_as}
\end{lemma}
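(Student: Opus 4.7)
The plan is to upgrade the convergence in probability from Lemma~\ref{lm:S_circular_convergence_in_prob} to almost-sure convergence by combining the tail bound $\Prob{S_s^{(c)} > S_s} \le d/n$ (extracted from the proof of the previous lemma) with a Borel--Cantelli argument along a thinned subsequence, and then to interpolate between subsequence points using the growth of $S_s$.

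First I would reduce to a one-sided statement. Because $S_s^{(c)} \ge S_s$ holds deterministically, we may write $S_s^{(c)}/S_s = 1 + D_n/S_s$ with $D_n = S_s^{(c)} - S_s \ge 0$, so it suffices to show $D_n/S_s \to 0$ almost surely. I would also record two background facts: (i) $S_s \to \infty$ a.s.\ as $n \to \infty$, since $S_s$ is the maximum of $n-s+1$ i.i.d.\ partial sums of $s$ non-degenerate non-negative variables; and (ii) any particular wrap-around window sum (which is what contributes to $D_n$ when it is positive) is of order $O(s)$ by the strong law for i.i.d.\ sums.

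Second I would pick a subsequence $n_k$ growing quickly enough that $\sum_k d(n_k)/n_k < \infty$; for the regime $d = o(n)$ of interest, $n_k = k^2$ suffices (and $n_k = 2^k$ works even for $d = \Theta(n)$, though that case is outside our scope). Along this subsequence the bound $\Prob{D_{n_k} > 0} \le d(n_k)/n_k$ is summable, so by the first Borel--Cantelli lemma the event $\{D_{n_k} > 0\}$ occurs only finitely often with probability one. Hence $S_s^{(c)}(n_k) = S_s(n_k)$ eventually, which already yields $S_s^{(c)}(n_k)/S_s(n_k) \to 1$ almost surely.

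Third I would extend from the subsequence to every $n$ by a sandwich argument. For $n_k \le n < n_{k+1}$, the non-circular scan satisfies $S_s(n_k) \le S_s(n) \le S_s(n_{k+1})$ by monotonicity in the sample size. The excess $D_n$ is bounded above by the largest wrap-around $s$-window sum, which by fact (ii) is $O(s)$ a.s. Combined with $S_s(n) \ge S_s(n_k) \to \infty$ a.s.\ and $s = o(S_s)$, this forces $D_n/S_s(n) \to 0$ a.s., completing the proof.

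The main obstacle will be the interpolation in the third step: $S_s^{(c)}(n)$ is not a monotone function of $n$, because increasing $n$ introduces new wrap-around couplings that did not previously exist, so one cannot simply bracket $S_s^{(c)}(n)$ between $S_s^{(c)}(n_k)$ and $S_s^{(c)}(n_{k+1})$. The cleanest route is to avoid needing monotonicity of the circular scan altogether by bounding $D_n$ directly via wrap-around window sums and using the strong law to control those, while the subsequence argument handles the zero-probability exceptional set. Arranging the subsequence so that both the Borel--Cantelli sum converges and the ratio $S_s(n_{k+1})/S_s(n_k) \to 1$ is the delicate point; choosing $n_k$ polynomial in $k$ achieves both under our standing assumption $d = o(n)$.
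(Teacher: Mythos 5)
A preliminary remark: the paper does not actually prove this lemma --- it explicitly skips the proof and defers to the almost-sure convergence argument for maximal $d$-spacings in \cite{LoadBalancing:AktasFS21} (Appendix E), noting that the two preceding lemmas suffice for its purposes. So your proposal must stand on its own, and as written it has genuine gaps. The first is the summability claim in your Borel--Cantelli step: $d = o(n)$ only gives $d(n_k)/n_k \to 0$, and with $n_k = k^2$ (or even $n_k = 2^k$) the series $\sum_k d(n_k)/n_k$ need not converge --- take $d(n) = n/\log(n)$. You can always force summability by letting $n_k$ grow fast enough, but then the interpolation intervals $[n_k, n_{k+1})$ become enormous and the companion requirement $S_s(n_{k+1})/S_s(n_k) \to 1$, which you yourself flag as the delicate point, is exactly what you can no longer arrange; the two halves of the plan pull against each other and you have not shown they can be reconciled.

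The second gap is that both background facts underpinning your interpolation fail. Fact (i), $S_s \to \infty$ a.s., is false for bounded demands with fixed $s$ --- e.g.\ the paper's scaled Bernoulli model, where $S_s \leq s\lambda$ deterministically (also the $n-s+1$ window sums overlap, so they are not i.i.d., though that is minor). Fact (ii) is not an application of the SLLN: a wrap-around window sum is a moving sum of the $s$ most recent variables, not an average over a growing initial segment; for fixed $s$ these sums do not converge, and for unbounded demands their running maximum over $n$ grows (for $\mathrm{Exp}$ demands it is of order $\log n$, the same order as $S_s$ itself), so no uniform-in-$n$ $O(s)$ bound is available --- controlling such moving maxima is precisely the scan-statistic problem at hand. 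There is a cheap deterministic repair for the wrap windows: by non-negativity every wrap window is at most the linear window ending at $\rho_n$ plus $\rho_1 + \dots + \rho_{s-1}$, hence $D_n \leq \rho_1 + \dots + \rho_{s-1}$ for all $n$. But even with this, your reduction still needs $S_s \to \infty$, which fails in the bounded-support case; there one must instead show $S_s(n)$ converges a.s.\ to $s$ times the essential supremum of $\rho_1$ (disjoint windows are independent, second Borel--Cantelli) and squeeze $0 \leq D_n \leq s \cdot \mathrm{ess\,sup}(\rho_1) - S_s(n) \to 0$. So a proof in roughly your spirit exists, but not by the steps you give; and the regime where $s = d$ grows with $n$ (the regime the lemma is stated for) needs further care that your sketch does not address.
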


We here skip the proof for Lemma~\ref{lm:S_circular_to_S_as} as the previously presented Lemma~\ref{eq:Pr_Sc_leq_scaled_Pr_S} and ~\ref{eq:Pr_Sc_leq_scaled_Pr_S} are sufficient for this paper.
We refer the interested reader to \cite{LoadBalancing:AktasFS21}[Appendix E] for the full exposition showing that maximal $d$-spacing on the unit circle converges to its counterpart on the unit line almost surely. As for the proofs presented above for Lemma~\ref{eq:Pr_Sc_leq_scaled_Pr_S} and ~\ref{eq:Pr_Sc_leq_scaled_Pr_S}, the proof given in \cite{LoadBalancing:AktasFS21} for almost sure convergence can be modified similarly to show that the circular scan statistic converges to its non-circular counterpart almost surely as $n \to \infty$.

\subsection{Proof of Lemma~\ref{lm:P_bounds_for_rgap}}
\label{subsec:proof_lm_P_bounds_for_rgap}
%
\textbf{Upper bound:}
The system cannot meet the maximal load requirement if a set of objects $O$ has a cumulative demand larger than its scaled span $m \cdot \mathrm{span}(O)$.
Lemma~\ref{lm:on_rgap} states that the span of every $i$ consecutive objects (with respect to their indices) is at most $i + 2r$. This implies that it is necessary for the cumulative demand for any $i$ consecutive objects to be less than $i + 2r$ in order for the system to possibly meet the maximal load requirement.
This necessary condition shows that the right-hand side 
of the inequality in \eqref{eq:P_bounds_for_rgap} is an upper bound on $\mathcal{P}$.

\vspace{1ex}
\noindent
\textbf{Lower bound:}
Let $x$ be an integer in $[1, n]$.
Consider the spiky load scenario starting at object $o_x$ where demand $\rho_i$ is $m \cdot d$ when $i = x + (r + 1)j$ for $j=0, 1, \dots, \floor{n / (r + 1)}$, and $0$ otherwise.
In this case, a demand of magnitude $m \cdot d$ for each spiky object $o_i$ can be supplied by using up the capacity in all its $d$ service choices. This is because, by the $r$-gap design property, all other objects that overlap with a spiky object (in service choices) have zero demand.
System can meet the maximal load requirement while serving the spiky load regardless of the value for $x$.
Given that system's service capacity region is convex (see Sec.~\ref{sec:perf_metric}), the system can serve all convex combinations of demand vectors that correspond to any set of spiky load scenarios.
This can be expressed as follows: the system can meet the maximal load requirement as long as the cumulative demand on every $r + 1$ consecutive objects is at most $d$.
This implies that the left-hand side of the inequality in \eqref{eq:P_bounds_for_rgap} is a lower bound on the actual value of $\mathcal{P}$.

\subsection{Proof of Lemma~\ref{lm:P_bounds_for_cyclic}}
\label{subsec:proof_lm_P_bounds_for_cyclic}
Lower bound come from substituting $r = d - 1$, which is the lowest possible value for $r$.
Upper bound comes from 
(1) observing in cyclic design that the span of every $s$ consecutive objects is at most $s + d - 1$,
(2) substituting these values in the upper bound given in Lemma~\ref{lm:P_bounds_for_rgap}.

\subsection{Proof of Theorem~\ref{thm:P_asymptotic_bounds_for_cyclic}}
\label{subsec:proof_thm_P_asymptotic_bounds_for_cyclic}
Recall the definition of \emph{scan statistic} $S_s$ from Def.~\ref{def:scan_statistic}.
When the object demands $\rho_i$ are non-zero (positive), we can use Theorem 2, \cite{PoissonApproxForScanStat:DemboK92} to conclude that
\[
    \Prob{S_s \leq x} = \exp\left((n - s + 1)Q_s(x)\right)
\]
in the limit $n \to \infty$.
Recall that, as shown in Appendix~\ref{subsec:circular_scan}, circular scan statistic $S_s^{(c)}$ converges to its non-circular counterpart $S_s$ almost surely as $n \to \infty$.
This implies that the asymptotic expression given above for the distribution of $S_s$ is also valid for the distribution of its circular counterpart $S_s^{(c)}$.
Substituting this expression of $S_s^{(c)}$ in \eqref{eq:P_bounds_for_cyclic}, we obtain the asymptotic bounds for $\mathcal{P}$ as given in \eqref{eq:P_asymptotic_bounds_for_cyclic}.

\subsection{Proof of Corollary~\ref{cor:P_asymptotic_bounds_for_cyclic_simpler}}
\label{subsec:proof_cor_P_asymptotic_bounds_for_cyclic_simpler}
The lower bound directly follows from the lower bound given in \eqref{eq:P_asymptotic_bounds_for_cyclic}.
We find the upper bound as follows. 
Among the upper bounds over which we take the minimum in \eqref{eq:P_asymptotic_bounds_for_cyclic}, let us consider only the one for $s = d$. This gives us
\[
    \mathcal{P} \leq \exp\left(-w_{n, d} \; Q_d(m(2d - 1))\right).
\]
Given that $Q_d(x)$ is non-increasing in $x$, we have
\[
    Q_d(m(2d - 1)) \geq Q_d(2md).
\]
This implies the following for the upper bound given in \eqref{eq:P_asymptotic_bounds_for_cyclic}
\[
    \mathcal{P} \leq \exp\left(-w_{n, d} \; Q_d(m(2d - 1))\right) \leq \exp\left(-w_{n, d} \; Q_d(2md)\right).
\]

\subsection{Proof of Corollary~\ref{cor:P_asymptotic_bounds_for_cyclic_insightful}}
\label{subsec:proof_cor_P_asymptotic_bounds_for_cyclic_insightful}
In the proof of Theorem~\ref{thm:P_bounds_for_clustering}, we showed the following.
Suppose that the objects demands $\rho_i$ are sub-gaussian in the sense that there exists constants $c, C > 0$ such that
\begin{equation*}
    \exp(c t^2) \leq \phi_{\rho_i}(t) \leq \exp(C t^2), \qquad \forall t > 0.
\end{equation*}
Then, there exists constants $\alpha, \beta, \gamma > 0$ such that the tail distribution $Q_d$ satisfies
\begin{equation*}
  \exp(-\alpha x^2 / d) \leq Q_d(x + d\mu) \leq \gamma \exp(-\beta x^2 / d), \quad \forall x \geq 0.
\end{equation*}

The lower and upper bound given for $\mathcal{P}$ in \eqref{eq:P_asymptotic_bounds_for_cyclic_simpler} are expressed in terms of $Q_d(md)$ and $Q_d(2md)$.
Using the bounds given above for $Q_d$, we find the following upper bound for $Q_d(md)$ and lower bound for $Q_d(2md)$
\begin{equation*}
\begin{split}
  Q_d(md) &\leq \gamma \exp(-d \cdot \beta(m - \mu)^2) \\
  Q_d(2md) &\geq \exp(-d \cdot \alpha(2m - \mu)^2).
\end{split}
\end{equation*}
Substituting these bounds for $Q_d(md)$ and $Q_d(2md)$ respectively in \eqref{eq:P_asymptotic_bounds_for_cyclic_simpler} gives us the bounds in \eqref{eq:P_asymptotic_bounds_for_cyclic_insightful}.


\subsection{Proof of Corollary~\ref{cor:P_for_cyclic_as_d_to_infty}}
\label{subsec:proof_cor_P_for_cyclic_as_d_to_infty}
Let us suppose that $d = c\log(n)^\alpha$ for constants $c, \alpha > 0$.
Both the lower and upper bound in \eqref{eq:P_asymptotic_bounds_for_cyclic_insightful} go to $1$ as $n \to \infty$ if $\alpha \geq 1$, and the bounds go to $0$ if $0 < \alpha < 1$.

\subsection{Proof of Lemma~\ref{lm:P_bounds_for_block_design}}
\label{subsec:proof_lm_P_bounds_for_block_design}
Recall from Sec.~\ref{subsec:balanced_d_choice} that, in a storage allocation with block design, every pair of objects overlaps at \emph{exactly} one node in their service choices. We here refer to this fact as \textbf{F}.

\noindent
\textbf{Upper bound:}\space
Let us consider a set of $d$ objects, which we denote as  $O$, that are arbitrarily chosen from all objects. 
As discussed previously, the total capacity to jointly serve a set of objects grows with their span. 
Given the fact \textbf{F}, all objects in $O$ overlap at the same node in the best case. This will give us the maximum possible span as $\mathrm{span}(O) = d(d-1)$.
Recall also that system cannot meet the maximal load requirement if a set of objects $O$ has a cumulative demand larger than its scaled span $m \cdot \mathrm{span}(O)$. These two observations give us the following necessary condition: cumulative demand for any set of $d$ objects must be less than $m \cdot d(d-1)$ to meet the maximal load requirement.
Instead of considering all subsets of $d$ objects, we can just consider the subsets of $d$ consecutive objects with respect to their indices, i.e., $\{o_i, \dots, o_{i + d - 1}\}$ where indices are implicitly defined as $i \mod n$. This makes it possible to state the necessary condition in terms of the circular scan statistic (see Def.~\ref{def:scan_statistic}) as $S_s^{(c)} \leq m \cdot (d^2 - d)$
This gives us the upper bound in \eqref{eq:P_bounds_for_block_design}.

\noindent
\textbf{Lower bound:}\space
Similar to the proof for the upper bound, let us consider a set $O$ of $d$ arbitrarily chosen objects.
Given the fact \textbf{F}, objects in $O$ overlap at different nodes in the worst case for $\mathrm{span}(O)$. This will give us the minimum possible span as $\mathrm{span}(O) = d(d-1)/2$.

We here consider the spiky load scenario discussed in the proof of Lemma~\ref{lm:P_bounds_for_rgap}; let $x$ be an integer in $[0, n]$, and the offered load for $o_i$ is $\rho$ if $i = x + d j$ for some $j = 0, 1, \dots, \floor{n/d}$ and $0$ otherwise.
Let us refer to objects with spiky load as ``a spiky object''.
As discussed in the previous paragraph, the worst-case sharing for jointly serving the objects is when a spiky object has to share $d - 1$ of its service choices with other spiky objects.
In the worst case, system can meet the maximal load requirement if $\rho \leq m \cdot (1 + (d - 1)/2)$.
Given that the system's service capacity region is convex (see Sec.~\ref{sec:perf_metric}), the system can serve all convex combinations of any set of spiky load scenarios.
This can be expressed as follows: the system can meet the maximal load requirement as long as the cumulative demand on every $d$ consecutive objects is at most $m \cdot (1 + (d - 1)/2)$.
This gives us the lower bound for $\mathcal{P}$ as stated in \eqref{eq:P_bounds_for_block_design}.

\subsection{Proof of Theorem~\ref{thm:P_asymptotic_bounds_for_block}}
\label{subsec:proof_thm_P_asymptotic_bounds_for_block}
The proof follows the same sequence of arguments used in the proof for Theorem~\ref{thm:P_asymptotic_bounds_for_cyclic}, which is presented in Appendix~\ref{subsec:proof_thm_P_asymptotic_bounds_for_cyclic}.

\subsection{Proof of Corollary~\ref{cor:P_asymptotic_bounds_for_block_insightful}}
\label{subsec:proof_cor_P_asymptotic_bounds_for_block_insightful}
In the proof of Theorem~\ref{thm:P_bounds_for_clustering}, we showed the following.
Suppose that the objects demands $\rho_i$ are sub-gaussian in the sense that there exists constants $c, C > 0$ such that
\begin{equation*}
    \exp(c t^2) \leq \phi_{\rho_i}(t) \leq \exp(C t^2), \qquad \forall t > 0.
\end{equation*}
Then, there exists constants $\alpha, \beta, \gamma > 0$ such that the tail distribution $Q_d$ satisfies
\begin{equation*}
  \exp(-\alpha x^2 / d) \leq Q_d(x + d\mu) \leq \gamma \exp(-\beta x^2 / d), \quad \forall x \geq 0.
\end{equation*}

The lower and upper bound given for $\mathcal{P}$ in \eqref{eq:P_asymptotic_bounds_for_block} are expressed in terms of $Q_d(md/2)$ and $Q_d(m(d^2 - d))$.
Using the bounds given above for $Q_d$, we find the following upper bound for $Q_d(md/2)$ and lower bound for $Q_d(m(d^2 - d))$
\begin{equation*}
\begin{split}
  Q_d(md/2) &\leq \gamma \exp(-d \cdot \beta(m/2 - \mu)^2), \\
  Q_d(d(m(d - 1) - \mu) &\geq \exp(-d \cdot \alpha(m(d - 1) - \mu)^2).
\end{split}
\end{equation*}
Substituting these bounds for $Q_d(md/2)$ and $Q_d(m(d^2 - d))$ respectively in \eqref{eq:P_asymptotic_bounds_for_block} gives us the bounds in \eqref{eq:P_asymptotic_bounds_for_block_insightful}.

\subsection{Proof of Lemma~\ref{lm:P_upper_bound_for_random}}
\label{subsec:proof_lm_P_upper_bound_for_random}
As discussed many times in the paper so far, given a set of $u$ objects $O$, it is necessary for their cumulative demand to be at most equal to the cumulative capacity available to jointly serve them. 
With random design, service choices (nodes) for each object are selected at random without replacement. Hence, the span of $O$ is given by the occupancy metric for random allocation with $d$-complexes $N_{n, d, u}$ as defined in Def.~\ref{def:occupany_of_random_allocation_w_complexes}.
We then have the probability of this necessary condition given as
\begin{equation}
    \Prob{\rho_1 + \dots + \rho_u \leq m \cdot N_{n, d, u}} = F_u\left(m \cdot N_{n, d, u}\right).
\label{eq:Pr_cum_demand_for_u_objs_leq_cum_cap}
\end{equation}

Let us now partition the objects into a collection of $v$ subsets where the $i$th set within the collection contains $u_i$ objects for $i = 1, \dots, v$, and $u_1 + u_2 + \dots + u_v = n$.
The necessary condition given above must hold for each of the subsets that constitute the partition.
Given that the service choices for the objects are chosen independently, probability that the necessary condition will hold jointly for all the partition subsets is given by the product of the probabilities for individual subsets.
Probability of the joint necessary condition is given by
\[
    \prod_{i = 1}^{v} \E_{N_{n, d, u_i}}\left[F_{u_i}(m \cdot N_{n, d, u_i})\right]
\]
which serves as the upper bound given in \eqref{eq:P_upper_bound_for_random}.

\subsection{Proof of Corollary~\ref{cor:P_upper_bound_for_random_w_even_partitioning}}
\label{subsec:proof_cor_P_upper_bound_for_random_w_even_partitioning}
We obtain \eqref{eq:P_upper_bound_for_random_w_even_partitioning} by setting $u_i = n / u$ for all $i$ in \eqref{eq:P_upper_bound_for_random}.

\subsection{Proof of Lemma~\ref{lm:P_lower_bound_for_constrained_random}}
\label{subsec:proof_lm_P_lower_bound_for_constrained_random}
%
Let $x$ be an integer in $[1, n]$.
Consider the spiky load scenario starting at object $o_x$ where demand $\rho_i$ equals $\rho$ when $i = x + dj$ for $j=0, 1, \dots, \floor{n / d}$, and $0$ otherwise.
In this case, the spiky demand can be served as long as $\rho \leq md / v_{\max}$. This is because, the constraint defined on the number of overlapping $d$-hop siblings dictates that each spiky object can overlap with at most $v_{\max}$ other spiky objects at its service choices.
Note also that all other non-spiky objects that overlap with a spiky object (in service choices) have zero demand.

System can meet the maximal load requirement while serving the spiky load regardless of the value for $x$.
We can then follow the same steps outlined in the proof of the lower bound given in Lemma~\ref{lm:P_bounds_for_rgap} and show the lower bound in \eqref{eq:P_lower_bound_for_constrained_random}.

\end{document}